\newcommand{\NE}{\textsc{ne}\xspace}
\newcommand{\PoA}{\text{poA}\xspace}
\newcommand{\remove}[1]{}
\def \Constant {73}
\def \K {K}
\def \clic {l }
\author{C. \`Alvarez \and A. Messegu\'e}
\institute{ALBCOM Research Group, Computer Science Department, UPC, Barcelona\\
\email{\{alvarez,messegue\}@cs.upc.edu}}
\title{On the Constant Price of Anarchy Conjecture}
\begin{document}

\maketitle

\begin{abstract}

We study Nash equilibria and the price of anarchy in the classic model of Network Creation Games introduced by Fabrikant et al. In this model every agent (node) buys links at a prefixed price $\alpha > 0$ in order to get connected to the network formed by all the $n$ agents. In this setting, the reformulated tree conjecture states that for $\alpha > n$, every Nash equilibrium network is a tree. Moreover, Demaine et al. conjectured that the price of anarchy for this model is constant. Since it was shown that the price of anarchy for trees is constant, if the tree conjecture were true, then the price of anarchy would be constant for $\alpha > n$. 

Up to now it has been proved that the \PoA is constant $(i)$ in the \emph{lower range}, for $\alpha = O(n^{1-\delta})$ with $\delta \geq \frac{1}{\log n}$ and $(ii)$ in the \emph{upper range}, for $\alpha > 4n-13$. In contrast, the best upper bound known for the price of anarchy for the remaining range is $2^{O(\sqrt{\log n})}$.

In this paper we give new insights into the structure of the Nash equilibria for $\alpha > n$ and we enlarge the range of the parameter $\alpha$ for which the price of anarchy is constant. Specifically, we prove that the price of anarchy is constant for $\alpha > n(1+\epsilon)$ by showing that every equilibrium of diameter greater than some prefixed constant is a tree. 
 \end{abstract}

\section{Introduction}

This article focuses its attention on the \emph{sum classic network creation game} introduced by Fabrikant et al. in \cite{Fe:03}. This strategic game models Internet-like networks without central coordination. In this model the distinct agents, who can be thought as nodes in a graph, establish links of constant price to the other agents in order to be connected in the resulting network. We analyze the structure of the resulting equilibrium networks as well as their performance under the price of anarchy. Hence, our main elements of interest are \emph{Nash equilibria} (\NE), configurations where every agent is not interested in deviating his current strategy, and the \emph{Price of Anarchy} (\PoA), a measure of how the efficiency of the system degrades due to selfish behaviour of its agents.

\vskip 10pt

  \textbf{Related work: historical overview}.  Since the appearance of the seminal model from Fabrikant et al. in \cite{Fe:03} many other models have emerged, see \cite{CorboParkes:05,BiloGP:15,Ehsani:2015,MMO-EC14,LeonardiSankowski:07} for some examples. Some of them try to generalise it and others incorporate new features that maybe are not directly related with it but, in some sense, are inspired from it. However, there are still some interesting theoretical open questions that are not completely resolved from this classic model. Mainly, there exist two big conjectures that have not been completely settled yet:

 \textit{The Tree conjecture.} It is intuitive to think that an expensive price per link $\alpha$ in relation with $n$, the number of nodes, implies that every \NE must have few edges. Since the cost of any node from a disconnected graph is infinity, then it is reasonable to conclude that equilibria for a very expensive price per link $\alpha$ gives connected graphs having the minimum number of edges, i.e,  \NE graphs must be trees. This trivially holds for $\alpha > n^2$. However, can we improve the bound $\alpha > n^2$ to a more general one still getting the same conclusion? The \emph{Tree conjecture} aims to establish a wider range for the parameter $\alpha$ for which this holds. 
 
 The history of the progress around this conjecture is quite interesting. The first version of the Tree conjecture appears in the seminal article from Fabrikant et al. in \cite{Fe:03}, where the authors conjecture that there exists a constant $A$ such that for every $\alpha > A$ every \NE graph is a tree. This version of the conjecture was disproved by Albers et al. in \cite{Albersetal:06}. However, the inequality $\alpha > A$ can be relaxed to obtain a reformulated conjecture. Starting with H. Lin, in \cite{Lin}, the author shows that for $\alpha \geq 10n^{3/2}$ every \NE graph is a tree.  In the subsequent years, the same result is shown for the improved intervals $\alpha > 12n \log n, \alpha > 273n, \alpha > 65n$ and $\alpha > 17n$ in \cite{Albersetal:06,Mihalakmostly,Mihalaktree,Alvarezetal}, respectively. The more recent improvement was shown for Bil\'o and Lenzner in \cite{Lenznertree} for the range $\alpha > 4n-13$. The reformulated version of the Tree conjecture that is believed to be true is for the range $\alpha > n$.  Since in \cite{Mihalaktree} a non-tree \NE is found for the range $\alpha = n-3$ then clearly the generalisation $\alpha > n$ cannot be lowered asymptotically to the range $\alpha > n(1-\epsilon)$ with $\epsilon >0$ any small enough positive constant. 
 
 \vskip 5pt
 
The main approach to prove the result in \cite{Mihalakmostly,Mihalaktree,Alvarezetal} is to find upper and lower bounds for the average degree of any biconnected (or $2-$edge-connected in \cite{Alvarezetal}) component $H$ (from the \NE network) in case that such component exist. The aim is to find an upper bound that is lower than the lower bound. 
This implies that $H$ cannot exist, i.e, $G$ must be a tree. For instance, in \cite{Mihalakmostly}, the authors show that, for $\alpha > n$, $deg(H) \geq 2+ \frac{1}{31}$ and $deg(H) \leq 2+\frac{8n}{\alpha- n}$ and later, in \cite{Mihalaktree}, the authors improve these results showing that $deg(H) \geq 2+\frac{1}{16}$ and $deg(H) \leq 2+\frac{4n}{\alpha - n}$. More recently, in \cite{Alvarezetal}, the lower bound is improved to $deg(H) \geq 2 +\frac{1}{4}$ whenever $\alpha > 6n$ and using the general inequality $deg(H) \leq 2+ \frac{4n}{\alpha-n}$ then the Tree conjecture is proved for the range $\alpha > 17n$. The main idea in these last publications to improve the lower bound is to rule out the existence of too many nodes from $H$ of (undirected) degree two in $H$. Regarding the upper bound, the authors from \cite{Mihalaktree} take a node $u$ minimising the sum of distances to the other nodes and consider a BFS tree $T$ rooted at $u$.  If $G$ is not a tree there exist extra edges not from $T$ but from $G$. A \emph{shopping vertex} is a node that has bought at least one of these extra edges. It is not hard to see that every shopping vertex have bought at most one extra edge, otherwise such vertex would have incentive to deviate selling all its bought extra edges and buying a link to $u$. Then the number of extra edges is at most the number of shopping vertices. To reach the main result they lower bound the distance (in $T$) between two shopping vertices. In this way, the greater the lower bound is, the less the number of shopping vertices is and thus the less the number of extra edges is i.e, the average degree is smaller. 

In contrast, a different technique is used  in \cite{Lenznertree}. The authors introduce, for any non-trivial biconnected component $H$ of a graph $G$,  the concepts of \emph{critical pair} and \emph{strong critical pair} and combine them with the concept of \emph{min cycles}. A critical pair and a strong critical pair are defined for a pair of nodes $u,v\in V(H)$ such that $d_G(u,v) \geq 2$, $u$ has at least one non-bridge link $(u,u')$, $v$ has bought at least two non-bridge links $(v,v_1),(v,v_2)$ and some more technical properties about shortest paths. As the name suggests, a strong critical pair is a critical pair with some extra requirements. They show that there are no strong critical pairs for any non-tree \NE with $\alpha > 2n-6$ and that, furthermore, any minimal cycle for $\alpha > 2n-6$ is directed. Combining some deviations concerning critical pairs and strong critical pairs they are able to reach a contradiction when $\alpha > 4n - 13$.


\vskip 5pt

\textit{The Constant \PoA conjecture.}  We call the \emph{Constant \PoA conjecture}, the conjecture stating that the Price of Anarchy is constant independently of the parameter $\alpha$. This conjecture was first introduced in \cite{Demaineetal:07} after showing that the \PoA is constant for the range $\alpha = O(n^{1-\delta})$ with $\delta \geq \frac{1}{\log n}$ and that for the range $\alpha \leq 12n\log n$ the \PoA is $2^{O(\sqrt{\log n})}$. On the other hand, since the \PoA for trees is at most $5$, proved in \cite{Fe:03}, taking into the account the last improved version of the Tree conjecture for the range $\alpha > 4n-13$ we conclude that the \PoA for the range $\alpha > 4n-13$ is constant, too. The following table summarises the best known upper bounds for the \PoA until now in a little more of detail. We include our contribution, which will be explained later.

 
   \begin{center}
  \resizebox{1\columnwidth}{!}{
  \begin{tabular}{c|c|c|c|c|c|c|c|c|c|c|c|c|c|c|c|c|c|c|c|}
  \multicolumn{2}{c}{\hspace{0.0cm}$\alpha = 0$ } & \multicolumn{2}{c}{\hspace{0.0cm}1}  & \multicolumn{2}{c}{\hspace{1.2cm}2}  & \multicolumn{2}{c}{\hspace{0.7cm}$\sqrt[3]{n/2}$}  & \multicolumn{2}{c}{\hspace{0.5cm}$\sqrt{n/2}$} & \multicolumn{2}{c}{\hspace{0.5cm}$O(n^{1-\delta})$} & \multicolumn{2}{c}{\hspace{1.2cm}$n(1+\epsilon)$}  & \multicolumn{2}{c}{\hspace{1.2cm}$4n-13$} & \multicolumn{2}{c}{\hspace{0.4cm}$12n \log n$}  &  \multicolumn{1}{c}{\hspace{0.85cm}$\infty$}\\

  &  \multicolumn{2}{c|}{}  &  \multicolumn{2}{c|}{}  & \multicolumn{2}{|c|}{}  & \multicolumn{2}{|c|}{} & \multicolumn{2}{|c|}{} & \multicolumn{2}{|c|}{}  & \multicolumn{2}{|c|}{} & \multicolumn{2}{|c|}{}   & \multicolumn{2}{|c|}{}    \\
\cline{2-19}
  
 $PoA$ & \multicolumn{2}{|c|}{1}  &  \multicolumn{2}{|c|}{$\leq \frac{4}{3}$ (\cite{Fe:03})}  & \multicolumn{2}{|c|}{$\leq 4$ (\cite{Demaineetal:07})}  & \multicolumn{2}{|c|}{$\leq 6$ (\cite{Demaineetal:07})} & \multicolumn{2}{|c|}{$\Theta(1)$ (\cite{Demaineetal:07})} & \multicolumn{2}{|c|}{$2^{O(\sqrt{\log n})}$ (\cite{Demaineetal:07})}  & \multicolumn{2}{|c|}{$\Theta(1)$ (Thm. \ref{corol:last}) }  & \multicolumn{2}{|c|}{$<5$ (\cite{Lenznertree})} & \multicolumn{2}{|c|}{$1.5$ (\cite{Albersetal:06}) }     \\
  \cline{2-19}
  \end{tabular}
  }
  
\vskip 5pt

  {\small  \textbf{Table 1.} Summary of the best known bounds for the $PoA$ for the sum classic game.}
 \end{center}

\vskip 5pt

 We now explain briefly the main techniques used to prove constant \PoA for a high price per link $\alpha$. Since the \PoA for trees is at most $5$ then all the results enlarging the range for the parameter $\alpha$ for which every \NE graph is a tree imply as a direct consequence that the \PoA is at most $5$ for the same range of the parameter $\alpha$. This is the main technique used in all the previous references: \cite{Lin,Albersetal:06,Mihalakmostly,Mihalaktree,Alvarezetal,Lenznertree}. However, if we examine the literature, in \cite{Lin,Albersetal:06,Mihalakmostly,Mihalaktree,Lenznertree}, after finding and proving that the Tree conjecture holds for some specific range, let us say $\alpha > f(n)$, where $f(n)$ is some function of $n$, the authors do not go further studying the \PoA for a range close  to the range $\alpha > f(n)$, i.e a range like $f(n) \geq \alpha > f'(n)$, being $f'(n)$ a suitable function of $n$ close to $f(n)$. This is not the case for \cite{Alvarezetal}, where after showing that the Tree conjecture holds for $\alpha > 17n$, we prove that the \PoA is constant for the range $\alpha >9n$.  The technique used to achieve this last result uses a consequence of Lemma 2 from \cite{Demaineetal:07}, which is that the \PoA for any \NE graph $G$ is upper bounded by the diameter of $G$ plus one unit together with the fact that for $\alpha > 4n$, $diam(G) \leq diam(H)+206$, proved in the same article. In this way, the main idea to prove constant \PoA for the range $\alpha > 9n$ is to show that any $2-$edge-connected component of any non-tree equilibria has constant diameter. This conclusion is achieved combining the general upper bound $deg(H) \leq 2+ \frac{4n}{\alpha-n}$ with the fact that for $\alpha > 9n$ any non-trivial $2-$edge-connected component $H$ of diameter greater or equal than $126$ from a non-tree \NE graph satisfies $deg(H) \geq 2 + \frac{1}{2}$, proved in the same article. 
 As we can observe, this reasoning is a little more involved than the simple statement of the Tree conjecture and as we will see later, our contribution can be thought as a generalisation of this idea.

\vskip 10pt

\textbf{Our work.} 

\textit{The results.} Let $\epsilon > 0$ be any prefixed positive small constant. We show that the \PoA is constant for $\alpha > n(1+\epsilon)$. Our contribution constitutes a generalisation of some of these previous results. The main result consists in showing improved upper and lower bounds for the term $deg^+(H)$, the average directed degree in $H$. Notice that $deg^+(H) = \frac{1}{2}deg(H)$ so we can work with $deg^+(H)$ in the same way as we did with $deg(H)$:

(a) \emph{The lower bound:} We prove that $deg^+(H) \geq 1 + \frac{1}{221}$ for $diam(H) \geq 37$. We would like to notice that this bound works for any $\alpha$.

(b) \emph{The upper bound:} We prove that the term $deg^+(H)$ can be upper bounded by any quantity the closer we need to one from the right, provided that the diameter of $H$ is larger than some non-trivial quantity which is constant when $\alpha/(\alpha - n )= O(1)$. More precisely, we show that there exists a constant $R'$ such that for any positive value $\K$, there exists a non-trivial quantity $d(\K,\alpha)$ such that any \NE of diameter greater than $d(\K,\alpha)$ verifies $deg^+(H) \leq 1+ \frac{R'-1}{\K}$ and $d(\K,\alpha) = O\left(K^2\left(\frac{\alpha}{\alpha-n}\right)^2 \log \left(\frac{\alpha}{\alpha-n}\right)\right)$. Notice that when $\alpha > n(1+\epsilon)$ then $\alpha/(\alpha-n) = O(1)$. 

Therefore, combining (a) and (b) we deduce that for $\alpha > n(1+\epsilon)$ there exists a constant $D_{\epsilon}$ such that any non-trivial biconnected component $H$ from any non-tree \NE $G$ has diameter at most $D_{\epsilon}$. Then, by showing that for $\alpha > n$, $diam(G) \leq diam(H) + 154$ for every \NE graph $G$, which is clearly a generalisation of Proposition 7 from \cite{Alvarezetal}, we reach a stronger result. Specifically, that the next conjecture, which we have called the \emph{weaker tree conjecture}, holds for $\alpha > n(1+\epsilon)$:

\begin{conjecture}
 For $\alpha > n$ every \NE graph having a diameter greater than a prefixed constant is a tree. 
\end{conjecture}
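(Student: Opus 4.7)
The plan is to reduce the diameter bound on $G$ to one on a non-trivial biconnected component $H$, then derive a contradiction between the matching lower and upper bounds on $deg^+(H)$ coming from contributions (a) and (b); the regime where $\alpha$ is close to $n$ remains the main obstacle and will be addressed separately.

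\emph{Step 1: reduction to a biconnected component.} If $G$ is not a tree, pick any non-trivial biconnected component $H$. Using the announced generalisation of Proposition 7 from \cite{Alvarezetal}, we have $diam(G) \leq diam(H) + 154$ for any $\alpha > n$ and any \NE graph $G$. Hence it suffices to prove $diam(H) \leq D_0$ for some absolute constant $D_0$, since the conjecture would then follow with prefixed constant $D_0 + 154$.

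\emph{Step 2: degree-counting contradiction.} By (a), any such $H$ with $diam(H) \geq 37$ satisfies $deg^+(H) \geq 1 + 1/221$, independently of $\alpha$. Fix $\K$ large enough that $(R'-1)/\K < 1/221$, for instance $\K > 221(R'-1)$. Then (b) yields $deg^+(H) \leq 1 + (R'-1)/\K < 1 + 1/221$ whenever $diam(H) > d(\K,\alpha)$. If both $diam(H) \geq 37$ and $diam(H) > d(\K,\alpha)$ held, these two bounds would contradict each other, ruling out such an $H$ and completing the proof whenever $d(\K,\alpha)$ is bounded by an absolute constant.

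\emph{Main obstacle and proposed remedy.} The quantity $d(\K,\alpha)$ scales as $O\bigl(\K^2 (\alpha/(\alpha-n))^2 \log(\alpha/(\alpha-n))\bigr)$, which is not bounded as $\alpha \downarrow n$. Therefore Step 2 alone only gives a constant depending on $\epsilon$ in the range $\alpha > n(1+\epsilon)$, which is exactly what the main theorem achieves and what this conjecture aims to overcome. To close the remaining gap I would attempt to refine the deviation used in (b): instead of a single swap whose cost-to-benefit ratio carries the whole factor $(\alpha/(\alpha-n))^2 \log(\alpha/(\alpha-n))$, I would charge the deviations amortised over many vertices along a sufficiently long path in $H$, so that the $\alpha/(\alpha-n)$ factor is absorbed into the free parameter $\K$ rather than appearing multiplicatively in $d(\K,\alpha)$. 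The residual case, where even this amortisation fails (e.g.\ $\alpha - n = O(1)$ with $n$ large), would be handled by a separate structural argument: when $\alpha$ is that close to $n$, each bought edge is nearly paid for by its distance savings, which forces many short alternative paths through any biconnected component, and one can then directly bound $diam(H)$ using a pair of antipodal vertices and a deviation comparing the cost of deleting a long geodesic edge to the modest price $\alpha$. Decoupling $d(\K,\alpha)$ from the ratio $\alpha/(\alpha-n)$ in the upper bound is the step I expect to be genuinely difficult, because that ratio is essentially the cost-to-benefit rate of any single edge-swap and eliminating it requires a genuinely global charging scheme rather than a local deviation.
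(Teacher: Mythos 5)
You have correctly recognised the key fact about this statement: it is labelled a \emph{conjecture} in the paper, and the paper does not prove it in full generality. What the paper actually proves (Theorem~\ref{corol:last2}) is the restriction of this conjecture to the range $\alpha > n(1+\epsilon)$, and your Steps~1 and~2 reconstruct that argument faithfully. Step~1 is exactly Proposition~\ref{prop:diameter}, reducing $diam(G)$ to $diam(H)+154$. Step~2 is exactly the argument of Theorem~\ref{thm:big}: choose $\K$ so that $(R'-1)/\K < 1/221$ (the paper uses $\K = 222(R'-1)$, you propose $\K > 221(R'-1)$ — the same idea), and then the lower bound $deg^+(H) \geq 1 + 1/221$ from Proposition~\ref{prop:deg} contradicts the upper bound $deg^+(H) \leq 1 + (R'-1)/\K$ from Theorem~\ref{thm:upperbound} whenever $diam(H)$ exceeds $\max(d(\K,\alpha),37)$. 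You also correctly identify that this threshold is uniformly bounded only when $\alpha/(\alpha-n) = O(1)$, i.e.\ $\alpha > n(1+\epsilon)$.

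Where your proposal goes beyond the paper is in the ``Main obstacle and proposed remedy'' paragraph, and this is where a genuine gap remains. The amortised-charging idea and the separate structural argument for $\alpha - n = O(1)$ are programme sketches, not proofs: you give no concrete deviation, no cost accounting, and no lemma that would replace the $(\alpha/(\alpha-n))^2\log(\alpha/(\alpha-n))$ dependence with something bounded near $\alpha = n$. To your credit you explicitly flag this as the step you ``expect to be genuinely difficult'' rather than claiming a proof, so the proposal is honest; but as submitted it does not establish the conjecture for any $\alpha$ in $(n, n(1+\epsilon)]$, and the paper itself leaves this range open (and notes the non-tree equilibrium at $\alpha = n-3$ from~\cite{Mihalaktree} as a reason to wonder whether $\alpha > n$ is even the right threshold). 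In short: your reconstruction of the proved sub-case is correct and essentially identical to the paper's; the remaining content is a research plan, not a proof, and would need the missing charging lemma to be made precise before it could close the gap.
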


This result clearly implies the conclusion, that the price of anarchy is constant for the range $\alpha > n(1+\epsilon)$, because as previously explained the \PoA of any \NE graph is upper bounded by its diameter plus one unit. 

\vskip 5pt

\textit{The technique.} We introduce a novel technique that goes to the heart of the topology of $H$ when establishing the upper bound for the term $deg^+(H)$. We basically consider a reference node $u$, any node minimising the sum of distances to the other players and extract from $H$ the nodes $v$ that have directed degree in $H$ strictly greater than one. Let this subset of nodes be $V^{\geq 2}(H)$. Next, we associate to each node $v\in V^{\geq 2}(H)$ a subset of nodes from $H$ of degree at most one in $H$. We can think about these subsets as \emph{packages} for each node, in such a way that there are no common nodes in distinct packages. These packages are build naturally by considering the affected subset of nodes when in $v \in V^{\geq 2}(H)$ we consider the deviation that consists in deleting some subset of at least two edges and buying a link to $u$. On the other hand, we also prove that, for $\alpha > n$, the number of bought links per node is upper bounded by a constant $R'$. This result is obtained as an application of the basic Ramsey's theorem, and can be seen as an interesting topological property, clearly intuitive since we are dealing with a high cost price per link. Back to our main approach for upper bounding $deg^+(H)$, what we do next is to prove that we can find a large subset of nodes from $V^{\geq 2}(H)$ having packages of large cardinality.  Since the positive directed degree in $H$ is at most $R'$, if the sum of the cardinalities of the packages for the nodes from $V^{\geq 2}(H)$ can be made large enough in comparison to the quantity $R' |V^{\geq 2}(H)|$, then the average directed degree in $H$ can be made the closer we want to one. This is the basic general approach to the problem, which is clearly different from the latest publications related to the same problem, mainly \cite{Mihalaktree} and \cite{Lenznertree}, but reminiscent in some way to the reasoning from \cite{Alvarezetal} to show constant \PoA for $\alpha > 9n$.

\vskip 10pt

 \textbf{Structure of the document.} In section 2 we specify the preliminaries needed to understand the results developed later on the article. In section 3 we show a non-trivial lower bound for the average directed degree of any non-trivial biconnected component $H$ of a non-tree \NE $G$ independent of the parameter $\alpha$. In section 4 we show a non-trivial upper bound for the average degree of any non-trivial biconnected component $H$ of a non-tree equilibrium $G$ for the range $\alpha > n$. In section 5 we combine all the results, reaching the conclusion that the price of anarchy is constant for the range $\alpha > n(1+\epsilon)$. Finally, section 6 is devoted to some reflections concerning the general problem of upper bounding the price of anarchy for the range $\alpha > n$.

\section{Preliminaries}

 \textbf{The model.} A \emph{network creation game} is defined by a set of players $V = \left\{1,2,....,n \right\}$ and a positive parameter $\alpha$. Each player $u$ represents a node of an undirected graph and $\alpha$ the cost per link. The \emph{strategy} of a player $u \in V$ is denoted by $s_u$ and is a subset $s_u \subseteq V \setminus \left\{u \right\}$ which represents the set of nodes to which player $u$ wants to be connected. The strategies of all players define the \emph{strategy vector} $s=(s_{u})_{u \in V}$. The \emph{communication network} associated to a strategy vector $s$ is then defined as the undirected graph $G_s = (V, \left\{uv \mid v \in s_u \lor u \in s_v \right\})$, which is the natural network formed by the choices of the players. For the sake of convenience $G_s$ can be understood as directed or undirected at the same time. On the one hand, we consider the directed version when we are interested in the strategies of the players defining the communication graph. On the other hand, we focus on the undirected version when we want to study the properties of the topology of the communication graph. The cost associated to a player $u \in V$ is $c_u(s) =  \alpha |s_u| + D_{G_s}(u)$ where $D_G(u) = \sum_{v \neq u} d_{G}(u,v)$ is the sum of the distances from the player $u$ to all the other players in $G$. Thus, the social cost $c(s)$ of the strategy vector $s$ is defined by the sum of the individual costs, i.e. $c(s)= \sum_{u \in V}{c_u(s)}$. A Nash Equilibrium (\NE) is a strategy vector $s$ such that  for every player $u$ and every strategy vector $s'$ differing from $s$ in only the $u$ component $s_u \neq s_u'$, $c_u(s) \leq c_u(s')$. In a \NE $s$ no player has incentive to deviate individually his strategy since the cost difference $c_u(s')-c_u(s) \geq 0$. Finally, let $\mathscr{E}$ be the set of \NE. The \PoA is the ratio $PoA= \max_{s \in \mathscr{E}} c(s)/\min_{s}c(s)$.

  \textbf{Graphs.} In a digraph $G$ the edges are considered to have an orientation and $(u,v)$ denotes an edge from $u$ to $v$. In contrast, for an undirected graph $G$, the edge from $u$ to $v$ is the same as the edge from $v$ to $u$ and it is denoted as $uv$.
Given a digraph $G = (V,E)$, a node $v \in V$  and $X \subseteq G$ a subgraph of $G$ let $deg_X^+(v) = | \left\{ u \in V(X) \mid (v, u) \in E\right\}| $, $deg_X^-(v) =  | \left\{ u \in V(X) \mid  (u,v) \in E\right\}|$ and $deg_X(v) = deg_X^+(v)+deg_X^-(v)$. Likewise, if $G = (V,E)$ is an undirected graph and $v \in V$ any node we define $deg_X(v) = |\left\{ u \in V(X) \mid  uv \in E \right\}|$. If $X = G$ then we drop the reference to $G$ and write $deg^+(v),deg^-(v), deg(v)$ instead of $deg_G^+(v),deg_G^-(v), deg_G(v)$. In particular, we define $V^{\geq 2}(X)$ the subset of nodes from $X$ having outdegree in $X$ at least two. Furthermore, $deg^+(X)$ is the average directed degree of $X$, that is: $\sum_{v \in V(X)}deg^+_X(v)/|V(X)|$.
 
 Now let $H \subseteq G$ be any subgraph from $G$. Then $A_{r,H}(u)$ is the set of nodes from $H$ that are at distance $r$ from $u$ and $B_{>r,H}(u)$ is the set of nodes from $H$ at distance $>r$ from $u$. 
 
 In a connected graph $G=(V,E)$ a  vertex is a \emph{cut vertex} if its removal increases the number of connected components of $G$. A graph is biconnected if it has no cut vertices. We say that $H \subseteq G$ is a \emph{biconnected component} of $G$  if $H$ is a maximal biconnected subgraph of $G$. In this way, for any $u \in V(H)$ we define $T(u)$ as the connected component containing $u$ and the subgraph induced by the vertices $(V(G)\setminus V(H)) \cup \left\{u \right\}$. The weight of a node $u \in V(H)$ is then defined as $|T(u)|$.

 A \emph{$2-$node} from $H$ is a node $v\in V(H)$ satisfying $deg_H^+(v) = deg_H^-(v)=1$. A path $\pi = u_0-u_1-...-u_k$ in $H$ is called \emph{2-path} if $deg_H^-(u_i) = deg_H^+(u_i) = 1$ for every $0 < i < k$, that is, iff every internal node of the path is a $2-$node. Notice that  whenever we consider a $2-$path $\pi = u_0-u_1-...-u_k$, either $u_i$ has bought exactly $(u_i,u_{i+1}) \in E(H)$ with $0 < i < k$, or $u_i$ has bought exactly $(u_i,u_{i-1})$ with $k > i > 0$. As a convention, we assume that in a $2-$path $\pi = u_0-u_1-...-u_k$ every $2-$node $u_i$ has bought exactly the link $(u_i,u_{i+1})$, with $0< i<k$. 
 
 Given $Z \subseteq V(H)$ we note by $d_Z(v_1,v_2)$ the distance between $v_1$ and $v_2$ in the subgraph induced by the nodes from the set $Z$.

 For subsets of nodes $X,Y \subseteq V(H)$ let $bridges(X,Y)$ be the set of edges $xy$ with $x\in X$, $y \in Y$. Whenever we write $xy \in bridges(X,Y)$ we always assume that $x \in X$ and $y \in Y$. For any $z\in V(H)$ we also define $bridges(X,Y;z)$ as the set of edges $xy \in bridges(X,Y)$ such that $x\neq z$. 

Finally, if $Z$ is a subgraph of $G$ and $V$ a subset of $V(G)$ then $Z(V)$ is the subset of nodes from $V$ that belong to $V(Z)$.

 \textbf{Notation.} For a given subset $Z$ from a set $X$, $Z^c$ stands for the complementary of $Z$ in $X$.

  \textbf{Other considerations.} Let $P(n)$ be a property that depends on a parameter $n$. We use the expression \emph{$P(n)$} holds for \emph{$n$ large enough} meaning that $P(n)$ holds for every $n$ with $n \geq n_0$, for some constant $n_0$.

\section{A lower bound for the average degree of $H$}

Consider $G$ a \NE graph with $H \subseteq G$ a non-trivial biconnected component of $G$. In this section we show a non-trivial constant lower bound above $1$ for the average directed degree of $H$ provided that the diameter of $H$ is large enough. Notice that we do not require any constraint about the parameter $\alpha$. To prove this we basically see that any path of nodes from $H$ having degree exactly $2$ in $H$ cannot be very long. 

\vskip 5pt
 
\begin{lemma}\label{lem:w}
Let $G$ be a non-tree \NE. Let $\pi = v_0-v_1-...-v_{k+1}$ be a path of consecutive nodes from $H$ such that $deg_H(v_i) = 2$ for every $i$ with $0 < i < k+1$. Then $k < 74$.
 \end{lemma}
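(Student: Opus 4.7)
The plan is to argue by contradiction: assume $k \geq 74$, and exhibit a deviation for some interior $2$-node that strictly lowers its cost, contradicting that the strategy profile is a Nash equilibrium. Since $H$ is biconnected, there is a path $P'$ from $v_0$ to $v_{k+1}$ inside $H$ that is internally disjoint from $\pi$; write $\ell \geq 1$ for its length. Let $v_i$ be the interior $2$-node at position $i = \lceil (k+1)/2 \rceil$. By the convention fixed in the Preliminaries, $v_i$ has bought exactly the link $(v_i, v_{i+1})$. I will consider the swap in which $v_i$ removes $(v_i, v_{i+1})$ and buys $(v_i, v_{k+1})$. Since this is a one-for-one swap, the edge cost of $v_i$ is unchanged, and it suffices to show that the sum of distances $D_{G_s}(v_i)$ strictly decreases.

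The change $\Delta D$ splits into three parts. For $v_j$ with $j > i$ (right side of $\pi$), the new edge followed by the right arc of $\pi$ gives distance $1 + (k+1-j)$ while the old distance was $j-i$, and a telescoping argument shows that these contributions sum to exactly $0$. For $v_j$ with $j < i$ (left side), the left arc of $\pi$ still gives the old distance $i-j$, but the new edge combined with $P'$ also yields the route $v_i \to v_{k+1} \to \cdots \to v_0 \to v_j$ of length $1+\ell+j$; for every $j$ with $1+\ell+j < i-j$, i.e.\ $j < (i-\ell-1)/2$, the new distance is strictly smaller, and summing these improvements gives a negative contribution of order $-(i-\ell)^2/4$. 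Finally, for any $w \notin V(\pi)$, the fact that every interior $2$-node has degree $2$ in $H$ forces any shortest $v_i$--$w$ path to exit $\pi$ at $v_0$ or $v_{k+1}$; the new edge can only shorten the $v_{k+1}$-route, so these contributions are non-positive.

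Putting the three parts together gives $\Delta D \leq 0$ unconditionally, and an explicit arithmetic check shows $\Delta D < 0$ whenever $k \geq 74$, which is the desired contradiction. The main obstacle is the regime in which $\ell$ is comparable to $k$: the left-side shortcut from the second part vanishes (no index $j$ satisfies $1+\ell+j < i-j$), so all of the strict improvement must come from the off-path contribution. In that case one uses that $P'$ must then contain many internal nodes near $v_{k+1}$, each contributing a term of order $-(k-i)$ to $\Delta D$; balancing the ``small $\ell$'' and ``large $\ell$'' regimes is precisely where the explicit constant $\Constant + 1 = 74$ emerges.
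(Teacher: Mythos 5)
Your proposal hinges on a single swap deviation at the middle node $v_i$, and you never account for the subtrees $T(v_j)$ hanging off the interior nodes of $\pi$. Each $v_j$ has degree $2$ in $H$, but this says nothing about its degree in $G$: $v_j$ may have arbitrarily many bridge edges leaving $H$, and $|T(v_j)|$ can be close to $n$. When $v_i$ deletes $(v_i,v_{i+1})$ and buys $(v_i,v_{k+1})$, its distance to every node of $T(v_{i+1})$ increases by $k-i$, which is about $k/2$. If $T(v_{i+1})$ (or a nearby $T(v_j)$ with $j>i$) is heavy, this positive contribution dominates everything else, and no choice of swap target repairs it; the claim that ``$\Delta D \leq 0$ unconditionally'' is therefore false. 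The same confusion between $deg_H$ and $deg_G$ breaks your off-path case: a shortest $v_i$--$w$ path for $w \in T(v_j)$ exits $\pi$ through a bridge at the interior node $v_j$, not at $v_0$ or $v_{k+1}$.

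The paper's proof is built precisely around this obstruction. Writing $M$ for the total weight of the middle subtrees $T(v_i)$, $2l-2<i<4l-2$, with $l=13$: the node $u=v_{3l-2}$ buys two links, to $u_1=v_{2l-2}$ and $u_2=v_{4l-2}$, which saves at least $(l-1)(n-M)$ and costs $2\alpha$; and $v_0$ buys one link to $u$, which saves at least $(l-1)M$ and costs $\alpha$. Adding the two cost differences makes $M$ cancel, giving $\Delta C_1+\Delta C_2 \leq 3\alpha-(l-1)n$, which is negative because $\alpha<4n$ (from \cite{Lenznertree}, since $G$ is not a tree) and $l=13$. Hence at least one of the two deviations is strictly improving. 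This pairing of a deviation that wins when the middle trees are light with one that wins when they are heavy is the missing idea in your write-up; it is also why the paper needs no reasoning about the alternative path $P'$ or its length $\ell$, which your argument leans on but cannot control.
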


\begin{proof}
Taking into the account the main result from \cite{Lenznertree}, if $G$ is a non-tree \NE we can assume that $\alpha < 4n$. 

Now let $l = 13$ and suppose the contrary, that $k \geq 6l-4$. In such conditions we can consider $u_1=v_{2l-2},u =v_{3l-2}$ and $u_2=v_{4l-2}$. 

Consider the deviation in $u$ that consists in buying two links to $u_1$ and $u_2$. Let $\Delta C_1$ be the corresponding cost difference associated to such deviation. We have that the node $u$ gets $l-1$ units closer at least for every node in $T(w)$ with $w \neq v_i$ and $2l-2 < i < 4l-2$. 

On the other hand, consider the deviation in $v_0$ that consists in buying a link to $u$. Let $\Delta C_2$ be the corresponding cost difference associated to such deviation. It is easy to see that when performing such deviation the node $v_0$ gets closer at least $(2l-2+1)-l=l-1$ distance units from every node inside $T(w)$ with $w =v_i$ with $2l-2 < i < 4l-2$. Therefore, adding the two cost differences:

$$\Delta C_1+\Delta C_2 \leq 3\alpha-(l-1)n < 12n-(l-1)n=0$$

Because, by hypothesis, $\alpha < 4n$ and $l = 13$. Now the conclusion follows easily.
\end{proof}







 Now, let $H^3$ be the weighted graph with multi-edges defined by $H$ in the following way: as the vertex set we pick exactly the nodes $v$ from $H$ verifying $deg_H(v) \geq 3$ and we define the set of multi-edges as follows: for every path $\pi =v_0- v_1-v_2-...-v_k-v_{k+1}$ in $H$ such that $deg_H(v_i)=2$ with $1 \leq i \leq k$ and $deg_H(v_0),deg_H(v_{k+1})>2$ we define an edge $e$ between $v_0$ and $v_{k+1}$ in $H^3$ with weight $w(e)=k$. 

\begin{corollary}\label{corol:w}
Let $G$ be a non-tree \NE with $H^3 \neq \emptyset$. Then $w(e) <74$ for any edge $e \in E(H^3)$.
\end{corollary}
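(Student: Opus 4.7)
The plan is to observe that Corollary \ref{corol:w} follows essentially immediately from Lemma \ref{lem:w} by unwinding the definition of $H^3$. Specifically, every edge $e$ in $H^3$ is defined via a path $\pi = v_0 - v_1 - v_2 - \ldots - v_k - v_{k+1}$ of $H$ whose internal vertices $v_1,\ldots,v_k$ all satisfy $deg_H(v_i) = 2$, with endpoints $v_0, v_{k+1}$ of degree at least $3$ in $H$, and by construction $w(e) = k$.

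First, I would pick an arbitrary edge $e \in E(H^3)$ connecting two vertices $v_0, v_{k+1} \in V(H^3)$ and recover from the definition of $H^3$ the associated path $\pi$ in $H$ together with its interior $2$-nodes $v_1, \ldots, v_k$. Second, I would observe that this path $\pi$ is precisely of the form required in the statement of Lemma \ref{lem:w}: the degree condition $deg_H(v_i) = 2$ holds for every index $i$ with $0 < i < k+1$, and no further hypothesis on $\alpha$ is needed because the lemma itself has no standing constraint on $\alpha$ for a non-tree \NE. Finally, invoking Lemma \ref{lem:w} I conclude $k < 74$, which is exactly $w(e) < 74$.

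There is no real obstacle here, since the graph $H^3$ was introduced precisely to package the $2$-paths of $H$ into a cleaner combinatorial object; the content of Corollary \ref{corol:w} is simply a restatement of Lemma \ref{lem:w} under this new terminology. The only thing to be careful about is aligning the indexing: the edge $e$ corresponds to the weight $k$ equal to the number of interior $2$-nodes, which matches the parameter $k$ appearing in the bound of Lemma \ref{lem:w}, so no off-by-one adjustment is needed. Hence a one- or two-sentence proof suffices.
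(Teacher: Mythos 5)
Your proposal is correct and takes exactly the implicit route the paper intends: the paper states Corollary~\ref{corol:w} without a written proof because, as you observe, each edge of $H^3$ is by definition a $2$-path in $H$ meeting the hypotheses of Lemma~\ref{lem:w}, and the weight $w(e)=k$ is precisely the parameter bounded in that lemma.
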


Now we can give a lower bound for the average degree of $H$:


\begin{proposition}\label{prop:deg} Let $G$ be a \NE graph with $H \subseteq G$ a non-trivial biconnected component of $G$. If $diam(H) \geq 37$ then:
$$deg^+(H) \geq 1 + 1/221$$
\end{proposition}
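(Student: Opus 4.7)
The plan is to reduce the statement to a small piece of arithmetic by compressing $H$ to the auxiliary weighted multigraph $H^3$ and then applying the weight bound from Corollary \ref{corol:w}. Set $a := |V^{=2}(H)|$, $b := |V(H^3)| = |V^{\geq 3}(H)|$ and $e := |E(H^3)|$. Since $H$ is biconnected it has no vertex of degree $\leq 1$, so $|V(H)| = a+b$. A maximal $2$-path with $k$ internal $2$-nodes contributes $k+1$ edges to $H$ and corresponds to a single weighted multi-edge of weight $k$ in $H^3$; these paths partition $E(H)$, so
\[
|E(H)| \;=\; \sum_{f \in E(H^3)}(w(f)+1) \;=\; a+e.
\]
Consequently $deg^+(H) = (a+e)/(a+b)$.

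Next I would extract two inequalities. Every $v \in V(H^3)$ retains its original $H$-degree when $H$ is compressed to $H^3$, so $deg_{H^3}(v) = deg_H(v) \geq 3$, and by handshake $2e \geq 3b$. Corollary \ref{corol:w} yields $w(f) \leq 73$ for every $f \in E(H^3)$, hence $a = \sum_f w(f) \leq 73 e$. The claim $deg^+(H) \geq 1 + 1/221$ is equivalent to $221 e - a \geq 222 b$, and
\[
221 e - a \;\geq\; 221 e - 73 e \;=\; 148 e \;\geq\; 148 \cdot \tfrac{3b}{2} \;=\; 222 b,
\]
which is exactly the inequality needed. The calculation is tight: equality is achieved precisely when every vertex of $H^3$ has degree $3$ and every multi-edge of $H^3$ carries the maximum allowed weight $73$.

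The one scenario in which the above accounting collapses is the degenerate case $b=0$, where $H$ is a simple cycle and $deg^+(H) = 1$. By Lemma \ref{lem:w}, applied to a long open arc of such a cycle, any cycle biconnected component has at most $75$ vertices, leaving only $H = C_{74}$ or $C_{75}$ as potential violators of the proposition. These two cases need to be excluded separately; the natural route is an antipodal-chord deviation whose distance saving is $\Theta(n^2)$ and hence easily exceeds the bound $\alpha < 4n$ inherited from \cite{Lenznertree} for any non-tree \NE, contradicting equilibrium. To my eye, this corner case is the main obstacle, since the rest of the proof is the short and essentially forced numerical calculation above.
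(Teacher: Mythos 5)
Your main argument is correct and is essentially the same as the paper's. Both compress $H$ to the auxiliary weighted multigraph $H^3$, invoke Corollary \ref{corol:w} for $w(f)\leq 73$, use the handshake bound $2|E(H^3)|\geq 3|V(H^3)|$ coming from $deg_{H^3}(v)\geq 3$, and run the same arithmetic; you do the bookkeeping directly in terms of $a,b,e$ (giving $221e-a\geq 148e\geq 222b$) while the paper packages it as a weighted average of $deg_H(u)$ over $V(H^3)$ and divides by two, but the two calculations agree. The point you raise about the degenerate case $b=0$ is real: when $H$ is a cycle $C_N$ the hypothesis only gives $N\geq 74$, and a longest path of distinct vertices in $C_N$ has $k=N-2$ internal $2$-nodes, so the \emph{statement} of Lemma \ref{lem:w} ($k<74$) only rules out $N\geq 76$, leaving $C_{74}$ and $C_{75}$ untouched. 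The paper's proof simply cites Lemma \ref{lem:w} at this point and glosses over this, so you have identified a genuine soft spot that the paper shares.

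Your proposed repair for $C_{74}$ and $C_{75}$ is not right as written, however. From a single antipodal chord each node of $G$ gets closer to the buyer by at most $\lfloor N/2\rfloor -1\leq 36$, so the buyer's distance saving is $O(n)$, not $\Theta(n^2)$; and if the tree weights $|T(v_i)|$ happen to be concentrated near the buyer, that saving can be far smaller than $\alpha$, so a single chord need not contradict equilibrium. The clean fix is to re-run the \emph{argument} inside Lemma \ref{lem:w} on a $51$-vertex arc $v_0,\dots,v_{50}$ of the cycle, which only needs $51$ distinct vertices and therefore exists in $C_{74}$ and $C_{75}$: with $l=13$, the deviation where $v_{37}$ buys edges to $v_{24}$ and $v_{50}$ saves at least $l-1$ on every node outside $T(v_{25}),\dots,T(v_{49})$, while the deviation where $v_0$ buys an edge to $v_{37}$ saves at least $l-1$ on every node inside them (the required inequalities such as $d(v_0,v_{49})\geq 25$ do hold in $C_{74}$ and $C_{75}$), so the combined cost change is at most $3\alpha-(l-1)n<0$ using $\alpha<4n$, and one of the two deviations is improving.
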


\begin{proof}

First,  suppose that $H^3$ is empty. Then $H$ is a cycle of length at least $2diam(H) \geq 2 \cdot 37 = 74$ and we reach a contradiction with Lemma \ref{lem:w}. Therefore we must assume that $H^3 \neq \emptyset$. In such case, applying Corollary \ref{corol:w} we get that $w(e) \leq \Constant$ for any $e \in E(H^3)$. In this way if we let $m = |V(H^3)|$ we have that: 
\begin{gather*}
deg(H) = \frac{\sum_{u \in V(H^3)}deg_H(u)+ 2 \sum_{e \in E(H^3)} w(e)}{\sum_{u \in V(H^3)}1+\sum_{e \in E(H^3)}w(e)}\\
=2+\frac{\sum_{u\in V(H^3)}\left(deg_H(u)-2\right)}{m+\sum_{e \in E(H^3)}w(e)} \geq 2+\frac{2|E(H^3)|-2m}{m+\Constant |E(H^3)|} = 2+2\frac{|E(H^3)|/m-1}{1+ \Constant |E(H^3)|/m}
\end{gather*}

 Since every node from $H^3$ has degree in $H$ at least three then $E(H^3) \geq \frac{3m}{2}$. Therefore $
deg(H) \geq  2+2\left(\frac{1}{\Constant}-\frac{1+\frac{1}{\Constant}}{1+\Constant |E(H^3)|/m} \right) \geq 2+2\frac{1}{\Constant}-2\frac{(\Constant +1)/\Constant}{1+\frac{3}{2} \Constant} = 2+\frac{2}{221}$.


\end{proof}

\section{An Improved Upper Bound for the Directed Degree in $H$}

Let $G$ be a \NE for $\alpha > n $ and $H \subseteq G$ a non-trivial biconnected component of $G$ of diameter $d_H$. Throughout all the subsequent subsections we assume that $G,H,\alpha,d_H$ are defined in this way. The main result of this section is to show that there exists a constant $R'$ such that for every positive constant $\K$ there exists a non-trivial quantity $d(\K, \alpha)$ such that for every non-tree equilibrium $G$, if $d_H \geq d(\K,\alpha)$, then  $H$ satisfies $deg^+(H) \leq 1+(R'-1)/\K$. Therefore, the main aim of this section is to give an improved upper bound for the average directed degree in $H$. Now, let us introduce some definitions:

A \emph{$2-$edge-covering of $H$} is a collection of subsets of edges $J = (J(v))_{v \in V^{\geq 2}(H)}$ such that for every $v \in V^{\geq 2}(H)$, $J(v)$ is a subset of at least two edges from $H$ bought by $v$. Given a $2-$edge-covering $J = (J(v))_{v \in V^{\geq 2}(H)}$ of $H$ and given $u \in V(H)$ then for every node $v \in V(H)$ the  \emph{$A$ set of $v$ with respect $u,J$}, noted as $A^u_J(v)$, is $v$ together with the set of nodes for which the distance to $u$ increases when performing the deviation that consists in selling the subset of edges $J(v)$ and buying a link to $u$. Furthermore, if $J(v) = \left\{ e_1(v),...,e_{k_v}(v) \right\}$ with $k_v \geq 2$, we denote $e_i(v) = (v,v_i)$. Then, define $A^u_{i,J}(v)$ be the subset of nodes $z$ from $A^u_J(v)$ for which there exists a shortest path from $z$ to $u$ using edge $e_i(v)$ and such that $d_G(v_i,u) = d_G(v,u)+1$. Notice that with these definitions we have that $A^u_J(v)= \left\{ v\right\} \cup \left\{ x \mid \text{all shortest paths between $u$ and $x$ use an edge from }J(v) \right\}$,  $A^u_{i,J}(v)=\left\{ x \in A^u_J(v)\mid \text{there exists a shortest path between $u$ and $x$ that uses $e_i(v)$} \right\}$. Furthermore, notice that $A^u_{i,J}(v)=\emptyset$ iff $d_G(u,v_i) =d_G(u,v)-1$ or $d_G(u,v_i)=d_G(u,v)$ and that $A^u_J(v) = \left\{v \right\} \cup  \left( \cup_{i=1}^{k_v}A^u_{i,J}(v) \right)$. The sets $A^u_J(v)$ have the next important property. 

\begin{lemma}
\label{lem:dis-included0}
For any two distinct nodes $v,w \in V^{\geq 2}(H)$, $A_J^u(v)$ and $A_J^u(w)$ are either disjoint or one included inside the other.
\end{lemma}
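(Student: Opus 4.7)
The plan is to prove the slightly stronger statement that if $d_G(u,v) \leq d_G(u,w)$ and $A^u_J(v) \cap A^u_J(w) \neq \emptyset$, then $A^u_J(w) \subseteq A^u_J(v)$. I would fix some $x \in A^u_J(v) \cap A^u_J(w)$ and first rule out $x = v$: that would force $v \in A^u_J(w)$, which by definition of $A^u_J$ makes every shortest $u$-$v$ path pass through $w$, yielding $d_G(u,v) \geq d_G(u,w)+1$ and contradicting the WLOG assumption together with $v \neq w$.

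The heart of the argument is the claim that $w \in A^u_J(v)$, i.e.\ every shortest $u$-$w$ path uses some edge of $J(v)$. If $x = w$, this is immediate. Otherwise $x \neq v,w$ and every shortest $u$-$x$ path passes through both $v$ and $w$; since along any shortest path the $k$-th vertex has distance exactly $k$ from $u$, the positions of $v$ and $w$ must differ, forcing the strict inequality $d_G(u,v) < d_G(u,w)$. For any shortest $u$-$w$ path $Q$ and the $w$-$x$ suffix $R$ of a fixed shortest $u$-$x$ path, the concatenation $Q\cdot R$ is a walk from $u$ to $x$ of length $d_G(u,w) + d_G(w,x) = d_G(u,x)$. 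In an unweighted graph any walk realising the graph distance is automatically a simple shortest path (otherwise shortcutting a repeated vertex produces a strictly shorter walk), so $Q\cdot R$ is itself a shortest $u$-$x$ path and hence uses some edge of $J(v)$, which is incident to $v$. Every vertex of $R$ other than $w$ sits at distance $>d_G(u,w)>d_G(u,v)$ from $u$, so $v \notin V(R)$, and the offending $J(v)$-edge must lie on $Q$.

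With the claim in hand the lemma is immediate: given $y \in A^u_J(w)$, either $y = w$ and we are done by the claim, or every shortest $u$-$y$ path passes through $w$ and so decomposes as $Q'\cdot S$ with $Q'$ a shortest $u$-$w$ path; the claim forces $Q'$ (and thus the full $u$-$y$ path) to use an edge of $J(v)$, giving $y \in A^u_J(v)$.

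The main obstacle I expect is the central claim: proving that \emph{every} shortest $u$-$w$ path, not just the particular suffix extracted from some shortest $u$-$x$ path, must traverse an edge bought by $v$. The concatenation trick combined with the ``shortest walk is a simple shortest path'' observation does the work, by using the strict inequality $d_G(u,v) < d_G(u,w)$ to pin the $J(v)$-edge onto the $Q$ half of $Q\cdot R$.
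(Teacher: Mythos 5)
Your proof is correct and follows essentially the same route as the paper's: both reduce to showing that, under the WLOG assumption $d_G(u,v)\leq d_G(u,w)$, an intersection point forces $w\in A^u_J(v)$, via the observation that concatenating a shortest $u$-$w$ path with a shortest segment toward the intersection point yields a shortest path that must traverse a $J(v)$-edge, and that edge cannot lie on the far side of $w$; both then lift this to all of $A^u_J(w)$ by decomposing each shortest $u$-$y$ path at $w$. The only cosmetic difference is that you argue forward (every shortest $u$-$w$ path $Q$ uses $J(v)$) while the paper argues by contradiction (a $J(v)$-free shortest $w$-$u$ path would give a $J(v)$-free shortest $z$-$u$ path).
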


\begin{center}
\includegraphics[scale=0.65]{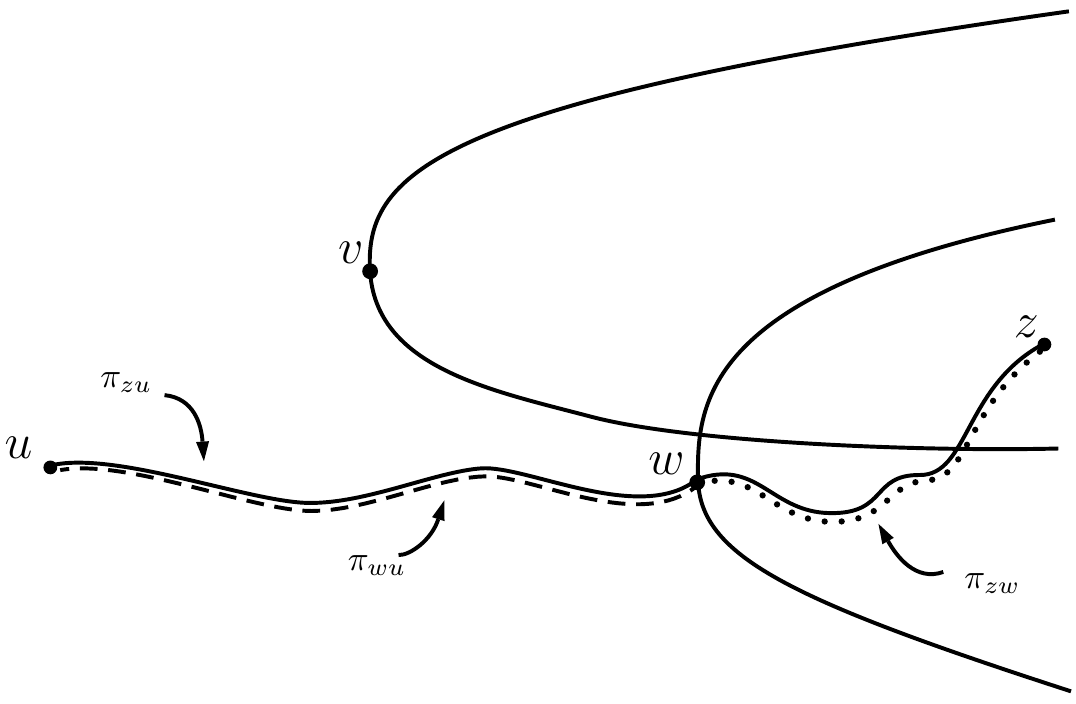}
\end{center}

\begin{proof}
Assume that $d_G(u,v) \leq d_G(u,w)$. Equivalently, we see that if $A_J^u(v) \cap A_J^u(w) \neq \emptyset$ then $A_J^u(w) \subseteq A_J^u(v)$. Indeed, suppose that $z \in A_J^u(v) \cap A_J^u(w)$ and let $e_i(v)$ for $i=1,...,k$ with $k\geq 2$ the edges that define $J(v)$.
Now let $\pi_{wu}$ be a shortest path from $w$ to $u$ not using neither of the edges $e_i(v)$. Take any shortest path $\pi_{zw}$ from $z$ to $w$ and let $\pi_{zu}$ the concatenation of $\pi_{zw}$ with $\pi_{wu}$. It is not hard to see that $\pi_{zu}$ is a shortest path, too. Then, we reach a contradiction with the fact that $z \in A_J^u(v)$ because $\pi_{zu}$ does not use neither of the edges $e_i(v)$ (notice that here we are using the fact that $d_G(u,v) \leq d_G(u,w)$). Therefore, every shortest path from $w$ to $u$ uses one of the edges $e_i(v)$. 

 Now, let $y \in A_J^u(w)$ and let $\pi_{yu}$ any shortest path from $y$ to $u$. By definition of the set $A_J^u(w)$, $\pi_{yu}$ goes through $w$. Then the subpath from $\pi_{yu}$ connecting $w$ and $u$ must go through one of the edges $e_i(v)$ because of our previous reasoning. Then, we have proved that every shortest path from $y$ to $u$ uses one of the edges $e_i(v)$, which is clearly the same as saying $y \in A_J^u(v)$.  
\end{proof}





For any two distinct nodes $v,w\in V^{\geq 2}(H)$ we define $v \vdash_{u,J} w$ iff $A_J^u(w) \subseteq A_J^u(v)$. Then define $H_{u,J}$ to be the digraph having $V^{\geq 2}(H)$ as the set of vertices and edges $(v,w) \in V(H)$ iff $v \vdash_{u,J} w$ holds and there is no other node $w' \in V^{\geq 2}(H)$ in any shortest path between $v$ and $w$ with $v \vdash_{u,J} w'$.  By Lemma \ref{lem:dis-included0} $H_{u,J}$ does not contain any cycle. Furthermore, every node $v \in V(H_{u,J})$ has indegree at most $1$. Therefore $H_{u,J}$ is a forest.

The next figure, in which $J$ has been taken to be the $2-$edge-covering of $H$ that includes all the possible bought links for every node $v \in V^{\geq 2}(H)$, helps to better understand the elements $H_{u,J}$ and the subsets $A^u_J(v)$.

\begin{center}
\includegraphics[width=\textwidth,height=80mm]{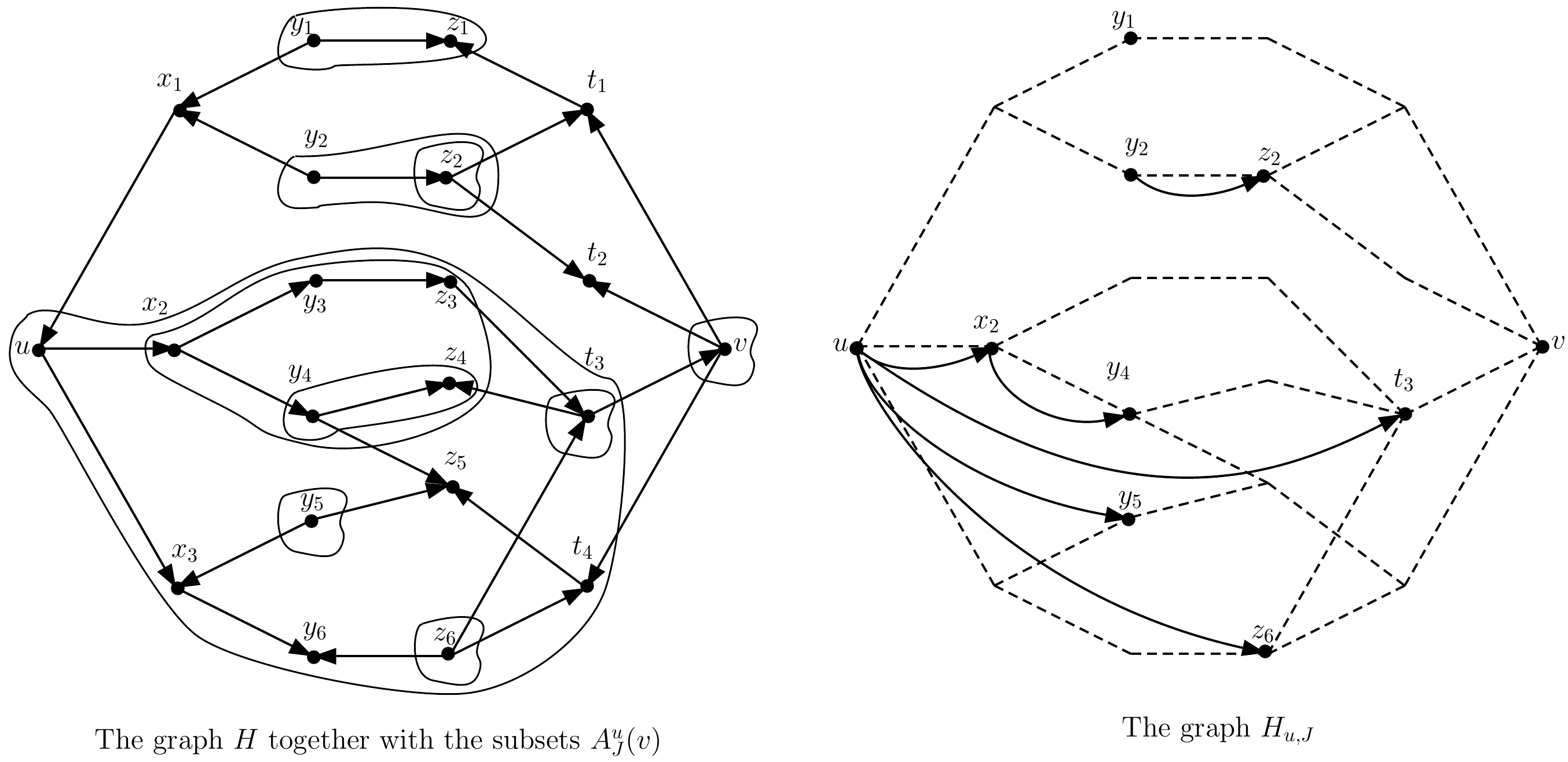}

\end{center}

In this scenario,  $A^u_J(y_1)=\left\{y_1,z_1\right\},A^u_J(z_2)=\left\{z_2\right\}, A^u_J(x_2) =\left\{x_2,y_3,y_4,z_3,z_4\right\}$ are different examples of all the possible $A$ sets with respect $u,J$ that can be obtained. Furthermore, $V(H_{u,J}) = \left\{ u,x_2,y_1,y_2,y_4,y_5,z_2,z_6,t_3,v\right\}$.




For any $v \in V^{\geq 2}(H)$ define \emph{the $AA$ set of $v$ with respect $u,J$} as the set $AA_J^u(v) = A^u_J(v) \setminus  \left( \cup_{(v,w) \in E(H_{u,J})} A^u_J(w) \right)$. Notice that any leaf $v$ of a tree from $H_{u,J}$ verifies $AA_J^u(v) = A^u_J(v)$ and that by definition, all the nodes from $AA^u_J(v)$ except from $v$ have outdegree in $H$ at most $1$. 

Let $|H(AA^u_J(v))|$ be the \emph{$AA$-weight with respect $u,J$} of a node $v \in V^{\geq 2}(H)$ and the \emph{average $AA$-weight with respect $u,J$} of a subgraph of $H_{u,J}$ the sum of the $AA$-weights of the nodes that conform such subgraph divided by the total number of nodes of the same subgraph. 

In section \ref{sec:degconstant} we prove that the number of links bought by any node in $H$ is upper bounded by a constant. This is reasonable because we are considering a high cost per link, $\alpha > n$. Then, for every $v \in V^{\geq 2}(H)$, independently from the choice of the $2-$edge-covering $J$ of $H$, every node inside $H(AA_J^u(v))$ excluding the node $v$ itself, has directed degree in $H$ at most one. Therefore, if we associate to each node $v\in V^{\geq 2}(H)$ the subset of nodes $H(AA_J^u(v))$, then the problem of giving an improved upper bound for the term $deg^+(H)$ is equivalent to give an improved lower bound for the average $AA-$weight for the nodes $v\in V^{\geq 2}(H)$ with respect $u,J$. In section \ref{sec:lowerbound}, we show that for suitable $u,J$ there exists a large subset of nodes from $H_{u,J}$ having a large enough $AA-$weight. Finally, in section \ref{sec:avg} we see that considering the same $u,J$, under certain technical conditions, the overall $AA-$weight average can be made large enough to prove the desired result.

\subsection{The directed degree in $H$ is upper bounded by a constant}
\label{sec:degconstant}

Let $w \in V(H)$ be fixed and let $J$ be a prefixed $2-$edge-covering of $H$. Now take $v\in V^{\geq 2}(H)$ and suppose that $J(v) = \left\{ e_1(v),...,e_k(v)\right\}$. Think about the deviation in $v$ that consists in deleting $e_i(v)$ for $i=1,...,k$ and buying a link to $w$. Let $\Delta C$ be the corresponding cost difference. Since $H$ is biconnected there must exist at least one bridge distinct than $e_i(v)$ joining $A^w_{i,J}(v)$ with $\left(A^w_J(v)\right)^c$ for some $i$. Assume wlog that $xy \in bridges(A^w_{1,J}(v),\left(A^w_J(v) \right)^c, v)$. Furthermore let $I$ be the subset of subindices $i$ for which $A^w_{i,J}(v) \neq \emptyset$ and for $i \in I$ let $x_iy_i \in bridges(A^w_{i,J}(v),A^w_{i,J}(v)^c,v)$. The following two cases are complementary if $k=2$: 

(i) $\max_{1 < i \leq k}d_{A^w_{1,J}(v) \cup A^w_{i,J}(v)}(v_1,v_i) = l < \infty$.

(ii) The subsets $A^w_{i,J}(v)$ are disjoint for $1 \leq i \leq k$.

 We show that in both cases there exists a quantity $Res_G(v,w,J)$ that depends on the topology of the subgraph induced by $A^w_J(v)$ such that:
\begin{equation} \tag{*}\label{eq:formula} 
  \Delta C \leq -(k-1)\alpha + n+D_G(w)-D_G(v)+Res_G(v,w,J)|A_J^w(v)|
\end{equation}

\begin{proposition}
\label{prop:formula1} Let us assume that $\max_{1 < i \leq k}d_{A^w_{1,J}(v) \cup A^w_{i,J}(v)}(v_1,v_i) = l < \infty$. Then if $\Delta C$ is the cost difference associated to the deviation that consists in deleting every $e_i(v)$ for $1 \leq i \leq k$ and buying a link to $w$, we have that:

$$ \Delta C \leq -(k-1)\alpha + n+D_G(w)-D_G(v)+(2d_G(v_1,x)+l)|A^w_J(v)|$$ 

\end{proposition}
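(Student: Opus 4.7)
The plan is to expand $\Delta C$ as the edge-cost change $-(k-1)\alpha$ plus the distance-cost change $\sum_{z \neq v}[d_{G'}(v,z) - d_G(v,z)]$, where $G'$ denotes the graph obtained after deleting the edges in $J(v)$ and inserting the new edge $(v,w)$. I will split the sum according to whether $z \in A^w_J(v)$ or not.

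For $z \notin A^w_J(v)$, the key observation is that by the very definition of $A^w_J(v)$ there is a shortest $z$–$w$ path in $G$ that avoids every edge of $J(v)$, so this path survives in $G'$. Concatenating with the new edge $(v,w)$ gives $d_{G'}(v,z) \leq 1 + d_G(w,z)$. Summed over all $z \neq v$ (not just those outside $A^w_J(v)$), the ``standard'' increments $1 + d_G(w,z) - d_G(v,z)$ contribute at most $n - 1 + D_G(w) - d_G(v,w) - D_G(v) \leq n + D_G(w) - D_G(v)$, which provides the first three summands of the claimed bound.

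The genuinely new step is the case $z \in A^w_J(v) \setminus \{v\}$, say $z \in A^w_{i,J}(v)$: every shortest $z$–$w$ path in $G$ uses some edge of $J(v)$, and those paths are destroyed in $G'$. Here I exploit the bridge $xy$ together with the hypothesis of case (i). Explicitly, I will bound $d_{G'}(v,z)$ by the length of the walk
\[
v \xrightarrow{\text{new edge}} w \xrightarrow{\text{shortest in }G} y \xrightarrow{\text{bridge}} x \xrightarrow{\text{inside }A^w_{1,J}(v)} v_1 \xrightarrow{\text{inside }A^w_{1,J}(v)\cup A^w_{i,J}(v)} v_i \xrightarrow{\text{inside }A^w_{i,J}(v)} z.
\]
Each subpath avoids $J(v)$: the shortest $w$–$y$ path does so because $y \notin A^w_J(v)$; the subpaths in $A^w_{1,J}(v)$, $A^w_{i,J}(v)$ lie entirely inside those induced subgraphs (which follows from the characterization that any node on the $x$–$v_1$ or $v_i$–$z$ shortest paths inherits a shortest path to $w$ through the same edge $e_1(v)$, resp.\ $e_i(v)$); and by the case assumption the $v_1$–$v_i$ step has length at most $l$. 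This gives $d_{G'}(v,z) \le 2 + d_G(w,y) + d_G(v_1,x) + l + d_G(v_i,z)$. Since $d_G(v,z) = 1 + d_G(v_i,z)$ and $d_G(w,z) = d_G(w,v) + d_G(v,z)$ for $z\in A^w_J(v)$, the standard increment equals $1 + d_G(w,v)$, and the excess over it is $d_G(w,y) - d_G(w,v) + d_G(v_1,x) + l$, which I then bound by $2 d_G(v_1,x) + l$ using $d_G(w,y) \leq d_G(w,v) + d_G(v,y)$ and $d_G(v,y) \leq 1 + d_G(v,x) \leq d_G(v_1,x) + $ small.

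The main technical obstacle is precisely this last bookkeeping step: packaging the residual additive constants ($+1$ from the new edge, $+1$ from the bridge, the $d_G(v,y)$ vs.\ $d_G(v_1,x)$ gap) into the compact expression $2 d_G(v_1,x) + l$ per node of $A^w_J(v)$, with the slack absorbed in the passage from $n-1+D_G(w)-d_G(v,w)-D_G(v)$ to $n+D_G(w)-D_G(v)$ and in the fact that the sum over $A^w_J(v) \setminus \{v\}$ has $|A^w_J(v)|-1$ terms rather than $|A^w_J(v)|$. Once this accounting is done cleanly, summing the $z \notin A^w_J(v)$ and $z \in A^w_J(v)\setminus\{v\}$ contributions, together with $-(k-1)\alpha$, gives the stated inequality.
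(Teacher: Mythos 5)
Your proposal follows essentially the same route as the paper's own proof: you account for the edge cost as $-(k-1)\alpha$, split the distance change by whether $z$ lies in $A^w_J(v)$, observe that $z\notin A^w_J(v)$ is reached via $v\to w\to z$, and for $z\in A^w_J(v)$ use exactly the same detour $v\to w\to y\to x\to v_1\to v_i\to z$ across the bridge $xy$ that the paper uses in its cases (B.i)--(B.vi). The only presentational difference is that you express the excess directly via $d_G(w,y)$ and the identity $d_G(w,z)=d_G(w,v)+d_G(v,z)$ for $z\in A^w_J(v)$, whereas the paper splits the $w$-to-$y$ path into $\pi_1$ (length $d_G(w,v)$) and $\pi_2$ and folds $\pi_1$ into $d_G(w,z)$; these are algebraically equivalent. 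The $O(1)$ per-node slack you flag (roughly $+2$ coming from the bridge step and $d_G(v,y)\le 2+d_G(v_1,x)$) is present in the paper's own accounting as well and is inconsequential for how the proposition is used downstream, so it does not constitute a gap in your argument relative to the paper's.
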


\begin{proof}

The term $-(k-1)\alpha$ is clear because we are deleting $k$ edges $e_i(v)$ for $1 \leq i \leq k$ and buying a link to $w$. Now let us analyse the difference of the sum of distances in the deviated graph vs the original graph. To this purpose, let $z$ be any node from $G$. We distinguish two cases:

(A) If $z \not \in A^w_J(v)$ then:

\hskip 33pt  (A.i) Starting at $v$, follow the bridge $vw$.

\hskip 33pt (A.ii) Follow a shortest path from $w$ to $z$ in the original graph.

(B) If $z \in A^w_J(v)$ then:

\hskip 33pt (B.i) Starting at $v$, follow the bridge $vw$.

\hskip 33pt (B.ii) Follow a path $\pi$ from $w$ to $y$ contained in $\left(A^w_J(v)\right)^c$. We decompose this path $\pi$ into two subpaths: (a) The subpath $\pi_1$ from $\pi$ that connects $w$ with the node $y'$ from $\pi$ at distance $d_G(w,v)$ from $w$ and (b) The remaining subpath $\pi_2$ from $\pi$ that connects $y'$ with $y$. 
		
\hskip 33pt (B.iii) Cross the bridge $yx$. 

\hskip 33pt (B.iv) Go from $x$ to $v_1$ inside $A^w_J(v)$.

\hskip 33pt (B.v) Go from $v_1$ to $v_i$ inside $A^w_J(v)$.

\hskip 33pt (B.vi) Go from $v_i$ to $z$ inside $A^w_J(v)$. 

The unit distance corresponding to following the edge $vw$ in (A.i) and (B.i) gives the term $+n$  in the formula. Then, (A.ii) and the unit distance corresponding to item (B.iii) together with the length of the path $\pi_1$ and the length of the path from (B.vi) gives the distance $d_G(w,z)$. The addition of the length of the paths from the cases (B.ii.b) and (B.iv) is upper bounded by $2d_G(v_1,x)$ and (B.v) is upper bounded by $l$. Therefore, we have seen that $Res_G(v,w,J) = 2d_G(v_1,x)+l$ and the result is proved.

\end{proof}

A similar analysis can be made for the case where the subsets $A^w_{i,J}(v)$ are mutually disjoint.

 \begin{proposition}
 \label{prop:formula2} Let us assume that the subsets $A^w_{i,J}(v)$ are disjoint for $1 \leq i \leq k$. Then if $\Delta C$ is the cost difference associated to the deviation that consists in deleting every $e_i(v)$ for $1 \leq i \leq k$ and buying a link to $w$, we have that:

$$ \Delta C \leq -(k-1)\alpha + n+D_G(w)-D_G(v)+\max(0,2 \max_{i \in I} d_G(v,x_i))|A^w_J(v)|$$ 

\end{proposition}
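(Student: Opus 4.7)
The plan is to mirror the proof of Proposition~\ref{prop:formula1}: the term $-(k-1)\alpha$ is immediate, since the deviation deletes $k$ edges $e_1(v),\dots,e_k(v)$ and buys a single new link $vw$. The rest amounts to estimating the change in the sum of distances from $v$ by constructing, for each $z\in V(G)$, an explicit path from $v$ to $z$ in the deviated graph whose length we can control in terms of $d_G(w,z)$. The first step of every such path will be the new edge $vw$, which accumulates $n-1$ across all $z$ (absorbed into the $+n$ term).

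I would then split on whether $z\in A^w_J(v)$. When $z\notin A^w_J(v)$, by definition of $A^w_J(v)$ there is a shortest $w$--$z$ path in $G$ that avoids every $e_i(v)$, so following $vw$ and then that path gives a route of length at most $1+d_G(w,z)$; summing contributes precisely $n-1$ and $\sum_{z\notin A^w_J(v)}\bigl(d_G(w,z)-d_G(v,z)\bigr)$. When $z\in A^w_J(v)\setminus\{v\}$, \emph{disjointness} of the sets $A^w_{i,J}(v)$ (this is where Proposition~\ref{prop:formula2} departs from Proposition~\ref{prop:formula1}) forces $z$ to lie in a unique $A^w_{i,J}(v)$ with $i\in I$. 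I would then route $v\to w\to y_i\to x_i\to z$, where the $w$--$y_i$ segment is a shortest path in $G$ and the $x_i$--$z$ segment stays inside $A^w_{i,J}(v)$. Using the identity $d_G(w,z)=d_G(w,v)+1+d_G(v_i,z)$ (forced by $z\in A^w_{i,J}(v)$) together with triangle inequalities through the bridge $x_iy_i$ and the edge $vv_i$, one should obtain that the length of this route is at most $1+d_G(w,z)+2d_G(v,x_i)$.

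Summing over $z$ packages everything: $n-1\le n$ from the $vw$-step, $D_G(w)-D_G(v)$ from the shift of base point, and the extra detour contributes at most $\bigl(2\max_{i\in I}d_G(v,x_i)\bigr)\cdot|A^w_J(v)|$, with the $\max(0,\cdot)$ in the statement handling the degenerate case $I=\emptyset$ (where $A^w_J(v)=\{v\}$ and no detour is incurred at all). Adding the edge-cost contribution $-(k-1)\alpha$ yields the displayed formula.

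The main obstacle I expect is the case $z\in A^w_{i,J}(v)$. Two points need genuine care. First, one must choose the bridge $x_iy_i$ so that the shortest $w$--$y_i$ path in $G$ actually avoids every deleted edge $e_j(v)$; this is where biconnectedness of $H$ enters, and where disjointness is crucial to prevent $y_i$ from being pushed into some $A^w_{j,J}(v)$ with $j\neq i$ (which would re-introduce a deleted edge on the $w$--$y_i$ path). Second, one has to bound the distance from $x_i$ to $z$ inside the subgraph induced by $A^w_{i,J}(v)$ (and not merely $d_G(x_i,z)$) by $d_G(x_i,v_i)+d_G(v_i,z)$, which in turn is controlled by $d_G(v,x_i)+1+d_G(v_i,z)$ via the edge $vv_i$. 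Once these two points are settled, the per-$z$ estimate $1+d_G(w,z)+2d_G(v,x_i)$ is a routine triangle-inequality computation, and the summation closes the proof.
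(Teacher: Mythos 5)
Your proposal reproduces the paper's argument: the same case split on $z\in A^w_J(v)$, the same route $v\to w\to y_i\to x_i\to z$ through the bridge $x_iy_i$, and the same per-$z$ estimate $1+d_G(w,z)+2d_G(v,x_i)$ leading to $Res_G(v,w,J)=\max(0,2\max_{i\in I}d_G(v,x_i))$. One small caveat: disjointness of the $A^w_{i,J}(v)$ does not by itself keep the bridge endpoint $y_i$ out of $A^w_J(v)$ (it only makes the index $i$ with $z\in A^w_{i,J}(v)$ unique, which is how the paper uses it); the paper instead takes the $w$-to-$y_i$ leg to be a path contained in $(A^w_J(v))^c$ rather than a shortest path in $G$, a cosmetic difference that yields the same count once you decompose it into the $\pi_1$ (length $d_G(w,v)$) and $\pi_2$ pieces.
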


\begin{proof}

The term $-(k-1)\alpha$ is clear because we are deleting $k$ edges $e_i(v)$ for $1 \leq i \leq k$ and buying a link to $w$. Now let us analyse the difference of the sum of distances in the deviated graph vs the original graph. To this purpose, let $z$ be any node from $G$. We distinguish two cases:

(A) If $z \not \in A^w_J(v)$ then:

\hskip 33pt  (A.i) Starting at $v$, follow the bridge $vw$.

\hskip 33pt (A.ii) Follow a shortest path from $w$ to $z$ in the original graph.

(B) If $z \in A^w_J(v)$ then:

\hskip 33pt (B.i) Starting at $v$, follow the bridge $vw$.

\hskip 33pt (B.ii) follow a path $\pi$ from $w$ to $y$ contained in $\left(A^w_J(v)\right)^c$. We decompose this path $\pi$ into two subpaths: (a) The subpath $\pi_1$ from $\pi$ that connects $w$ with the node $y'$ from $\pi$ at distance $d_G(w,v)$ from $w$ and (b) The remaining subpath $\pi_2$ from $\pi$ that connects $y'$ with $y_i$. 
		
\hskip 33pt (B.iii) Cross the bridge $y_ix_i$. 

\hskip 33pt (B.iv) Go from $x_i$ to $v_i$

\hskip 33pt (B.v) Go from $v_i$ to $z$. 

\vskip 10pt

The unit distance corresponding to following the edge $vw$ in (A.i) and (B.i) gives the the term $+n$ term in the formula. Then, (A.ii) and the unit distance corresponding to item (B.iii) together with the path $\pi_1$ and with the path from (B.v) gives the distance $d_G(w,z)$. The addition of the length of the paths from the cases (B.ii.b) and (B.iv) is upper bounded by $2 \max_{i \in I} d_G(v,x_i)$. In conclusion, we obtain for this case $Res_G(v,w,J) = \max(0,2 \max_{i \in I} d_G(v,x_i))$.

\end{proof}

On the other hand, if $w$ buys a link to $v$ and $r = d_G(v,w) >1$ then $w$ gets closer at least $r-1$ distance units to every node in $A^u_J(v)$. Imposing  that $G$ is a \NE we then get the following remark:

\begin{remark}
\label{rem:buyingr}
For any $w,v \in V(H)$ nodes with $r = d_G(v,w)>1$  then $|A^w_J(v)| \leq \frac{\alpha}{r-1}$. 
\end{remark}

Now let us see how we can combine these two fundamental formulae together with Ramsey's Theorem to deduce that the directed degree in $H$ is upper bounded by a constant. 

  \textbf{Theorem} \emph{(Ramsey's Theorem) For any two positive integers $r,s$ there exists an integer $R(r,s)$ such that every graph $Z$ on $R(r,s)$ vertices satisfies the following property:  $Z$ contains  a $s-$clique or $Z^c$ contains a $r-$clique.}



\begin{proposition}
\label{prop:degg} 
If $d_H > 4$, there exists a positive constant $R$ such that  every node $v \in V(H)$ satisfies $deg_H^+(v) \leq R$. 
\end{proposition}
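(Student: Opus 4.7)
My plan is to assume for contradiction that some $v \in V(H)$ has $k := deg_H^+(v)$ arbitrarily large and derive a contradiction to the \NE condition via a Ramsey-theoretic splitting of the $k$ outgoing edges combined with the cost inequalities of Propositions \ref{prop:formula1} and \ref{prop:formula2}. Let $v_1,\dots,v_k$ be the heads of the edges bought by $v$ in $H$ and set $J(v):=\{(v,v_i)\}_{i=1}^k$. Since $d_H > 4$, I pick a reference node $w \in V(H)$ with $d_G(v,w)=2$, which keeps $|D_G(w)-D_G(v)| \leq 2n$ by the triangle inequality while leaving the $A^w_J(v)$ structure non-degenerate.

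I then $2$-color the edges of the complete graph on $\{1,\dots,k\}$: color $\{i,j\}$ \emph{red} if $A^w_{i,J}(v) \cap A^w_{j,J}(v)=\emptyset$ and \emph{blue} otherwise. Ramsey's theorem then yields, for $k \geq R(s,t)$ with $s,t$ to be fixed as suitable constants, either a red clique $I_r$ of size $s$ or a blue clique $I_b$ of size $t$. In the red case the subfamily $\{A^w_{i,J}(v)\}_{i\in I_r}$ is pairwise disjoint, so Proposition \ref{prop:formula2} applies to the deviation deleting every $e_i(v)$ with $i\in I_r$ and buying a single link to $w$; imposing $\Delta C \geq 0$ gives an inequality whose LHS grows linearly in $s$. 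In the blue case, pairwise overlap of $A^w_{i,J}(v)$ and $A^w_{j,J}(v)$ forces short paths inside $A^w_J(v)$ connecting the corresponding $v_i$'s, so the parameter $l$ appearing in Proposition \ref{prop:formula1} is finite and bounded; the same proposition then yields a cost inequality linear in $t$.

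The main obstacle is controlling the residual factors $Res_G(v,w,J)$ and $|A^w_J(v)|$ appearing on the right-hand sides of the two propositions. For the cardinality factor I would apply Remark \ref{rem:buyingr} --- possibly re-selecting $w$ at a slightly larger distance so that $|A^w_J(v)| \leq \alpha/(r-1)$. For the $Res$ factor, one must bound the internal distances $d_G(v_1,x)$ and $\max_i d_G(v,x_i)$ inside the $A$-sets; this requires auxiliary sub-deviations together with the fact that $\alpha<4n$ for any non-tree \NE, which was the starting observation in the proof of Lemma \ref{lem:w}. Once both factors are bounded by constants depending only on $\alpha/(\alpha-n)$, the linear growth in $s$ (resp.\ in $t$) of the LHS in each case forces $k$ to be bounded, giving the claimed constant $R$.
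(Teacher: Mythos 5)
Your high-level strategy (Ramsey on the $k$ bought edges, then Propositions~\ref{prop:formula1}/\ref{prop:formula2} on a monochromatic subfamily) is the same skeleton the paper uses, but the execution has a genuine gap centred on the choice of $w$ and the control of $Res_G(v,w,J)$.

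The residual $Res_G(v,w,J)$ is not a constant: in Proposition~\ref{prop:formula1} it equals $2d_G(v_1,x)+l$ and in Proposition~\ref{prop:formula2} it equals $2\max_i d_G(v,x_i)$, and the nodes $x,x_i$ are endpoints of $H$-bridges, so these distances are bounded by $d_H$, not by an absolute constant nor by a function of $\alpha/(\alpha-n)$. With your fixed $w$ at distance $r=2$, Remark~\ref{rem:buyingr} only gives $|A^w_J(v)|\le\alpha$, so the NE inequality reads $0\le\Delta C\le-(s-1)\alpha+O(n)+O(d_H)\cdot\alpha$, which yields $s=O(d_H)$ — useless since $d_H$ is unbounded. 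Your hope of ``bounding the internal distances by sub-deviations'' cannot succeed, because paths inside $H$ really can be $\Theta(d_H)$ long, and no constant bound on $Res_G$ exists. The paper's way out is that it does \emph{not} commit to a fixed $w$: it proves, for \emph{every} $w\in V(H)$ and every $v$ with $deg^+_H(v)>R$, that $d_G(v,w)<diam_H(w)/2$. The crucial balancing act is that when $r=d_G(v,w)$ is large, Remark~\ref{rem:buyingr} gives $|A^w_J(v)|\le\alpha/(r-1)$, so the product $Res_G\cdot|A^w_J(v)|=O(d_H\alpha/(r-1))$ is driven down as $r$ approaches $d_H$; from $\Delta C\ge0$ one then extracts $r<\frac{4d_H}{\clic-5}+1\le d_H/4$. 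Taking $w$ realising $diam_H(v)\ge d_H/2$ then produces a contradiction. Your substitute --- ``re-selecting $w$ at a slightly larger distance'' --- misses that the needed distance must grow \emph{proportionally with $d_H$}, not by a constant.

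A second, smaller issue is the colouring. You colour $\{i,j\}$ by whether $A^w_{i,J}(v)\cap A^w_{j,J}(v)=\emptyset$; the paper colours by whether $bridges(A^w_{i,J}(v),A^w_{j,J}(v))=\emptyset$. These are different dichotomies, and yours is weaker for your purposes: an intersection point $z$ only guarantees $l=d_{A^w_{1,J}(v)\cup A^w_{i,J}(v)}(v_1,v_i)<\infty$, while $z$ may lie outside $H$ and be arbitrarily far from $v_1$, so $l$ is not bounded by $O(d_H)$. The bridge-based colouring gives endpoints in $V(H)$, hence $l<2d_H$, which is exactly what the proof consumes. Your red/blue dichotomy would therefore not yield the needed inequalities without additional work.

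Finally, even if both factors could be bounded ``by constants depending only on $\alpha/(\alpha-n)$'' as you hope, the resulting $R$ would depend on $\alpha$ (and blow up as $\alpha\to n^+$), whereas Proposition~\ref{prop:degg} asserts an absolute constant; the paper achieves this by using only the trivial bound $4n/\alpha<4$ valid throughout the range $\alpha>n$.
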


\begin{proof}

Equivalently, we show that if $d_H > 4$, there exists a positive constant $R$ such that if $deg_H^+(v) > R$ then for all $w \in V(H)$, $d_G(v,w) < diam_H(w)/2$. If $d_H > 4$, then this implies the conclusion, because for any node $v\in V(H)$ we can pick $w \in A_{diam_H(v)}(v)$ so that $d_G(v,w) = diam_H(v) \geq d_H/2 \geq diam_H(w)/2$ and thus $deg_H ^+(v) \leq R$, as we want to see.

Indeed, let $\clic = 85$ and, considering Ramsey's Theorem, let $R = R(\clic, \clic)$. Now suppose that $deg_H^+(v) > R$. Let $e_i(v) = (v,v_i) \in E(H)$ for $i=1,...,k$ be distinct edges bought by $v$ with $k > R$ and consider any $2-$edge-covering $J$ verifying $J(v) = \left\{ e_1(v),...,e_k(v)\right\}$. For any $w \in V(H)$ we can consider the undirected graph $Z^w$ having for nodes $z_1,...,z_k$ and joining with an edge $z_i$ with $z_j$ iff $bridges(A_{i,J}^w(v),A^w_{j,J}(v)) \neq \emptyset$ for $i \neq j$. 

Since $k> R$ by hypothesis then, using Ramsey's Theorem, we can find a clique of \clic elements in $Z^w$ or a clique of \clic elements in $\left(Z^w\right)^c$. We see that in both cases $d_G(v,w) < diam_H(w)/2$: 

(i) Suppose that there exists a clique in $Z^w$ of at least \clic elements. Assume wlog that the vertices that form such clique are $z_1,...,z_{\clic}$. If $v=w$ then $d_G(v,w)=0$ and we are done. Otherwise, we cannot have $bridges(\left\{v\right\} \cup \left( \cup_{i=1}^lA_{i,J}^w(v)\right), \left(\left\{v\right\} \cup \left( \cup_{i=1}^lA_{i,J}^w(v)\right) \right)^c;v)= \emptyset$ because otherwise when removing $v$ we would obtain two distinct connected components: one containing $w$ and the other one containing $\cup_{i=1}^lA_{i,J}^w(v)$, a contradiction with $H$ being biconnected. Therefore, $bridges(\left\{v\right\} \cup \left( \cup_{i=1}^lA_{i,J}^w(v)\right), \left(\left\{v\right\} \cup \left( \cup_{i=1}^lA_{i,J}^w(v)\right) \right)^c;v) \neq \emptyset$ so that we can assume wlog that $bridges(A_{1,J}^w(v) \cup \left\{v \right\}, \left(\left\{v \right\} \cup \cup_{i=1}^lA_{i,J}^w(v) \right)^c;v)\neq \emptyset$. 

On the other hand, by the hypothesis, we have that $bridges(A_{i,J}^w(v),A_{j,J}^w(v)) \neq \emptyset$ for $i \neq j$. Therefore $d_{A_{i,J}^w(v) \cup A_{j,J}^w(v)}(v_i,v_j) < \infty$ for each $i \neq j$. In particular, $d_{A_{i,J}^w(v)\cup A_{1,J}^w(v)}(v_i,v_1) < \infty$ for every $i$ with $2 \leq i \leq l$. Furthermore, it is not hard to see that in this situation then $d_{A_{i,J}^w(v)\cup A_{1,J}^w(v)}(v_i,v_1) <2d_H$ for every $i$ with $2\leq i \leq l$. Then, consider the deviation in $v$ that consists in deleting all edges $e_i(v)$ for $1 \leq i \leq \clic$ and buying a link to $w$. By Proposition \ref{prop:formula1}, we have that $Res_G(v,w,J) \leq 2d_H+2d_H = 4d_H$, so that the corresponding cost difference $\Delta C_1$ satisfies the following inequality: $\Delta C_1 < -(\clic-1)\alpha+n+D_G(w)-D_G(v)+4d_H|A^w_J(v)|$
 
(ii) Suppose that there exists a clique in $(Z^w)^c$ of at least $\clic$ elements. Assume wlog that the vertices that form such clique are $z_1,...,z_{\clic}$. Consider the deviation in $v$ that consists in deleting $e_i(v)$ for $1 \leq i \leq \clic$ and buying a link to $w$. By Proposition \ref{prop:formula2}, we have that $Res_G(v,w,J) \leq 2d_H$ in this case, so the corresponding cost difference $\Delta C_2$ satisfies the following inequality: $ \Delta C_2 < -(\clic-1)\alpha+n+D_G(w)-D_G(v)+2d_H|A^w_J(v)| $

Now, let $r = d_G(v,w)$. If $r \leq 1$ then we are done because $d_H>4$ by hypothesis. Thus we can suppose that $r > 1$. In this way, $|A^w_J(v)| \leq \frac{\alpha}{r-1}$ applying Remark \ref{rem:buyingr}. Furthermore, $|D_G(v)-D_G(w)| < 3n$ by Proposition 1 from \cite{Alvarezetal} (notice that since $H$ is biconnected then in particular is $2-$edge-connected, so the conditions for the Proposition 1 apply). Then we can upperbound $\Delta C_1$ and $\Delta C_2$ with the same expression: $\Delta C_1, \Delta C_2 < -(\clic-1)\alpha+ 4n+\frac{4d_H \alpha}{r-1} \leq \alpha \left(-(\clic-1)+\frac{4n}{\alpha}+\frac{4d_H}{r-1} \right)$.

 Since $G$ is a \NE $\Delta C_1, \Delta C_2 \geq 0$. From here we deduce that, if $d_H \geq 5$, then $r < \frac{4d_H}{l-1-\frac{4n}{\alpha}}+1 \leq \frac{4d_H}{l-5}+1 \leq \frac{d_H}{4}$.
 
  Therefore, $r < \frac{d_H}{4} \leq \frac{diam_H(w)}{2}$ which is what we wanted to prove.

\end{proof}

On the other hand, if $d_H \leq 4$ then every edge from $H$ is contained in a cycle of length at most $2d_H+1 \leq 9$. This means that when we delete any edge we can follow an alternative path of length at most $8$ not containing such edge to reach any node in the subset of nodes affected by such deviation. Since the subset of nodes affected when deleting distinct edges of $H$ owned by the same node are mutually disjoint then there cannot be more than $7$ edges of $H$ owned by the same node. Finally, we have reached the conclusion: 


\begin{theorem} \label{prop:deg2} There exists a positive constant $R'$ such that every node $v \in V(H)$ satisfies $deg_H^+(v) \leq R'$. 
\end{theorem}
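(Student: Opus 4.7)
The plan is to combine Proposition \ref{prop:degg} with a direct argument handling the small-diameter case. Since Proposition \ref{prop:degg} already produces a constant $R$ bounding $deg_H^+(v)$ whenever $d_H > 4$, it suffices to exhibit a constant $R''$ bounding $deg_H^+(v)$ when $d_H \leq 4$; the theorem then follows by taking $R' := \max(R, R'')$.

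To handle $d_H \leq 4$, the first step is to show that every edge $e = (v, v_i) \in E(H)$ lies on a cycle in $H$ of length at most $2 d_H + 1 \leq 9$, so that after removing $e$ from $H$ there is an alternative path in $H - e$ from $v$ to $v_i$ of length at most $2 d_H \leq 8$. This follows from biconnectedness together with the diameter bound: for any midpoint vertex $m$, the shortest paths in $H$ from $v$ to $m$ and from $m$ to $v_i$, together with the edge $e$, form a closed walk of length at most $2 d_H + 1$ through $e$, from which a cycle through $e$ can be extracted.

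Next, fix $v \in V(H)$ with $k := deg_H^+(v)$ and enumerate the edges bought by $v$ in $H$ as $e_1(v), \ldots, e_k(v)$ with $e_i(v) = (v, v_i)$. Let $A_i$ be the set of nodes $z \in V(G)$ such that every shortest path from $v$ to $z$ in $G$ uses the edge $e_i(v)$. Consider the deviation in which $v$ sells only the edge $e_i(v)$; by the short-cycle property, each $z \in A_i$ sees its distance from $v$ increase by at most $7$, while nodes outside $A_i$ are unaffected. The Nash equilibrium condition then forces $-\alpha + 7|A_i| \geq 0$, whence $|A_i| \geq \alpha/7 > n/7$ since $\alpha > n$.

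Finally, the sets $A_1, \ldots, A_k$ are pairwise disjoint: a shortest path from $v$ to a single vertex $z$ starts at $v$, so its first edge is uniquely determined as one of the $e_i(v)$, and therefore $z$ can belong to at most one $A_i$. Hence $k \cdot (n/7) < \sum_{i=1}^k |A_i| \leq n$, which gives $k < 7$. Setting $R'' := 7$ closes the small-diameter case and proves the theorem. The main substantive work has been done in Proposition \ref{prop:degg} (via the Ramsey-based argument), so the only obstacle here is the technical check that the short-cycle property is available in $H$; once that is in place the bookkeeping is elementary.
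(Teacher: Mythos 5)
Your proof is correct and follows essentially the same two-case route as the paper: Proposition \ref{prop:degg} handles $d_H > 4$, and for $d_H \leq 4$ the short-cycle property of $H$, the deletion deviation, and the pairwise disjointness of the affected sets $A_i$ give the constant bound. The one slightly informal step is your justification of the $2d_H+1$ cycle bound via a closed walk through a midpoint vertex $m$: the two shortest paths you concatenate may themselves use the edge $e$, in which case extracting a simple cycle through $e$ from that closed walk is not automatic; a cleaner argument is to take the shortest $v$-$v_i$ path $P$ in $H$ with $e$ removed, note that if $|P| > 2d_H$ its midpoint $m$ would force the $H$-shortest path from $v$ to $m$ to pass through $e$ and hence yield $d_{H-e}(v_i,m) \le d_H-1$, contradicting the position of $m$ on $P$ -- but this is a standard fact, the paper also states it without proof, and it does not affect the soundness of your overall argument.
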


\subsection{Lower bounding the size of the $AA$ sets}
\label{sec:lowerbound}

Fix $u\in V(H)$, any node minimising the function $D_G(\cdot)$ in $V(H)$. Let $e_1(v),e_2(v)$ be two edges bought by $v\in V^{\geq 2}(H)$ and take $J$ the $2-$edge-covering of $H$ verifying $J(v) = \left\{ e_1(v),e_2(v)\right\}$ for every $v \in V^{\geq 2}(H)$. Let $T$ be any tree from $H_{u,J}$. In this section we give non-trivial lower bounds for the $AA-$weight with respect $u,J$ of any leaf $v$ from $T$ and for the average $AA-$weight with respect $u,J$ of the nodes from any large enough $2-$path $\pi = v_1-....-v_l$ from $T$ far enough from $u$. 

\vskip 5pt

  \textbf{A lower bound for the leaves of $T$.} We start studying a lower bound for the $AA-$weight with respect $u,J$ of any leaf from $T$.  First, let us see that if $d_H$ is large enough, then there cannot be a connected component $Z$ small in $H$ but large in $G$. This is because, otherwise, any node far enough from $Z$ (which exists if $d_H$ is large enough), might have incentive to buy a link to any node from $H(Z)$ paying just for one link but getting close (because $Z$ is small in $H$ by hypothesis) to a large subset of nodes (because $Z$ is large in $G$ by hypothesis).
 
\begin{lemma} \label{lem:sac}
Let $Z \subseteq H$ be a connected subgraph from $H$ and let $X > 0$ be such that $diam(Z) \leq X$. If $d_H \geq 4X+3$ then we have that $\sum_{v \in V(Z)}|T(v)| \leq \frac{\alpha}{d_H/2-2X-1}$.
\end{lemma}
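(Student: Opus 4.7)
The plan is a one-shot deviation argument: pick a node $w \in V(H)$ that is far from $Z$ in $H$, and let it buy a single link to some fixed $z_0 \in V(Z)$. Because $Z$ has small diameter in $H$ (at most $X$), this one new edge gives $w$ a cheap shortcut to \emph{every} node in $\bigcup_{v \in V(Z)} T(v)$. The \NE condition then bounds this total mass against $\alpha$.

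Concretely, I would first fix any $z_0 \in V(Z)$ and then locate $w$. Taking endpoints $a, b \in V(H)$ of a diametral path of $H$, the triangle inequality gives $d_H(a,z_0) + d_H(z_0,b) \geq d_H(a,b) = d_H$, so one of $a,b$ is at distance at least $d_H/2$ from $z_0$; call it $w$.

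Next, I would compute the distance savings of the deviation in which $w$ buys the single edge $wz_0$. The key distance identity is
\[
 d_G(w,t) \;=\; d_H(w,v) + d_{T(v)}(v,t) \quad \text{for every } v \in V(Z),\ t \in T(v),
\]
which uses two standard facts: (i) since $H$ is a biconnected component of $G$, distances between two nodes of $V(H)$ agree in $H$ and in $G$, and (ii) the only way to enter $T(v)$ from outside is through the cut vertex $v$. In the deviated graph $G'$, the route $w \to z_0 \to v \to t$ yields
\[
 d_{G'}(w,t) \;\leq\; 1 + d_H(z_0,v) + d_{T(v)}(v,t) \;\leq\; 1 + X + d_{T(v)}(v,t),
\]
using $diam(Z) \leq X$. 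Hence the saving at each such $t$ is at least
\[
 d_H(w,v) - X - 1 \;\geq\; d_H(w,z_0) - 2X - 1 \;\geq\; \tfrac{d_H}{2} - 2X - 1,
\]
and the hypothesis $d_H \geq 4X+3$ ensures this quantity is strictly positive.

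Summing the savings over all $t \in \bigcup_{v \in V(Z)} T(v)$, the total gain for $w$ is at least $\left(\tfrac{d_H}{2} - 2X - 1\right)\sum_{v \in V(Z)}|T(v)|$, while the extra cost of the deviation is just $\alpha$. The \NE inequality $\Delta C_w \geq 0$ then rearranges to the desired bound. The only delicate point, and the one place I would write carefully, is the identity $d_G(w,t) = d_H(w,v) + d_{T(v)}(v,t)$; aside from that, the argument is a clean single-deviation calculation with no case analysis.
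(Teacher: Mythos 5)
Your proposal is correct and follows essentially the same single-deviation argument as the paper: choose a vertex of $H$ at distance at least $d_H/2$ from $Z$, have it buy one link into $Z$, and bound the distance savings to all of $\bigcup_{v\in V(Z)}T(v)$ by the NE inequality. The only cosmetic difference is that you locate the far vertex via the endpoints of a diametral path rather than by directly picking a node at distance $d_H/2$, and you spell out the cut-vertex distance identity that the paper leaves implicit; the computation and the resulting bound are the same.
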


\begin{proof}
Let $z \in V(Z)$ and pick $z'$ such that $d_H(z,z') = d_H/2$. Consider that $z'$ buys a link to $z$. Let $v\in V(Z)$ such that $d_H(v,z)=s$. If the distance in the original graph from $v$ to $z'$ was $s'$ then in the deviated graph is $\min(s+1,s')$. Therefore the change in the distance between the two scenarios is of at least $s'-\min(s+1,s') \geq s'-s-1$. On the other hand, by the triangular inequality $s'+s \geq d_H/2$, therefore: 

$$s'-\min(s+1,s') \geq d_H/2-2s-1 \geq d_H/2-2X-1$$

  Therefore, if $\Delta C_{buy}$ is the corresponding cost difference, we get: 

$$\Delta C_{buy} \leq \alpha-(d_H/2-2X-1)\sum_{v \in V(Z)}|T(v)|$$

  Since by hypothesis $d_H-4X-3 \geq 0$ then clearly  $d_H/2-2X-1 = ((d_H-4X-3)+1)/2 > 0$ so the conclusion follows from the fact that $\Delta C_{buy} \geq 0$ because $G$ is a \NE.

\end{proof}







Now we are ready to state and prove the main result of this subsection.
  
\begin{proposition}
\label{prop:leaf}
Let $v \in V^{\geq 2}(H)$ be any leaf from $T$. For any positive value $\K$, if $d_H >  \frac{6\K \alpha}{\alpha-n}+4\K+2$ then $|H(AA_J^u(v))| \geq \K$.
\end{proposition}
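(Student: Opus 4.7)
The plan is a proof by contradiction: suppose $|H(AA_J^u(v))| < \K$. Since $v$ is a leaf of $T$ in $H_{u,J}$, the definition of the $AA$ sets gives $AA_J^u(v) = A_J^u(v)$, so setting $Z := H(A_J^u(v))$ we have $|Z| < \K$.

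The first step is to examine the deviation at $v$ consisting of selling both edges of $J(v) = \{e_1(v),e_2(v)\}$ and buying the link $(v,u)$. Depending on whether $A^u_{1,J}(v)$ and $A^u_{2,J}(v)$ are linked inside $A^u_J(v)$ (case (i)) or disjoint (case (ii)), Proposition \ref{prop:formula1} or Proposition \ref{prop:formula2} yields
$$\Delta C \leq -\alpha + n + (D_G(u) - D_G(v)) + Res_G(v,u,J)\cdot|A_J^u(v)|.$$
Because $u$ minimises $D_G(\cdot)$ over $V(H)$ we have $D_G(u)\leq D_G(v)$, and since $G$ is a \NE we have $\Delta C\geq 0$. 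Combining these observations produces
$$Res_G(v,u,J)\cdot|A_J^u(v)| \geq \alpha - n. \qquad (\star)$$

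The second step is to bound both factors in $(\star)$ using $|Z|<\K$. For $|A_J^u(v)|$: any $x\in A_J^u(v)\setminus V(H)$ lies in $T(z)$ for some unique $z\in V(H)$, and this $z$ must itself belong to $A_J^u(v)$, because every shortest path from $x$ to $u$ exits $T(z)$ at $z$ and then continues toward $u$, while the portion inside $T(z)$ uses no edge of $J(v)\subseteq E(H)$. Hence $|A_J^u(v)|\leq \sum_{z\in Z}|T(z)|$. The induced subgraph $H[Z]$ is connected (for each $z\in Z$, choose a shortest path from $z$ to $u$ through $v$; its $z$-to-$v$ portion stays in $V(H)$ since any tree $T(w)$ can only be entered and exited through $w$, and a careful choice of canonical paths places all its internal vertices in $Z$), so its diameter is at most $|Z|-1\leq \K-1$. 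Applying Lemma \ref{lem:sac} with $X=\K$ (valid because the hypothesis forces $d_H\geq 4\K+3$) gives
$$|A_J^u(v)|\leq \frac{\alpha}{d_H/2 - 2\K - 1}.$$
For $Res_G$: in case (i) the distances $d_G(v_1,x)$ and $l$ both involve $H$-nodes of $Z$ with shortest paths routable inside $Z$ (excursions into tree components can be shortcut), yielding $2d_G(v_1,x)+l \leq 3(\K-1) < 3\K$; in case (ii) the distances $d_G(v,x_i)$ with $x_i\in Z$ are at most $\K-1$, giving $Res_G\leq 2(\K-1)<3\K$.

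Plugging both bounds into $(\star)$ delivers $3\K\cdot \frac{\alpha}{d_H/2-2\K-1}\geq\alpha-n$, which rearranges to $d_H\leq \frac{6\K\alpha}{\alpha-n}+4\K+2$, contradicting the hypothesis. The main obstacle I anticipate is the connectivity claim for $H[Z]$: a priori, an internal node $w$ on a shortest $z$-to-$v$ path could admit an alternative shortest route to $u$ that bypasses both edges of $J(v)$, placing $w$ outside $Z$. Closing this gap should involve fixing a canonical shortest-path system from $u$ (for instance a BFS layering from $u$ together with a tie-breaking rule near $v$ that always prefers edges of $J(v)$) and checking that the "all shortest paths use $J(v)$" property propagates inward along this canonical path; the argument is delicate but essentially combinatorial, and the precise diameter bound $\K-1$ is exactly what the numerology above requires.
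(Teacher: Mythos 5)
Your proposal matches the paper's own proof essentially step for step: the contradiction from $|H(A_J^u(v))| < \K$, the deviation yielding $Res_G(v,u,J)\,|A_J^u(v)| \geq \alpha - n$ via $D_G(u)\leq D_G(v)$, the bound $Res_G(v,u,J) < 3\K$ from smallness of the $H$-part, and the application of Lemma \ref{lem:sac} to get $|A_J^u(v)| \leq \alpha/(d_H/2 - 2\K - 1)$ contradicting the hypothesis on $d_H$. The connectivity worry you flag for $H[Z]$ closes more simply than you fear, without canonical paths: if $w$ lies on a shortest $z$-to-$v$ subpath with $z\in A_J^u(v)$ and $w$ had a shortest route to $u$ avoiding $J(v)$, concatenating that route with the $z$-to-$w$ segment would produce a shortest $z$-to-$u$ path avoiding $J(v)$, contradicting $z\in A_J^u(v)$; and since shortest paths between two $H$-vertices stay in $H$ (any excursion into a $T(\cdot)$ component would revisit its cut vertex), all such $w$ lie in $Z$, so $H[Z]$ is connected.
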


\begin{proof} If $v$ is a leaf from $T$ then $AA_J^u(v) = A^u_J(v)$. For the sake of the contradiction, suppose that $X=|H(AA^u_J(v))| < \K$. Now consider the formula (\ref{eq:formula}) $ \Delta C \leq -(k-1)\alpha + n +D_G(u)-D_G(v)+Res_G(v,u,J)|A^u_J(v)|$. Since $u$ minimises the function $D_G(\cdot)$ over $H$ we have that $0 \leq \Delta C \leq -\alpha + n + Res_G(v,u,J)|A^u_J(v)|$. By using that $|H(AA^u_J(v))|<\K$ it is not hard to see that $Res_G(v,u,J) < 3\K$. Hence, $|A^u_J(v)| > \frac{\alpha-n}{3\K}$. 

Since $diam(H(A^u_J(v))) =X < \K$ and $d_H \geq 4\K+3$ by applying Lemma \ref{lem:sac} we get that $|A^u_J(v)| \leq \frac{\alpha}{\frac{d_H}{2}-2\K-1}$. But by hypothesis $d_H > \frac{6\K \alpha}{\alpha-n}+4\K+2$. As a consequence: 

$$|A^u_J(v)| \leq \frac{\alpha}{\frac{d_H}{2}-2\K-1} < \frac{\alpha}{\frac{3\K \alpha}{\alpha-n}+2\K+1-2\K-1}=\frac{\alpha-n}{3\K}$$

Therefore, we have reached a contradiction and our first assumption is false. From here we conclude that $|H(AA^u_J(v))| \geq \K$. 
 
 \end{proof}
 
 \vskip 5pt
  \textbf{A lower bound for the $2-$paths from $T$.} Now we examine the average $AA-$weight with respect $u,J$ for a large enough $2-$path $\pi = v_0-v_1-...-v_{2l}-v_{2l+1}$ from $T$ far enough from $u$. Before stating and proving the main result we need some auxiliary results. The following two lemmas are used in Proposition \ref{prop:simple} and they provide the intuition to understand crucial topological properties of the $AA$ subsets for any $2-$node from $H_{u,J}$.

\begin{lemma}
\label{lem:connex} (Connectivity Lemma)
Let $\tau = w_0-w_1-w_2$ be a $2-$path from $H_{u,J}$ and $Z$ be the connected component from $A^u_J(w_1) \setminus \left\{ w_1 \right\}$ to which $w_2$ belongs.  Then, $\left(Z \cup \left\{w_1 \right\} \right) \setminus A^u_J(w_2)$ is connected. 
\end{lemma}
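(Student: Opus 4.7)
The plan is to prove connectivity by taking an arbitrary $y \in Z \setminus A^u_J(w_2)$ and exhibiting a path from $y$ to $w_1$ inside $X := (Z \cup \{w_1\}) \setminus A^u_J(w_2)$. Since $y \notin A^u_J(w_2)$, by the very definition of $A^u_J(w_2)$ I can select a shortest $y \to u$ path $\pi$ that uses no edge of $J(w_2)$, and because $y \in A^u_J(w_1)$ the path $\pi$ necessarily passes through $w_1$. Write $P = \pi_{y,w_1} = z_0 z_1 \cdots z_k$ with $z_0 = y$ and $z_k = w_1$; my candidate for the desired path is this prefix.

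The second step is to check that $P$ lies inside $Z \cup \{w_1\}$ and meets $A^u_J(w_2)$, if at all, only at $w_2$. For the first claim I would run the usual alternative-path argument: if some intermediate $z_j$ were not in $A^u_J(w_1)$, I could splice a shortest $z_j \to u$ path avoiding $J(w_1)$ onto the $y \to z_j$ prefix of $\pi$ to manufacture a shortest $y \to u$ path avoiding $J(w_1)$, contradicting $y \in A^u_J(w_1)$. This puts every $z_j$ with $j < k$ in $A^u_J(w_1) \setminus \{w_1\}$; path-connectedness to $y$ through $P$ then forces $z_j \in Z$. The symmetric argument with $J(w_2)$ in place of $J(w_1)$ excludes any $z_j$ from $A^u_J(w_2) \setminus \{w_2\}$: the suffix of $\pi$ starting at such a $z_j$ would be a shortest $z_j \to u$ path avoiding $J(w_2)$, contradicting $z_j \in A^u_J(w_2)$.

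In the case $w_2 \notin P$ the prefix $P$ already sits in $X$ and I am done. The remaining and delicate case is $w_2 \in P$, where $\pi$ enters and leaves $w_2$ through edges outside $J(w_2)$. Here I would invoke the $2$-path hypothesis in an essential way: it says that $w_1$ has out-degree exactly one in $H_{u,J}$, so $w_2$ is the only child of $w_1$, which entails $V^{\geq 2}(H) \cap (A^u_J(w_1) \setminus \{w_1\}) \subseteq A^u_J(w_2)$ and, in particular, that the vertices of $P$ strictly between $w_2$ and $w_1$ all have out-degree at most one in $H$. Combined with biconnectivity of $H$, which supplies a second internally vertex-disjoint path between the two $P$-neighbors of $w_2$, I would then show that the $w_2$-traversing segment of $P$ can be replaced by a detour staying inside $X$. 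This rerouting is the main obstacle I anticipate: biconnectivity only hands me the detour inside $H$, and it still needs to be argued that such a detour can be chosen inside $Z \cup \{w_1\}$ and outside $A^u_J(w_2)$, which is where the out-degree-one constraint together with the alternative-path argument of the previous paragraph should do the work.
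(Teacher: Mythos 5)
Your first two steps do reproduce the core of the published argument: pick a shortest $y$--$u$ path $\pi$ avoiding $J(w_2)$, pass to the prefix $P=\pi_{y,w_1}$, and show via the splicing argument that every intermediate vertex lies in $A^u_J(w_1)$ and, since the corresponding suffix of $\pi$ avoids $J(w_2)$, that none of them lies in $A^u_J(w_2)\setminus\{w_2\}$. This is the same mechanism the paper uses; the paper phrases it as ``$AA^u_J(w_1)$ is connected'' by following a shortest path from an arbitrary $z$ to $u$ and arguing that either the prefix lies in $AA^u_J(w_1)$ or $\pi_{zu}$ must cross $J(w_2)$, and then reduces the statement about $\left(Z\cup\{w_1\}\right)\setminus A^u_J(w_2)$ to that of $AA^u_J(w_1)$ by a short disjointness argument on the two components of $A^u_J(w_1)\setminus\{w_1\}$; your version directly targets $Z$ and so quietly absorbs this case split, which is a perfectly fine simplification. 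The delicate case you flag, $w_2\in P$, is genuinely the sensitive point: in the paper's own write-up this corresponds to the possibility $t=w'$, because $A^u_J(w')$ contains $w'$ by convention even when $w'$ has a shortest path to $u$ avoiding $J(w')$, so the implication ``$t\in A^u_J(w')$ hence $\pi_{tu}$ uses $e_1(w')$ or $e_2(w')$'' is not automatic for $t=w'$. You were right to isolate it.

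Where your proposal falls short is the plan for that case. The claimed inclusion $V^{\geq 2}(H)\cap\left(A^u_J(w_1)\setminus\{w_1\}\right)\subseteq A^u_J(w_2)$ is not a direct consequence of $w_1$ having out-degree one in $H_{u,J}$: Lemma~\ref{lem:dis-included0} gives nesting-or-disjointness of the $A$-sets, and out-degree one rules out other immediate children of $w_1$, but it does not by itself place every $V^{\geq 2}(H)$ node of $Z$ under $w_2$ without a further argument along the inclusion chain. Moreover biconnectivity of $H$ only supplies some detour around $w_2$ inside $H$; you admit yourself that nothing forces that detour to stay inside $Z\cup\{w_1\}$, let alone outside $A^u_J(w_2)$, and indeed it can exit $A^u_J(w_1)$ entirely before re-entering. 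So as written the last paragraph is a sketch of an intention rather than a proof, and this is exactly the step the paper's own argument does not spell out either. A cleaner way to close the hole that stays inside your framework: take $y$ to be a vertex of $X=\left(Z\cup\{w_1\}\right)\setminus A^u_J(w_2)$ disconnected from $w_1$ in $X$ with $d_G(y,u)$ minimal; then either $w_2\notin P$ (done) or the prefix of $P$ up to the vertex $a$ just before $w_2$ lies in $X$ with $d_G(a,u)<d_G(y,u)$, so $a$ is connected to $w_1$ by minimality and hence so is $y$, unless $a=y$ itself, i.e.\ $y$ is adjacent to $w_2$ and $w_2$ is its only neighbor closer to $u$ -- that residual subcase is the one that truly needs a separate structural argument and is where your detour idea would have to do real work.
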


\begin{center}
\includegraphics[scale=0.45]{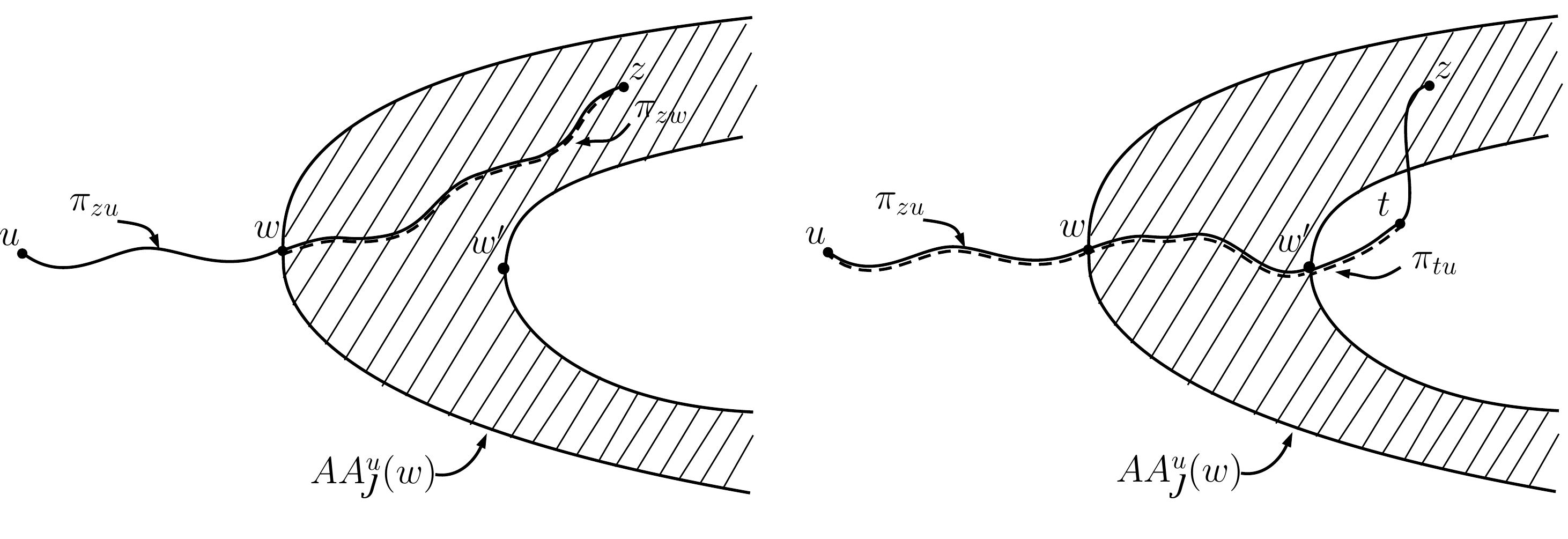}
\end{center}
\begin{proof}

 Let $w=w_1$ and $w'=w_2$. First we see that $AA^u_J(w)$ is connected. Given $z\in AA^u_J(w)$ let $\pi_{zu}$ be a shortest path from $z$ to $u$. By the hypothesis $\pi_{zu}$ passes through $w$ and it is not hard to see that the subpath $\pi_{zw}$ from $\pi_{zu}$ connecting $z$ with $w$ is contained in $A^u_J(w)$ 
 Therefore, if $\pi_{zw}$ is contained in $AA^u_J(w)$ then $z$ and $w$ are connected by a path inside $AA^u_J(w)$, and we are done. Otherwise, if $\pi_{zw}$ is not contained in $AA^u_J(w)$, then there exists $t$ inside $\pi_{zw}$ verifying $t \in A^u_J(w')$. In such case, the subpath $\pi_{tu}$ from $\pi_{zu}$ connecting $t$ with $u$ is a shortest path. But since $t\in A^u_J(w')$ then $\pi_{tu}$ uses either $e_1(w')$ or $e_2(w')$. Therefore, $\pi_{zu}$ uses $e_1(w')$ or $e_2(w')$, too. 

 To sum up, either there exists a shortest path inside $AA^u_J(w)$ connecting $z$ and $w$ or, otherwise, every shortest path $\pi_{zu}$ uses either  $e_1(w')$ or $e_2(w')$ and then $z\in A^u_J(w')$, a contradiction. Therefore, $AA^u_J(w)$ is connected.

 Finally, if $A^u_J(w) \setminus \left\{ w \right\}$ is connected then $\left(Z \cup \left\{w \right\} \right)\setminus A^u_J(w') = AA^u_J(w)$ is connected by our previous result. Otherwise $A^u_J(w) \setminus \left\{ w \right\}$ has two distinct connected components: $A^u_{1,J}(w)$ and $A^u_{2,J}(w)$. In this last case, if $w' \in A^u_{i,J}(w)$ then $A^u_J(w') \subseteq A^u_{i,J}(w)$. If $(A^u_{i,J}(w) \cup \left\{w \right\}) \setminus A^u_J(w')$ was not connected then $\left(A^u_{i,J}(w)\setminus A^u_J(w') \right) \cup \left( \left\{w \right\} \cup A^u_{3-i,J}(w) \right)=AA^u_J(w)$ would not be connected, because $A^u_{i,J}(w),A^u_{3-i, J}(w)$ are connected and $bridges(A^u_{i,J}(w),A^u_{3-i,J}(w))=\emptyset$.

\end{proof}

\begin{lemma}
\label{lem:inclusion} (Inclusion Lemma)
If $v' \in A_{i,J}^u(v)$ then $A^u_J(v') \subseteq A_{i,J}^u(v)$.
\end{lemma}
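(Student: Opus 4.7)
The plan is to unpack what membership in $A^u_{i,J}(v)$ requires and verify both conditions for an arbitrary $y\in A^u_J(v')$. Recall that $y\in A^u_{i,J}(v)$ means (1) every shortest $u$-to-$y$ path uses some edge of $J(v)$, i.e.\ $y\in A^u_J(v)$, and (2) there exists a shortest $u$-to-$y$ path that uses the specific edge $e_i(v)$. The case $y=v'$ is immediate from the hypothesis $v'\in A^u_{i,J}(v)$, so assume $y\neq v'$.

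The crucial structural observation I would extract first is that membership in $A^u_J(v')$ \emph{funnels} shortest paths through $v'$. Indeed, any shortest $y$-to-$u$ path uses an edge $e_j(v')=(v',v'_j)\in J(v')$, and by the standing convention $d_G(v'_j,u)=d_G(v',u)+1$, so $v'$ must be the endpoint of that edge that appears earlier on the path. Consequently every shortest $y$-to-$u$ path factors through $v'$, and in particular $d_G(y,u)=d_G(y,v')+d_G(v',u)$.

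With this in hand, condition (1) follows: take any shortest $y$-to-$u$ path and look at its subpath from $v'$ to $u$; this subpath is a shortest $v'$-to-$u$ path, and since $v'\in A^u_J(v)$, it uses some edge of $J(v)$, so the full path does too. For condition (2), concatenate any shortest $y$-to-$v'$ path with a shortest $v'$-to-$u$ path that uses $e_i(v)$ (such a path exists by $v'\in A^u_{i,J}(v)$); the identity $d_G(y,u)=d_G(y,v')+d_G(v',u)$ guarantees the concatenation is itself shortest, and it uses $e_i(v)$ by construction. Hence $y\in A^u_{i,J}(v)$, completing the inclusion.

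I do not anticipate any real obstacle: the statement is essentially a definitional chase, and the only delicate point is the funneling observation, which is forced by the ``all shortest paths'' clause in the definition of $A^u_J(\cdot)$ together with the orientation convention $d_G(v'_j,u)=d_G(v',u)+1$ built into the $A^u_{i,J}$ definition.
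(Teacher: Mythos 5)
Your proof is correct and follows the same two-step argument as the paper: first, concatenating a shortest $y$--$v'$ path with a shortest $v'$--$u$ path through $e_i(v)$ to show one shortest $y$--$u$ path uses $e_i(v)$, and second, using the "funneling through $v'$" property of $A^u_J(v')$ to show every shortest $y$--$u$ path has a shortest $v'$--$u$ subpath and hence uses an edge of $J(v)$. The appeal to the orientation convention is not strictly needed for the funneling step (any shortest $y$--$u$ path already visits $v'$ simply because it crosses an edge incident to $v'$), but this does not affect correctness.
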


\begin{center}
\includegraphics[scale=0.7]{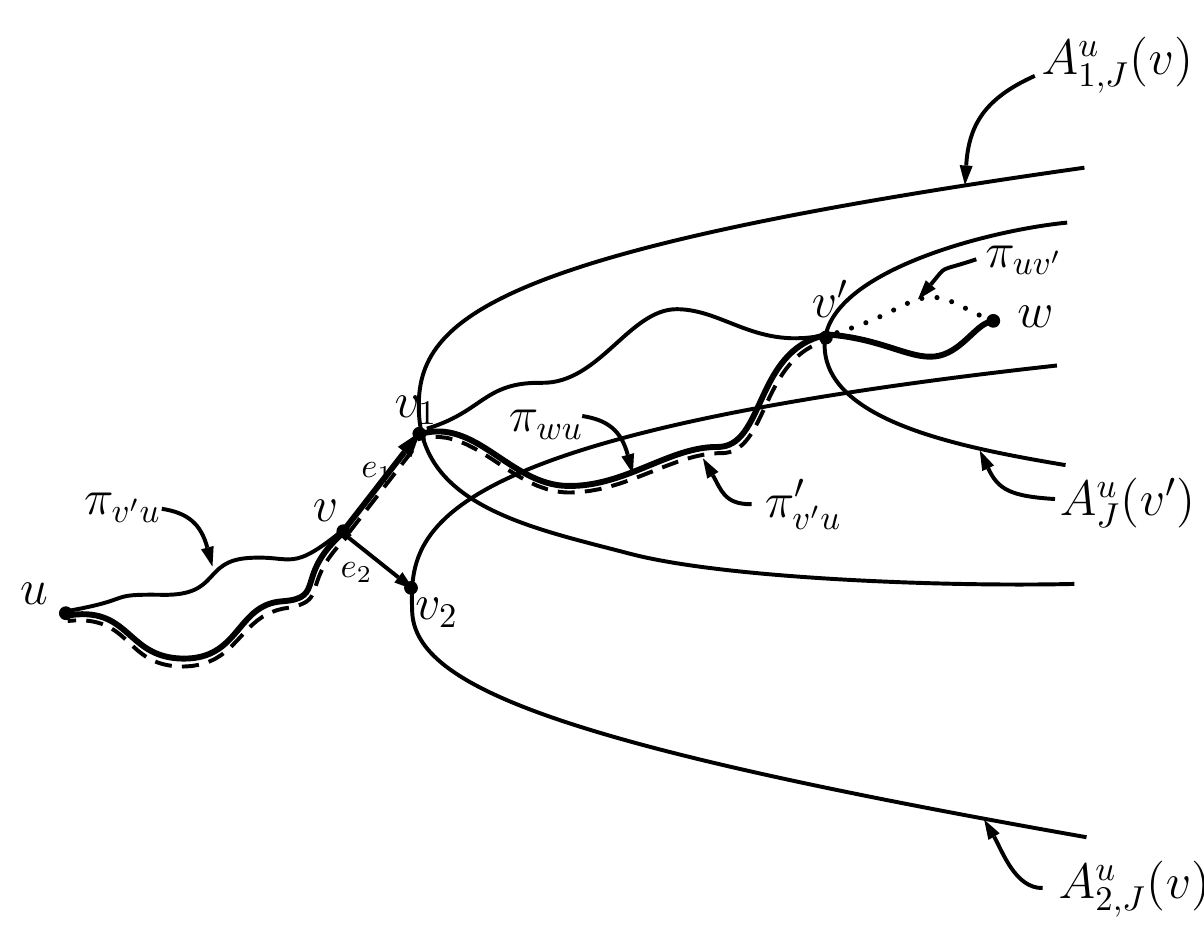}
\end{center}

\begin{proof}
 Let $w \in A^u_J(v')$. Since $v'\in A_{i,J}^u(v)$ by hypothesis there exists a shortest path $\pi_{v'u}$ between $v'$ and $u$ using the edge $e_i(v)$. Now let $\pi_{wv'}$ be any shortest path connecting $w$ with $v'$. Then it is not hard to see that $\pi_{wv'}-\pi_{v'u}$ is a shortest path between $w$ and $u$ using the edge $e_i(v)$. Therefore it only remains to show that any other shortest length path between $w$ and $u$ uses either $e_1(v),e_2(v)$.

 Indeed, let $\pi_{wu}$ be any shortest path between $w$ and $u$. Since $w\in A^u_J(v')$ then $\pi_{wu}$ goes through $v'$. Then, consider the subpath $\pi_{v'u}' = \pi_{wu}(v',u)$ which is a shortest path because it is a subpath of a shortest path. Clearly $\pi_{v'u}'$ uses $e_1(v)$ or $e_2(v)$. Then clearly $\pi_{wu}$ uses $e_1(v)$ or $e_2(v)$, and that is all we wanted to see. 
\end{proof}

In the following result we study the cardinality in $H$ of the $AA$ sets for the special case when there exists a certain connection between the corresponding $AA$ set and its complementary. The basic idea is that, when performing the deviation that consists in deleting the edges from $J(v)$ and buying a link to $u$, any such connection or bridge can be used to reach the nodes from the $AA$ set. If this connection or bridge is near to $v$ then clearly the extra distance that we are using when performing such deviation is small so, under certain technical conditions, we would obtain a negative cost difference. Therefore, the extra distance is not small. On the other hand, in the special case that we are considering, we can apply the Connectivity lemma and see that there exist a path contained in the $AA$ set between $v$ and the endpoint of the corresponding connection or bridge. Therefore, from here the result, because the size of the $AA$ set is lower bounded by the length of this path, which corresponds to the extra distance mentioned previously, which was not small.

\begin{proposition}\label{prop:simple}
Let $\tau = w_0-w_1-w_2$ be any $2-$path in $H_{u,J}$ and $Z$ be the connected component from $A^u_J(w_1)\setminus \left\{w_1 \right\}$ to which $w_2$ belongs. For any positive value $K$, if $d_G(u,w_1) \geq 1 + \frac{4\K \alpha}{\alpha-n}$ and there exists $xy\in bridges(Z \cup \left\{ w_1\right\} \setminus A^u_J(w_2),(Z\cup \left\{w_1 \right\})^c;w_1)$ then $|H(AA^u_J(w_1))| \geq \K$. 
\end{proposition}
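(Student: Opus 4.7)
The plan is to argue by contradiction: assume $|H(AA_J^u(w_1))| < K$ and derive a strictly negative cost difference $\Delta C < 0$ for the deviation in which $w_1$ deletes the two edges of $J(w_1)$ and buys a link to $u$, contradicting that $G$ is a Nash equilibrium.

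First, I would use the Connectivity Lemma (Lemma~\ref{lem:connex}) to show that $H(AA_J^u(w_1))$ is a connected subgraph of $H$ on fewer than $K$ vertices, so that its $H$-diameter is at most $K-2$. Concretely, $S := (Z \cup \{w_1\}) \setminus A_J^u(w_2)$ is connected in $G$ and contained in $AA_J^u(w_1)$; in the subcase where $A_J^u(w_1) \setminus \{w_1\}$ splits into two disjoint components and $Z$ is only one of them, the other component $A_{3-i_0,J}^u(w_1)$ also lies in $AA_J^u(w_1)$ (by the Inclusion Lemma) and is joined to $S$ through $w_1$, so $AA_J^u(w_1)$ is connected in $G$ in every case. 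Projecting $G$-paths into $V(H)$ by shortcutting at articulation points (using that $T(v) \subseteq AA_J^u(w_1)$ for every $v \in H(AA_J^u(w_1))$) transfers connectivity to $H$. Since $w_1, x \in H(AA_J^u(w_1))$, this forces $d_G(w_1,x) \leq d_H(w_1,x) \leq K-2$.

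Second, Remark~\ref{rem:buyingr} applied with $v = w_1$, $w = u$ and the hypothesis $d_G(u,w_1) \geq 1 + \frac{4K\alpha}{\alpha-n}$ yields $|A_J^u(w_1)| \leq \frac{\alpha}{d_G(u,w_1)-1} \leq \frac{\alpha-n}{4K}$. Formula~(\ref{eq:formula}) applied to the deviation above, together with $D_G(u) \leq D_G(w_1)$ (since $u$ minimises $D_G(\cdot)$ on $V(H)$), yields $\Delta C \leq -\alpha + n + Res_G(w_1,u,J) \cdot |A_J^u(w_1)|$. The crux is to upper bound $Res_G(w_1,u,J)$ by $O(K)$ using the first step. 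In the disjoint-sets subcase, Proposition~\ref{prop:formula2} applies: I take $x_1 := x$ and any bridge endpoint $x_2$ out of the other component $A_{3-i_0,J}^u(w_1) \subseteq AA_J^u(w_1)$; since $x_1,x_2 \in H(AA_J^u(w_1))$, we have $d_G(w_1,x_1), d_G(w_1,x_2) \leq K-2$, hence $Res_G \leq 2(K-2)$. In the connected-sets subcase, Proposition~\ref{prop:formula1} applies with the same $x$: the first summand $2d_G(v_1,x) \leq 2(K-1)$, and $l = d_{A_J^u(w_1) \setminus \{w_1\}}(v_1,v_2)$ is bounded by combining the bridge $xy$ with the short $H$-path inside $H(AA_J^u(w_1))$, treating separately the subcases where $v_1$ or $v_2$ lies in $A_J^u(w_2)$.

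Combining the ingredients, $\Delta C \leq -\alpha + n + 2(K-2) \cdot \frac{\alpha-n}{4K} \leq -\alpha + n + \frac{\alpha-n}{2} = -\frac{\alpha-n}{2} < 0$, contradicting the \NE condition and hence proving $|H(AA_J^u(w_1))| \geq K$. The main obstacle I expect is tightly bounding $l$ in the connected-sets subcase: the natural shortcut from $v_1$ to $v_2$ passes through $w_1$, which is forbidden inside $A_J^u(w_1) \setminus \{w_1\}$, so one must carefully combine the bridge $xy$ with the connectivity of $S$ to construct an admissible detour, handling separately the cases in which $v_1$ or $v_2$ lies inside $A_J^u(w_2)$.
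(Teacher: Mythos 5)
Your high-level plan coincides with the paper's: argue by contradiction with $|H(AA^u_J(w_1))| < \K$, use the Connectivity Lemma to deduce that $H(AA^u_J(w_1))$ is connected in $H$ and hence $d_G(w_1,x) < \K$, plug into formula~(\ref{eq:formula}) via Propositions~\ref{prop:formula1}--\ref{prop:formula2} and Remark~\ref{rem:buyingr}, and derive $\Delta C < 0$. Your treatment of the disjoint-components case (Proposition~\ref{prop:formula2}) is essentially the paper's case~(ii) and is sound.

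However, there is a genuine gap precisely where you flag your own uncertainty: bounding $l = d_{A^u_J(w_1)\setminus\{w_1\}}(v_1,v_2)$ in the connected-sets subcase. Your proposed detour ``combining the bridge $xy$ with the short $H$-path inside $H(AA^u_J(w_1))$'' does not produce an admissible path, because $y \in (Z\cup\{w_1\})^c = (A^u_J(w_1))^c$ in this case, so any walk through $xy$ leaves $A^u_J(w_1)\setminus\{w_1\}$ and cannot witness a finite value of $l$ for Proposition~\ref{prop:formula1}. Nor does the small diameter of $H(AA^u_J(w_1))$ directly help: the Connectivity Lemma only guarantees connectivity of $AA^u_J(w_1)$ \emph{including} $w_1$, and removing $w_1$ (which you must, since $l$ is a distance in $A^u_J(w_1)\setminus\{w_1\}$) may disconnect the two sides $A^u_{1,J}(w_1)$ and $A^u_{2,J}(w_1)$. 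The paper closes this with a different device: it locates a minimal ``meeting point'' --- a node $z$ in $A^u_{1,J}(w_1)\cap A^u_{2,J}(w_1)$ at minimum distance $s$ from $u$, or, if the intersection is empty, a minimum-distance bridge $z_1z_2$ between the two sides --- and constructs two shortest paths $\pi_1',\pi_2'$ of equal length $s-r-1$ from $w_1$ to (near) that point, one inside each $A^u_{i,J}(w_1)$. The Inclusion Lemma then places one of $\pi_1',\pi_2'$ entirely inside $H(AA^u_J(w_1))$, so $s-r-1 < \K$, giving $l \leq 2(s-r-1) < 2\K$ (or $2(s-r-1)+1$ in the bridge subcase) via concatenation through $z$ (or $z_1z_2$). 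That equal-length two-path construction is the missing idea in your sketch; without it the contradiction in the connected case does not close.

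A minor additional remark: your final numeric display uses $Res_G \leq 2(\K-2)$, which applies only to the disjoint case. In the connected case the paper obtains $Res_G < 4\K$ (from $2d_G(w_1,x) < 2\K$ plus $l < 2\K$), and the hypothesis $d_G(u,w_1)\geq 1 + \frac{4\K\alpha}{\alpha-n}$ is calibrated exactly to absorb that $4\K$.
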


\begin{proof}

First, we can assume wlog that $w_2 \in A^u_{1,J}(w_1)$. Now we suppose the contrary, that $|H(AA^u_J(w_1))| < \K$ and we reach a contradiction. We distinguish the following cases: 

(i) $A^u_J(w_1) \setminus \left\{ w_1\right\}$ is connected and $A^u_{1,J}(w_1),A^u_{2,J}(w_1) \neq \emptyset$.

\begin{center}
\includegraphics[scale=0.7]{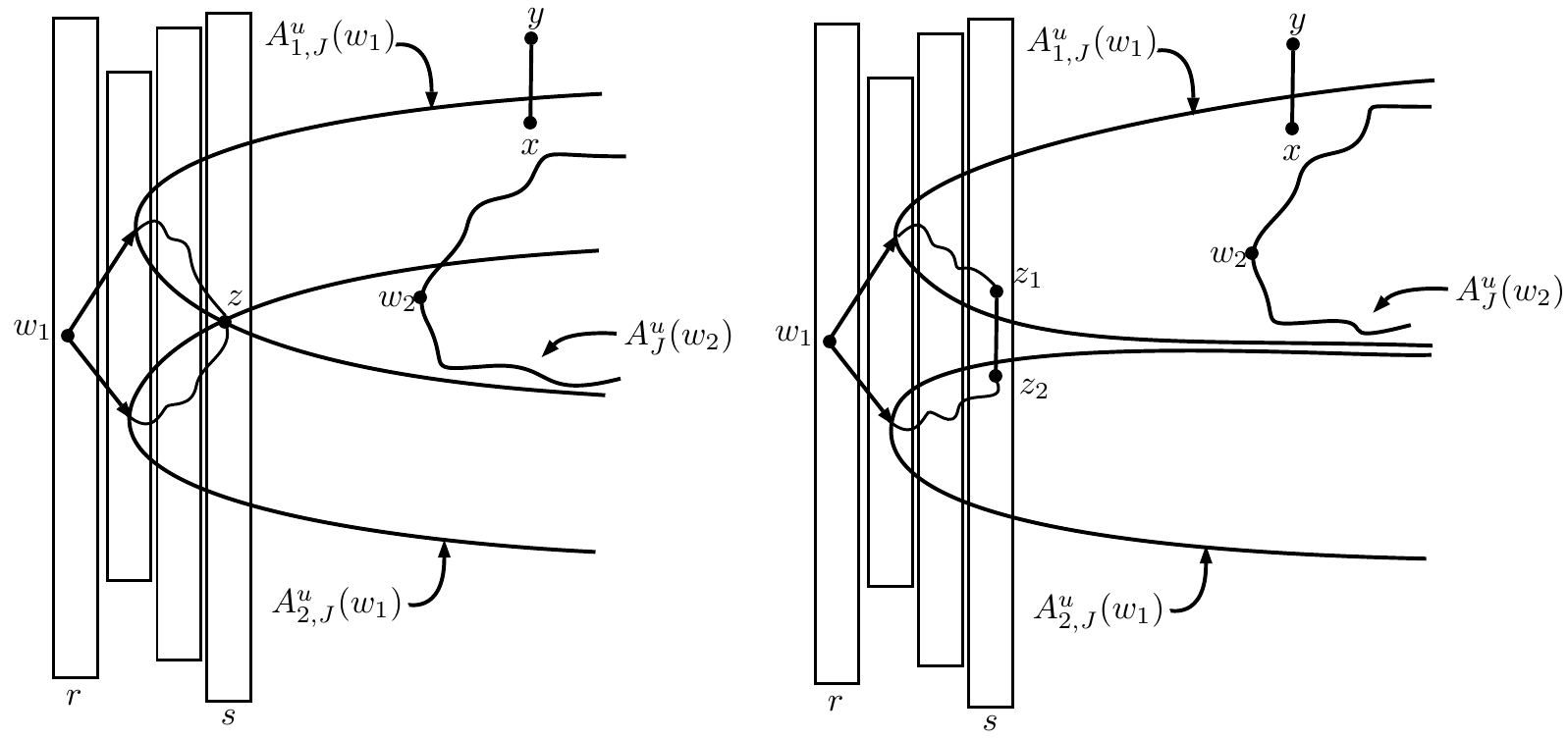}
\end{center}

We consider two subcases: (a) $A^u_{1,J}(w_1) \cap A^u_{2,J}(w_1) \neq \emptyset$ and (b) $A^u_{1,J}(w_1) \cap A^u_{2,J}(w_1) = \emptyset$ but $bridges(A^u_{1,J}(w_1), A^u_{2,J}(w_1)) \neq \emptyset$. If it holds the subcase $(a)$, let $s$ be the minimum subindex for which there exists a distance set $A_{s,H}(u)$ containing some node from $A^u_{1,J}(w_1) \cap A^u_{2,J}(w_1)$ and pick any $z \in A^u_{1,J}(w_1) \cap A^u_{2,J}(w_1) \cap A_{s,H}(u)$. If it holds subcase (b), first notice that both two endpoints from any given bridge between $A^u_{1,J}(w_1), A^u_{2,J}(w_1)$ must be equidistant from $u$. Then, take $s$ the minimum subindex for which there exists a distance set $A_{s,H}(u)$ containing a bridge between $A^u_{1,J}(w_1),A^u_{2,J}(w_1)$ and take $z_1z_2$ any such bridge with $z_1 \in A_{s,H}(u) \cap A^u_{1,J}(w_1)$ and $z_2 \in A_{s,H}(u) \cap A^u_{2,J}(w_1)$.

Then, for the subcase (a), we can obtain two distinct shortest paths of length $s-r$ connecting $z$ with $w$ one contained in $\left\{w_1 \right\} \cup A^u_{1,J}(w_1)$ and the other one contained in $\left\{ w_1 \right\} \cup A^u_{2,J}(w_1)$ (notice that we are using that $z$ belongs simultaneously to $A^u_{1,J}(w_1)$ and $A^u_{2,J}(w_1)$). By removing the node $z$ from $\pi_1,\pi_2$ we obtain two smaller subpaths $\pi_1',\pi_2'$, respectively. For the subcase (b), we can obtain two distinct shortest paths of length $s-r$ connecting $z_1$ with $w_1$ and $z_2$ with $w_1$ contained in $\left\{w_1 \right\} \cup A^u_{1,J}(w_1)$ and $\left\{ w_1 \right\} \cup A^u_{2,J}(w_1)$, respectively. Call such paths $\pi_1'$ and $\pi_2'$, respectively.

In both subcases (a) and (b), if $w_2 \in A^u_{1,J}(w_1)$ then $A^u_J(w_2) \subseteq A^u_{1,J}(w_1)$ by the Inclusion lemma so that $\pi_2'$ is contained in $H(AA^u_J(w_1))$. Otherwise $w_2 \in A^u_{2,J}(w_1)$ and then $A^u_J(w_2) \subseteq A^u_{2,J}(w_1)$ by the Inclusion lemma so that $\pi_1'$ is contained in $H(AA^u_J(w_1))$. Now, both subpaths $\pi_1',\pi_2'$ have length $s-r-1$ in case (a) and length $s-r$ in case (b). Therefore, if $s-r-1 \geq \K$ we are done in both cases (a) and (b). Otherwise, the minimum length $l$ of any shortest path contained in $A^u_J(w_1) \setminus \left\{ w_1 \right\}$ connecting the two endpoints of $e_1(w_1),e_2(w_1)$ distinct than $w_1$ is upper bounded by $2(s-r-1) < 2\K$ for the subcase (a) and $2(s-r-1)+1 < 2\K$ for the subcase (b). Then using the Connectivity lemma together with the fact that $|H(AA^u_J(w_1))|< \K$ we get $d_G(w_1,x) < \K$ in both subcases (a) and (b). Therefore, $Res_G(w_1,u,J) \leq l+2d_G(w_1,x)< 4\K$. Finally, using Remark \ref{rem:buyingr} together with Proposition \ref{prop:formula1}, if $\Delta C$ is the cost difference associated to the deviation in $w_1$ that consists in deleting $e_1(w_1),e_2(w_1)$ and buying a link to $u$, we get a contradiction: 

$$\Delta C < -\alpha + n + \frac{\alpha}{r-1} 4 \K \leq -\alpha+ n + \frac{\alpha}{\left(1+\frac{4\K \alpha}{\alpha-n} \right)-1}4\K = 0$$

\vskip 5pt
(ii) $A^u_J(w_1) \setminus \left\{ w_1\right\}$ has two connected components or $A^u_{2,J}(w_1) = \emptyset$. Let $I$ be the subset of subindices $i$ for which $A^u_i(w_1) \neq \emptyset$. Notice that either $I = \left\{1\right\}$ or $I = \left\{1,2\right\}$ depending whether $A^u_{2,J}(w_1) $ is empty or not, respectively. Then, for each $i \in I$, define $x_iy_i \in bridges(A^u_i(w_1)\setminus A^u_J(w_2),A^u_i(w_1)^c;w_1)$ minimising the distance $d_G(x_i,w_1)$. If for the contrary $|H(AA^u_J(w_1))| < \K$ then using the Connectivity lemma we get that $d_G(w_1,x_i) < \K$ for each $i\in I$. In this way, $Res_G(w_1,u,J) = 2\max_{i \in I}(d_G(x_i,w_1))<2\K$. Applying Proposition \ref{prop:formula2} together with Remark \ref{rem:buyingr}, if $\Delta C$ is the cost difference associated to the the deviation that consists in deleting $e_1(w_1),e_2(w_2)$ and buying a link to $u$, we get: 

$$\Delta C \leq -\alpha+n+\frac{\alpha}{r-1} 2d_G(w_1,x)< -\alpha+n+\frac{\alpha}{\left(1+\frac{4\K \alpha}{\alpha-n} \right)-1}2\K < 0$$

\end{proof}

\begin{lemma}\label{lem:tec}
Let  $\pi = v_0- v_1-v_2-v_{3}$ be a $2-$path in $H_{u,J}$ and $Z_i$ the connected component from $A^u_J(v_i) \setminus \left\{ v_i\right\}$ to which $v_{i+1}$ belongs for $i=1,2$. For any positive value $K$, if $d_G(u,v_1) \geq 1 + \frac{4 \K \alpha}{\alpha -n}$, $bridges(Z_i \cup \left\{ v_i \right\} \setminus A^u_J(v_{i+1}),(Z_i \cup \left\{ v_i \right\})^c;v_i)=\emptyset$ for $i=1,2$ and $|H(AA^u_J(v_1))|< \K$ then $|Z_1\cup \left\{ v_1\right\} \setminus A_J^u(v_3) | \geq |A_J^u(v_3)|$.
\end{lemma}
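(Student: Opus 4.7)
I would argue by contradiction: suppose $|Z_1 \cup \{v_1\} \setminus A_J^u(v_3)| < |A_J^u(v_3)|$ and construct a profitable deviation for $v_1$ that contradicts the Nash equilibrium condition.

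The first step is to extract the shortest-path structure forced by the bridge-empty hypotheses. Together with the defining property of the $A_J^u(\cdot)$ sets (every shortest path from $A_J^u(v_i)$ to $u$ traverses $J(v_i)$), these hypotheses should give the clean decomposition $d_G(v_1,z) = d_G(v_1,v_3) + d_G(v_3,z)$ for every $z \in A_J^u(v_3)$, and likewise $d_G(v_1,z) = d_G(v_1,v_2) + d_G(v_2,z)$ for $z \in A_J^u(v_2)$. In particular $d_G(v_1,v_3) = d_G(v_1,v_2) + d_G(v_2,v_3)$ with both summands at least $1$. I would then use the hypothesis $|H(AA_J^u(v_1))| < K$ together with the Connectivity Lemma (Lemma \ref{lem:connex}) applied to the two 2-paths $(v_0,v_1,v_2)$ and $(v_1,v_2,v_3)$ in $H_{u,J}$ to bound $d_G(v_1,v_3)$ linearly in $K$.

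The decisive step is the swap deviation at $v_1$: sell the edge $e_i(v_1) \in J(v_1)$ leading into $Z_1$ and buy a fresh edge $(v_1,v_3)$. The edge count is preserved, so $\Delta C$ has the sign of the change in $D_G(v_1)$. On the gain side, for every $z \in C := A_J^u(v_3)$, the decomposition of Step 1 gives new distance $1 + d_G(v_3,z)$ versus old $d_G(v_1,v_3) + d_G(v_3,z)$, i.e.\ an exact gain of $d_G(v_1,v_3) - 1$. On the loss side, for every $z \in A_J^u(v_2) \setminus A_J^u(v_3) \subseteq B := Z_1 \cup \{v_1\} \setminus A_J^u(v_3)$, the analogous decomposition yields a loss of at most $1 + d_G(v_3,v_2) - d_G(v_1,v_2) = d_G(v_1,v_3) - 2d_G(v_1,v_2) + 1 \leq d_G(v_1,v_3) - 1$, using $d_G(v_1,v_2) \geq 1$. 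For $z \in Z_1 \setminus A_J^u(v_2)$, a more delicate analysis exploiting the bridge-empty hypothesis for $i=1$ to confine detour paths inside $Z_1 \cup \{v_1\}$, together with the bound $|H(AA_J^u(v_1))| < K$ (which keeps such detours short), should extend the same per-node loss bound. Summing gives $\Delta C \leq (d_G(v_1,v_3) - 1)(|B| - |C|)$, which is strictly negative whenever $|B| < |C|$ and $d_G(v_1,v_3) \geq 2$, contradicting the Nash equilibrium condition.

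\textbf{Main obstacle.} The hardest part is establishing the loss bound uniformly over $B$, in particular for $z$ in the upper region $Z_1 \setminus A_J^u(v_2)$, where $d_G(v_1,z)$ does not decompose through $v_2$ and the naive detour via the new edge to $v_3$ can inflate to $d_G(v_1,v_3) + K$ before any cancellation. The bridge-empty hypothesis for $i=1$ pins the relevant paths inside $Z_1 \cup \{v_1\}$, which, combined with $|H(AA_J^u(v_1))| < K$, is what should collapse this loss back to $d_G(v_1,v_3) - 1$. If the uniform bound cannot be pushed all the way, a fallback would be to replace the swap by selling both edges of $J(v_1)$ and buying a link to $u$ directly, then absorb the slack using the hypothesis $d_G(u,v_1) \geq 1 + 4K\alpha/(\alpha-n)$ together with Remark \ref{rem:buyingr}; but the overall architecture of comparing gains from $C$ against losses from $B$ through a cost-preserving modification at $v_1$ should remain the same.
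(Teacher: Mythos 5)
Your high‑level architecture (sell/buy at $v_1$, then compare per‑node gains over $C=A^u_J(v_3)$ against per‑node losses over $B$) does match the paper's, but two essential pieces are missing, and your chosen deviation makes the accounting strictly harder than it needs to be.

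The step you did not identify is that one must first \emph{strengthen} the bridge‑emptiness hypothesis from $A^u_J(v_2)$ to $A^u_J(v_3)$: the paper shows, before running any deviation, that $bridges\bigl(Z_1\cup\{v_1\}\setminus A^u_J(v_3),\ (Z_1\cup\{v_1\})^c;\ v_1\bigr)=\emptyset$. This is obtained by decomposing $Z_1\cup\{v_1\}\setminus A^u_J(v_3)$ into three pieces $T_1,T_2,T_3$; the hypotheses kill bridges out of $T_1$ and $T_2$, and a bridge out of $T_3$ triggers Proposition~\ref{prop:simple} (this is exactly where the constraints $d_G(u,v_1)\geq 1+\tfrac{4K\alpha}{\alpha-n}$ and $|H(AA^u_J(v_1))|<K$ enter). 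Without this stronger statement, you cannot certify that the only loss nodes lie in $Z_1\setminus A^u_J(v_3)$: after your swap, nodes \emph{outside} $Z_1\cup\{v_1\}$ whose shortest path to $v_1$ happened to route through the severed port could lose distance, and there is no a priori bound on how many such nodes exist. Your proposal never brings these into the accounting, and the hypotheses as stated do not by themselves rule them out.

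The second issue is the choice of a ``big'' swap (sell an edge of $J(v_1)$, buy $(v_1,v_3)$). Even inside $Z_1\setminus A^u_J(v_2)$, the per‑node loss bound of $d_G(v_1,v_3)-1$ requires a finer analysis than the triangle inequality gives (one must track whether a node is reached through the same port of $J(v_1)$ that serves $v_3$, which becomes ambiguous when $A^u_J(v_1)\setminus\{v_1\}$ is connected and both bought edges enter $Z_1$) — you correctly flag this as the main obstacle, but you do not resolve it. The paper sidesteps all of this by performing a \emph{unit} swap: take a shortest $u$–$v_3$ path, let $z_2$ be $v_1$'s neighbour on it (an endpoint of $J(v_1)$) and $z_3$ the next node, and swap $(v_1,z_2)$ for $(v_1,z_3)$. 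Then every node of $A^u_J(v_3)$ gains exactly $1$, every node of $Z_1\setminus A^u_J(v_3)$ loses at most $1$ (re‑route via $z_3$–$z_2$), and, thanks to the strengthened bridge‑emptiness, no other node loses. The Nash condition then gives $|Z_1\setminus A^u_J(v_3)|\geq|A^u_J(v_3)|$ directly, with no distance decompositions, no Connectivity‑Lemma bound on $d_G(v_1,v_3)$, and no case split. Your fallback (sell both edges of $J(v_1)$ and buy a link to $u$) is the deviation behind Propositions~\ref{prop:formula1}–\ref{prop:formula2} and Proposition~\ref{prop:simple}; it yields a lower bound on $|H(AA^u_J(v_1))|$, not the cardinality comparison $|B|\geq|C|$ that this lemma asserts.
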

\begin{proof}
  First, notice that $bridges(Z_{1} \cup \left\{v_{1}\right\} \setminus A^u_J(v_{3});(Z_{1} \cup \left\{v_{1}\right\})^c;v_{1})= \emptyset$.  Assume the contrary and we reach a contradiction.

\begin{center}
\includegraphics[scale=0.6]{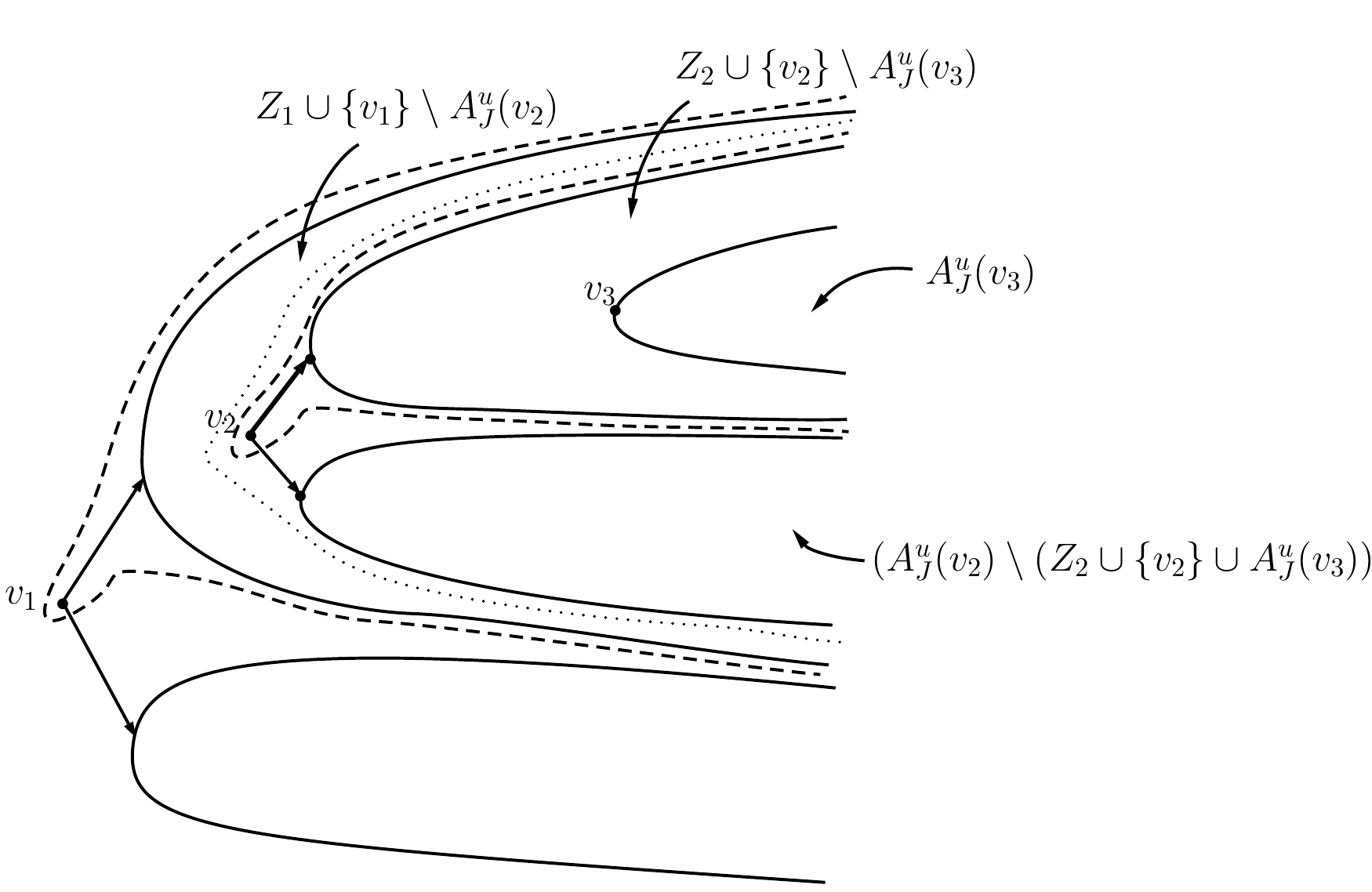}
\end{center}

 Let us suppose wlog that $v_{3} \in A^u_{1,J}(v_{2})$ and that there exists $xy \in bridges(Z_{1} \cup \left\{ v_{1}\right\} \setminus A^u_J(v_{3}),(Z_{1} \cup \left\{v_{1} \right\})^c; v_{1})$. Define $T_1 = Z_{1} \cup \left\{ v_{1} \right\} \setminus A^u_J(v_{2})$, $T_2=  Z_{2} \cup \left\{ v_{2} \right\} \setminus A^u_J(v_{3})$ and $T_3 =  (A^u_J(v_{2}) \setminus A^u_J(v_{3}))\setminus (Z_{2} \cup \left\{v_{2}\right\} \setminus A^u_J(v_{3})) = A^u_J(v_{2}) \setminus \left( Z_{2} \cup \left\{v_{2} \right\} \cup A^u_J(v_{3})\right)$ (notice that either $T_3 = \emptyset$ or $T_3 = A^u_{2,J}(v_{2})$ depending whether $A^u_J(v_{2})\setminus \left\{v_{2}\right\}$ is connected or not, respectively).

Observe that $Z_{1} \cup \left\{ v_{1}\right\} \setminus A^u_J(v_{3}) = T_1 \cup T_2 \cup T_3$. Then, since $bridges(Z_j \cup \left\{v_j \right\} \setminus A^u_J(v_{j+1}),(Z_{j} \cup \left\{ v_j\right\})^c; v_j) = \emptyset$ for $j=1,2$, then it cannot happen $x\in T_1$ or $x\in T_2$. Lastly, it cannot happen $x \in T_3$ because otherwise Proposition \ref{prop:simple} implies that $|H(AA^u_J(v_{1}))| \geq \K $, a contradiction with our assumptions. Therefore, we conclude that $bridges(Z_{1} \cup \left\{v_{1}\right\} \setminus A^u_J(v_{3});(Z_{1} \cup \left\{v_{1}\right\})^c;v_{1})= \emptyset$, as we wanted to see.

Next, consider any shortest path $\pi$ connecting $u$ with $v_{3}$ and let $\pi' = z_1-...-z_k$ be the subpath from $\pi$ connecting $v_{1}$ with $v_{3}$. Then, for each $j=1,2$, $v_j \in \pi'$ and one of the two edges $e_1(v_j),e_2(v_j)$ must be contained in $\pi'$, too. In particular $(z_1,z_2)$ is an edge bought by $z_1$. Therefore, if $z_1$ swaps the edge $(z_1,z_2)$ for the edge $(z_1,z_3)$, since $bridges(Z_{1} \cup \left\{v_{1}\right\} \setminus A^u_J(v_{3});(Z_{1} \cup \left\{v_{1}\right\})^c;v_{1})= \emptyset$, then $z_1$ gets closer one unit to at least every node inside $A^u_J(v_{3})$ and gets one unit distance far from at most every node inside $Z_{1} \setminus A^u_J(v_{3})$. Imposing that $G$ is a \NE then we conclude that $|Z_{1} \setminus A^u_J(v_{3})| \geq |A^u_J(v_{3})|$ as we wanted to see.

\end{proof}

Now combining these results we prove the main conclusion of this subsection: 

\begin{theorem}
\label{prop:2pathH'}
Let $\pi = v_0- v_1-v_2-...-v_{2l+1}$ be a $2-$path in $H_{u,J}$ with $l \geq 1+\log \left( \frac{9\alpha}{\alpha-n}\right)$. For any positive value $K$, if $d_G(u,v_1) \geq 1 + \frac{4 \K \alpha}{\alpha -n}$, and $d_H > 36 \K + 18 $ then $|H(AA^u_J(v_1))|+...+|H(AA^u_J(v_{2l}))| \geq \K$.
\end{theorem}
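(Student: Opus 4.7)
I argue by contradiction. Suppose $\sum_{i=1}^{2l}|H(AA^u_J(v_i))| < K$, so that $|H(AA^u_J(v_i))| < K$ for each $i\in\{1,\ldots,2l\}$. The strategy is to combine the contrapositive of Proposition \ref{prop:simple} (which forces every relevant bridge set to be empty) with iterated application of Lemma \ref{lem:tec} (which doubles the sizes of the nested $A^u_J(\cdot)$ sets), and then clash the resulting exponential lower bound on $|A^u_J(v_1)|$ with the Remark \ref{rem:buyingr} upper bound.

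First I apply the contrapositive of Proposition \ref{prop:simple}. For each $i\in\{1,\ldots,2l\}$ the subpath $v_{i-1}-v_i-v_{i+1}$ is a $2$-path in $H_{u,J}$, and because the nodes of $\pi$ lie successively farther from $u$, $d_G(u,v_i)\ge d_G(u,v_1)\ge 1+4K\alpha/(\alpha-n)$; were the bridge set nonempty, Proposition \ref{prop:simple} would give $|H(AA^u_J(v_i))|\ge K$, contradicting our assumption. Hence $bridges(Z_i\cup\{v_i\}\setminus A^u_J(v_{i+1}),(Z_i\cup\{v_i\})^c;v_i)=\emptyset$ for every such $i$.

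Next I iterate Lemma \ref{lem:tec} on the quadruples $v_{i-1}-v_i-v_{i+1}-v_{i+2}$ for $i=1,\ldots,2l-1$: its three hypotheses (distance condition, empty bridges at $v_i$ and $v_{i+1}$, and $|H(AA^u_J(v_i))|<K$) are all in force, yielding $|Z_i\cup\{v_i\}\setminus A^u_J(v_{i+2})|\ge|A^u_J(v_{i+2})|$. Since $A^u_J(v_{i+2})\subseteq Z_i\subseteq A^u_J(v_i)$, this rearranges to $|A^u_J(v_i)|\ge 2|A^u_J(v_{i+2})|$. Chaining $l$ such inequalities along odd indices and using $|A^u_J(v_{2l+1})|\ge 1$,
\[ |A^u_J(v_1)|\ge 2|A^u_J(v_3)|\ge 4|A^u_J(v_5)|\ge\cdots\ge 2^l|A^u_J(v_{2l+1})|\ge 2^l\ge 18\alpha/(\alpha-n), \]
the last step using $l\ge 1+\log(9\alpha/(\alpha-n))$.

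Finally, Remark \ref{rem:buyingr} applied to the deviation in which $u$ buys a direct link to $v_1$, together with the hypothesis $d_G(u,v_1)\ge 1+4K\alpha/(\alpha-n)$, yields $|A^u_J(v_1)|\le (\alpha-n)/(4K)$, which must contradict the exponential lower bound above. Sharpening this contradiction uses Lemma \ref{lem:sac}: the bound $\sum_i|H(AA^u_J(v_i))|<K$ keeps the $H$-node count of $H(A^u_J(v_1)\setminus A^u_J(v_{2l+1}))$ below $K$, the Connectivity Lemma supplies the needed connectivity, and $d_H>36K+18$ converts this into a tight tree-weight bound on $|A^u_J(v_1)|$ that is strictly below the lower bound from the doubling step. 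The main obstacle is exactly this conversion between $G$-node counts (native to Lemma \ref{lem:tec} and Remark \ref{rem:buyingr}) and $H$-node counts (appearing in the conclusion), and ensuring the constants align so that the doubling's exponential growth dominates; the hypothesis $d_H>36K+18$ is calibrated precisely to give this slack.
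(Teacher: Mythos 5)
Your overall architecture matches the paper's almost exactly: argue by contradiction, use the contrapositive of Proposition \ref{prop:simple} to empty all the bridge sets, iterate Lemma \ref{lem:tec} to get a doubling chain on the nested $A$-sets, and finish with Lemma \ref{lem:sac}. Your doubling step $|A^u_J(v_i)|\ge 2|A^u_J(v_{i+2})|$ is a valid rearrangement of the lemma's conclusion (using $A^u_J(v_{i+2})\subseteq Z_i\subseteq A^u_J(v_i)$), and is essentially equivalent to the paper's inductive claim $|Z_1\setminus A^u_J(v_3)|\ge 2^{l-1}|A^u_J(v_{2l+1})|$.

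However, there is a genuine gap in the final step, and it is not just a matter of "sharpening": seeding the doubling chain with the trivial bound $|A^u_J(v_{2l+1})|\ge 1$ does not yield a contradiction. Following your route, you obtain $|A^u_J(v_1)\setminus A^u_J(v_{2l+1})|\gtrsim 2^{l-1}$, and Lemma \ref{lem:sac} (applied to the $H$-restriction of this set, with $|H(\cdot)|<K$ and using $d_H>36K+18$) gives the upper bound $\le \alpha/(d_H/2-2K-1)$. Combining, you only get $l \lesssim 1+\log\bigl(\alpha/(16K+8)\bigr)$, and the hypothesis $l\ge 1+\log\bigl(9\alpha/(\alpha-n)\bigr)$ does not contradict this unless $\alpha-n$ is bounded by a function of $K$ alone — which it need not be. Likewise, your Remark \ref{rem:buyingr} bound $|A^u_J(v_1)|\le(\alpha-n)/(4K)$ compared against $2^l\ge 18\alpha/(\alpha-n)$ only yields $72K\alpha\le(\alpha-n)^2$, which is no contradiction for large $\alpha-n$.

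The missing ingredient is the non-trivial lower bound $|A^u_J(v_{2l+1})|\ge(\alpha-n)/(4d_H)$, which the paper derives from the deviation formulae (Propositions \ref{prop:formula1} and \ref{prop:formula2}) applied at $v_{2l+1}$; you never invoke these. This seed is crucial because it carries a $1/d_H$ factor that cancels against the $1/d_H$ in Lemma \ref{lem:sac}'s upper bound $\alpha/(d_H/2-2K-1)$: one then gets $2^{l-1}\le\frac{4d_H}{d_H/2-2K-1}\cdot\frac{\alpha}{\alpha-n}$, and $d_H>36K+18$ is exactly what makes the first factor $<9$, giving $l<1+\log\bigl(9\alpha/(\alpha-n)\bigr)$ — the clean contradiction. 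Without that seed, the exponential growth starts from too small a base and the calibration never closes.
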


\begin{proof}

Let  $Z_{i}$ be the connected component of $A^u_J(v_i) \setminus \left\{ v_i \right\}$ to which $v_{i+1}$ belongs. If there exists a bridge $xy \in bridges(Z_i \cup \left\{v_i \right\} \setminus A^u_J(v_{i+1}),(Z_{i} \cup \left\{ v_i\right\})^c; v_i)$ for some $i$ with $1 \leq i \leq 2l$ then by Proposition \ref{prop:simple} $|H(AA^u_J(v_i))| \geq \K$. 

Next, we examine the complementary case, which is that for every $i=1,...,2l$ $bridges(Z_i \cup \left\{v_i \right\} \setminus A^u_J(v_{i+1}),(Z_{i} \cup \left\{ v_i\right\})^c; v_i) = \emptyset$. We suppose the contrary, that $|H(AA^u_J(v_1))|+...+|H(AA^u_J(v_{2l}))| < \K$ and we reach a contradiction. In particular, for every $0<i<l$ we have that $|H(AA^u_J(v_{2i-1}))| < \K$. Therefore, we can apply Lemma \ref{lem:tec} so that in this situation $|Z_{2i-1} \cup \left\{ v_{2i-1}\right\} \setminus A^u_J(v_{2i+1})| \geq |A^u_J(v_{2i+1})|$ for $i=1,...,2l$.  

\vskip 5pt

Then by induction on $j \geq 1$ we obtain that  $|Z_{2i-1} \setminus A^u_J(v_{2i+1})| \geq 2^{j-1}|A^u_J(v_{2i+2j-1})|$ so that, in particular, $|Z_{1} \setminus A^u_J(v_{3})| \geq 2^{l-1}|A^u_J(v_{2l+1})|$. From here, we can use Propositions \ref{prop:formula1} and  \ref{prop:formula2} to deduce that $|A^u_J(v_{2l+1})| \geq \frac{\alpha-n}{4d_H}$. Let $Z = Z_1 \setminus A^u_J(v_3)$. By the hypothesis $|H(Z)| \leq |H(AA^u_J(v_1))|+...+|H(AA^u_J(v_{2l}))| < \K$. Then since $d_H \geq 4\K+3$, we can use again Lemma \ref{lem:sac} to deduce that:  

$$2^{l-1} \frac{\alpha-n}{4d_H} \leq  \sum_{z \in V(Z)} |T(z)| \leq \frac{\alpha}{d_H/2-2\K-1}$$

Using the relation $d_H > 36 \K + 18$ we have that $l \leq  1+ \log \left( \frac{\alpha}{\alpha-n}\frac{4d_H}{d_H/2-2\K-1}\right) < 1+ \log \left(\frac{9\alpha}{\alpha-n}\right)$, a contradiction. Therefore, the assumption $|H(AA^u_J(v_1))|+...+|H(AA^u_J(v_{2l})) < \K$ is false and the conclusion now is clear.

\end{proof}


\subsection{The Average Degree and the Diameter of $H$}
\label{sec:avg}

As in the previous section, let $u$ be any node minimising the function $D_G(\cdot)$ over $V(H)$. Furthermore, let $e_1(v),e_2(v)$ be two edges bought by $v\in V^{\geq 2}(H)$ and $J$ the $2-$edge-covering of $H$ verifying $J(v) = \left\{ e_1(v),e_2(v)\right\}$ for every $v \in V^{\geq 2}(H)$.

We explained in the introduction of this section, that the problem of giving an improved upper bound for the term $deg^+(H)$ can be translated into the problem of giving an improved lower bound for the average $AA$-weight with respect $u,J$ of $H_{u,J}$. Since $H_{u,J}$ is a forest, the problem can be reduced to lower bounding the average $AA-$weight of any rooted tree $T$ from $H_{u,J}$. 

So, let $T$ be any rooted tree from $H_{u,J}$. We have seen in Section \ref{sec:lowerbound} that we can make the $AA-$weight of the leaves from $T$ as large as we want as well as the sum of the $AA-$weights of the $2-$nodes of any $2-$path from $T$ large enough and far enough from $u$, provided that some technical conditions about the diameter of $H$ are fulfilled. In the next proposition we see how we can combine these two results in order to make the average $AA-$weight of $T$ as large as we want provided, again, that some technical conditions about the diameter of $H$ are met.

\begin{proposition}
\label{prop:average} For any positive value $\K$, there exists $d(\K, \alpha)$ such that for every $T$ from $H_{u,J}$, if $d_H > d(\K, \alpha)$, then $\frac{\sum_{v \in V(T)}|H(AA^u_J(v))|}{|V(T)|} \geq \K$ and $d(\K,\alpha)= O\left(K^2 \left( \frac{\alpha}{\alpha-n}\right)^2 \log \left( \frac{\alpha}{\alpha-n}\right) \right)$.
\end{proposition}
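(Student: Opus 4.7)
The plan is an averaging/chunking argument that combines Proposition~\ref{prop:leaf} (which gives a large $AA$-weight at every leaf of $T$) with Theorem~\ref{prop:2pathH'} (which gives a large total $AA$-weight along every sufficiently long $2$-path that stays far from $u$). Decompose $V(T)$ into three classes: the leaves, the branching nodes (outdegree $\geq 2$ in $T$), and the $2$-path internals (in- and out-degree $1$ in $T$). A standard rooted-tree count yields $B_T \leq L_T - 1$ and bounds the number of maximal $2$-paths of $T$ by $|\mathcal{P}| \leq 2L_T$, where $L_T$ and $B_T$ denote the numbers of leaves and branching nodes respectively.

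The key topological observation is that along any $2$-path $a_0 - a_1 - \cdots - a_{n+1}$ in $H_{u,J}$, the values $d_G(u, a_i)$ are strictly increasing in $i$: indeed $(a_{i-1}, a_i) \in E(H_{u,J})$ forces $a_i \in A^u_{k, J}(a_{i-1})$ for some $k$, and by the definition of $A^u_{k,J}$ this requires $d_G(u, a_i) \geq d_G(u, a_{i-1})+1$. Consequently, the internals at $G$-distance less than a chosen threshold $N$ from $u$ form a prefix of length at most $N$ in each $2$-path, and the remaining far suffix can be partitioned into non-overlapping blocks of $2l$ consecutive internals. Each such block extends to a $2$-path of length $2l+1$ in $H_{u,J}$ whose first internal lies at $G$-distance $\geq N$ from $u$, so Theorem~\ref{prop:2pathH'} applies and the block contributes total $AA$-weight at least $K''$.

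To close the argument, let $Q = \alpha/(\alpha - n)$ and choose $l = \Theta(\log Q)$ (as required by Theorem~\ref{prop:2pathH'}), $K'' = \Theta(l \K)$ so that each block contributes $\geq 2\K$ per internal node, $N = 1 + 4K''Q$, and finally $K' = \Theta(\K (N + l))$. Summing the contributions over $V(T)$ gives a total $AA$-weight of at least $K' L_T$ from the leaves, plus roughly $2 \K I - \Theta(\K(N+l)) L_T$ from the chunks (using $|\mathcal{P}| \leq 2L_T$ and the rounding loss of at most $2l$ internals per $2$-path; here $I$ is the total number of $2$-path internals). Since $|V(T)| \leq 2 L_T + I$, the calibrated choice of $K'$ makes this at least $\K |V(T)|$. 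Defining $d(\K,\alpha)$ as the larger of the hypotheses of Proposition~\ref{prop:leaf} with parameter $K'$ and Theorem~\ref{prop:2pathH'} with parameter $K''$, the former dominates and yields $d(\K,\alpha) = O(K' Q) = O(\K^2 Q^2 \log Q)$, matching the claim. The main subtlety is calibrating the relative orders of $K'$ and $K''$ so that the leaf bonus simultaneously absorbs the near-$u$ prefixes and the rounding losses across all $2$-paths, while the resulting $d(\K,\alpha)$ still sits at the claimed asymptotic order.
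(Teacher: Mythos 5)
Your proposal follows essentially the same route as the paper: decompose $V(T)$ into leaves, branching nodes, and $2$-path internals, apply Proposition~\ref{prop:leaf} to the leaves and Theorem~\ref{prop:2pathH'} to far-from-$u$ blocks of internals, then calibrate $K'$ and $K''$ so the leaf surplus absorbs the near-$u$ prefixes, rounding losses, and branching/leaf overhead. The only cosmetic difference is that you explicitly justify the ``distances to $u$ strictly increase along edges of $H_{u,J}$'' fact which the paper uses implicitly when it writes $|T_{\leq L}| \leq (L+1)|\overline{T}|$; the decompositions, the parameter choices, and the resulting asymptotic $d(\K,\alpha) = O\bigl(\K^2 (\alpha/(\alpha-n))^2 \log(\alpha/(\alpha-n))\bigr)$ all match.
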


\begin{proof}

First, we need to introduce some notation. We identify three distinct elements from $T$: the leaves, the interior nodes (the root together with the nodes $v$ satisfying $deg_T(v)>2$) and the $2-$nodes belonging to maximal $2-$paths joining either an interior node with an interior node or an interior node with a leaf. Define $\overline{T}$, $int(T)$ and $\Pi(T)$ the set of leaves, the set of interior nodes and the set of maximal $2-$paths from $T$, respectively. Also, given $L,l$ we can consider $T_{>L}$ ($T_{\leq L}$) the subset of nodes from $T$ at distance (in $G$) from $u$ greater than $L$ (smaller than or equal $L$). Then, every maximal $2-$path from $T$ containing all its nodes in $T_{>L}$ and containing $\lambda = ql+r$ with $q,r\geq 0$ interior $2-$nodes, i.e., the $2-$nodes distinct than the two endpoints, such that $0 \leq r < l$ can be split into $q$ smaller $2-$paths containing $l$ consecutive $2-$nodes and maybe an extra $2-$path containing at most $l-1$ consecutive $2-$nodes if $r > 0$. Furthermore, this partitioning can be done in such a way that all the $2-$nodes from $T_{>L}$ are covered. Finally, let $\Pi_{>L,l}(T)$ be the subset of all such $2-$paths of containing $l$ consecutive $2-$nodes and $\overline{\Pi}_{>L,l}(T)$ the subset of the remaining $2-$paths containing at most $l-1$ consecutive $2-$nodes, if any of them exist. See the figure below for clarifications. 

\vskip 5pt
\begin{figure}[H]
\begin{center}
\includegraphics[scale=0.55]{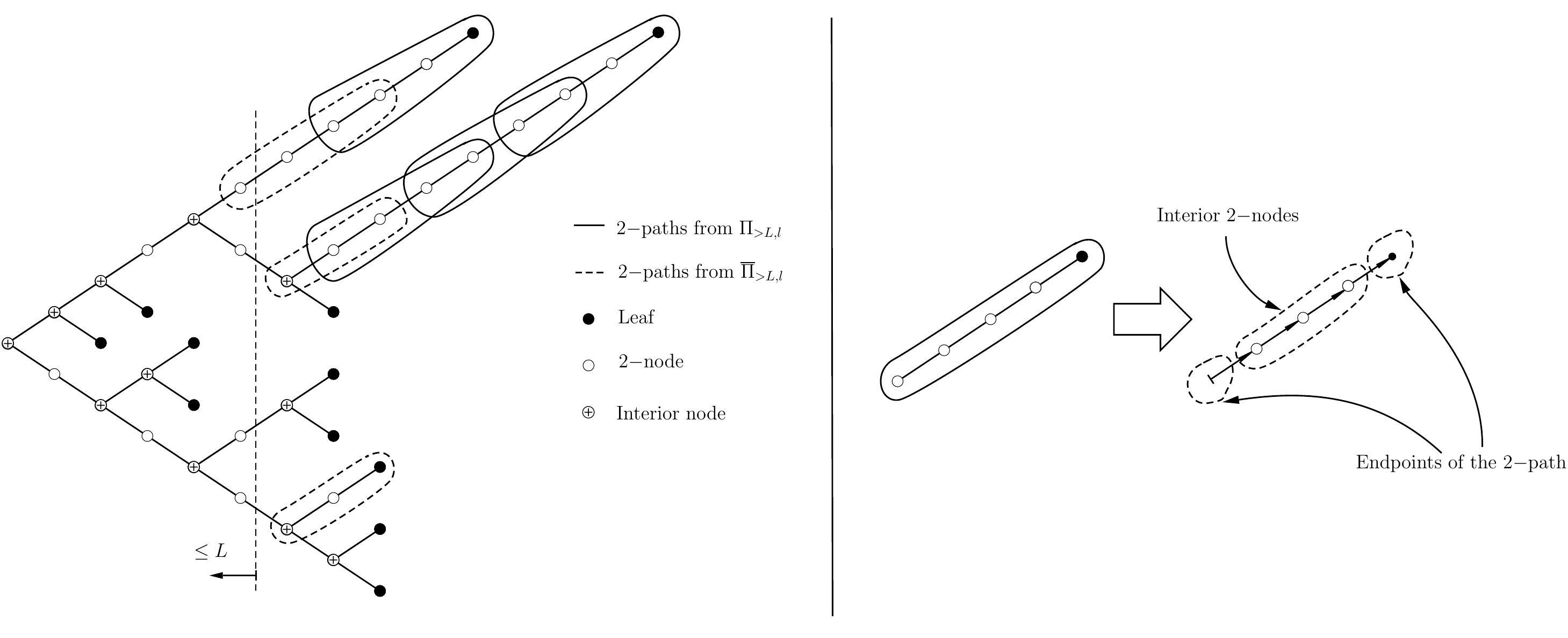}
\caption{An example of a topology for $T$ and a clarification regarding interior $2-$nodes and endpoints of a given $2-$path.}
\end{center}
\end{figure}

\vskip 5pt

Proposition \ref{prop:leaf} implies that the leaves from $T$ have a large enough average $AA-$weight if $d_H$ is large enough, too. That is, there exists a quantity $d_1(K_1,\alpha)$ such that defining $V_1(T) = \overline{T}$ if $d_H > d_1(K_1,\alpha)$: 

\begin{equation}
\label{ineq:1}
\sum_{v \in V_1(T)} |H(AA^u_J(v))| \geq |V_1(T)| K_1
\end{equation}
 
Similarly, theorem \ref{prop:2pathH'} implies that the sum of the $AA-$weights of the $2-$nodes of the $2-$paths from $\Pi_{>L,l}(T)$  is large enough if $d_H,L,l$ are large enough, too.
That is, there exist quantities $l(\alpha)$, $L(K_2,\alpha)$, $d_2(K_2,\alpha)$ such that defining $V_2(T,K_2,\alpha)$ to be the set of all the interior $2-$nodes from the $2-$paths from $\Pi_{>L(K_2,\alpha),l(\alpha)}$ then if $d_H > d_2(K_2,\alpha)$: 

\begin{equation}
\label{ineq:2}
\sum_{v \in V_2(T,K_2,\alpha)} |H(AA^u_J(v))| \geq |V_2(T,K_2,\alpha)| K_2/l(\alpha)
\end{equation}

Then we can combine (\ref{ineq:1}) and (\ref{ineq:2}) to obtain that, even if the nodes not in $V_1(T) \cup V_2(T,K_2,\alpha)$ have an $AA-$weight equal to zero, the global average $AA-$weight can be made large enough because there are few such nodes. 

Recall that $V_1(T)$ and $V_2(T,K_2,\alpha)$ are disjoint. Furthermore, it is not hard to see that $|int(T)| \leq |\overline{T}|$, $|\Pi(T)| <|int(T)|+|\overline{T}|$ and $|T_{\leq L(K_2,\alpha)}| \leq (L(K_2,\alpha)+1)|\overline{T}|$. Therefore, if we define $Z(T,K_2,\alpha)$ to be the set of leaves together with the interior nodes, the nodes from $T_{\leq L(K_2,\alpha)}$ and all the interior $2-$nodes from the $2-$paths from $\overline{\Pi}_{> L(K_2,\alpha),l(\alpha)}(T)$, then  $V(T) = Z(T,K_2,\alpha) \cup V_2(T,K_2,\alpha)$ and: 
$$|V(T)| \leq \left(|\overline{T}|+|int(T)|+|T_{\leq L(K_2,\alpha)}|+(l(\alpha)-1)|\overline{\Pi}_{> L(K_2,\alpha),l(\alpha)}(T)| \right) +|V_2(T,K_2,\alpha)|<$$
$$< \left(|\overline{T}|+|\overline{T}|+(L(K_2,\alpha)+1)|\overline{T}|+(l(\alpha)-1)(|int(T)|+|\overline{T}|) \right) +|V_2(T,K_2,\alpha)|<$$
$$< |V_1(T)|(1+L(K_2,\alpha)+2l(\alpha))+|V_2(T,K_2,\alpha)|$$

Now, given $K$, let $l=l(\alpha)$, $\delta_2=l$, $K_2=\delta_2K$, $L=L(K_2,\alpha)$, $\delta_1 = 1+L+2l$ and $K_1=\delta_1K$. With this notation then we get that $|V(T)| \leq \delta_1 |V_1(T)|+(\delta_2 /l)|V_2(T,K_2,\alpha)|$ so that the quantity $d(K,\alpha) = \max(d_1(K_1,\alpha),d_2(K_2,\alpha))$ satisfies that if $d_H > d(K,\alpha)$ then: 

$$ \sum_{v \in V(T)}|H(AA^u_J(v))| \geq \sum_{v \in V_1(T)}|H(AA^u_J(v))|+\sum_{v \in V_2(T,K_2,\alpha)}|H(AA^u_J(v))| \geq$$

$$\geq K\delta_1|V_1(T)|+(K\delta_2/l) |V_2(T,K_2,\alpha)| \geq K |V(T)|$$

Which is the same as saying that the average $AA-$weight of $T$ is at least $K$. Furthermore, recall that $d_1(K_1,\alpha) = O(K_1\frac{\alpha}{\alpha-n})$, $l(\alpha) = O(\log \frac{\alpha}{\alpha-n})$, $L(K_2,\alpha)=O(K_2\frac{\alpha}{\alpha-n})$ and $d_2(K_2,\alpha) = O(K_2)$. Therefore: 
 
 $$d_1(K_1,\alpha) = O\left(K_1 \frac{\alpha}{\alpha-n}\right) = O\left(K\delta_1 \frac{\alpha}{\alpha-n}\right) = O\left(K(1+L+2l) \frac{\alpha}{\alpha-n}\right) $$
  $$d_2(K_2,\alpha) = O\left(K_2 \right) = O\left(K \delta_2 \right) =  O\left(K l \right)$$
 
Recall that $O(l) = O\left( \log \frac{\alpha}{\alpha-n} \right)$ and: 

$$O(L) =O\left(K_2\frac{\alpha}{\alpha-n} \right) = O \left( \delta_2K \frac{\alpha}{\alpha-n}\right) = O\left(l K \frac{\alpha}{\alpha-n} \right) = O \left( K \frac{\alpha}{\alpha-n} \log \frac{\alpha}{\alpha-n}\right)$$ 

Therefore, $d_1(K_1,\alpha)= O\left(K^2  \left( \frac{\alpha}{\alpha-n}\right)^2 \log \frac{\alpha}{\alpha-n}\right)$ and $d_2(K_2,\alpha)=O\left(K \log \frac{\alpha}{\alpha-n} \right)$. From here that $d(K,\alpha) =O\left(K^2  \left( \frac{\alpha}{\alpha-n}\right)^2 \log \frac{\alpha}{\alpha-n}\right)$, as we wanted to see.

\end{proof}

From this point it is easy to reach the conclusion using just the fact that $H_{u,J}$ is the disjoint union of trees together with the fact that every node in $V^{\geq 2}(H)$ has outdegree in $H$ at most $R'$, being $R'$ the same constant defined in Theorem \ref{prop:deg2}.

\begin{theorem}
\label{thm:upperbound} 
There exists a constant $R'$ such that, for any positive value $\K$, there exists $d(\K, \alpha)$ such that if $d_H > d(\K, \alpha)$ then $ deg^+(H) \leq 1+(R'-1)/\K$ and $d(\K,\alpha) =O\left( K^2\left(\frac{\alpha}{\alpha-n}\right)^2 \log \left( \frac{\alpha}{\alpha-n}\right) \right)$.

\end{theorem}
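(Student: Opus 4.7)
The plan is to reduce the bound on $\deg^+(H)$ to a bound on the ratio $|V^{\geq 2}(H)|/|V(H)|$, and then control that ratio using the packages provided by Proposition \ref{prop:average}. Take $R'$ to be the constant produced by Theorem \ref{prop:deg2} and, as in Subsection \ref{sec:avg}, fix $u\in V(H)$ minimising $D_G(\cdot)$ over $V(H)$ and fix a $2$-edge-covering $J$ with $J(v)=\{e_1(v),e_2(v)\}$ for every $v\in V^{\geq 2}(H)$. Set $d(K,\alpha)$ to be the quantity produced by Proposition \ref{prop:average}.

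The first step is the degree accounting. Splitting $V(H)$ according to whether a node lies in $V^{\geq 2}(H)$ or not, Theorem \ref{prop:deg2} bounds every outdegree by $R'$, while nodes outside $V^{\geq 2}(H)$ contribute outdegree at most $1$ by definition of $V^{\geq 2}(H)$. Therefore
\[
|E(H)|=\sum_{v\in V(H)}\deg^+_H(v)\leq R'\,|V^{\geq 2}(H)|+\bigl(|V(H)|-|V^{\geq 2}(H)|\bigr)= |V(H)|+(R'-1)|V^{\geq 2}(H)|,
\]
so $\deg^+(H)\leq 1+(R'-1)\,|V^{\geq 2}(H)|/|V(H)|$. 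It thus suffices to prove $|V(H)|\geq K\,|V^{\geq 2}(H)|$.

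The second step uses the packaging. By the very definition $AA^u_J(v)=A^u_J(v)\setminus \bigl(\cup_{(v,w)\in E(H_{u,J})}A^u_J(w)\bigr)$ and Lemma \ref{lem:dis-included0}, the sets $AA^u_J(v)$ for $v\in V^{\geq 2}(H)$ are pairwise disjoint, and each one contains $v$ itself. Intersecting with $V(H)$ preserves disjointness, so
\[
\sum_{v\in V^{\geq 2}(H)}|H(AA^u_J(v))|\leq |V(H)|.
\]
Now $H_{u,J}$ is a forest whose vertex set is exactly $V^{\geq 2}(H)$, so its trees partition $V^{\geq 2}(H)$. Applying Proposition \ref{prop:average} to each tree $T$ of $H_{u,J}$ (legitimate because $d_H>d(K,\alpha)$) gives $\sum_{v\in V(T)}|H(AA^u_J(v))|\geq K|V(T)|$, and summing over the trees yields
\[
\sum_{v\in V^{\geq 2}(H)}|H(AA^u_J(v))|\geq K\,|V^{\geq 2}(H)|.
\]

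Combining the two displayed inequalities gives $|V(H)|\geq K\,|V^{\geq 2}(H)|$, hence $|V^{\geq 2}(H)|/|V(H)|\leq 1/K$, and plugging this into the degree accounting yields $\deg^+(H)\leq 1+(R'-1)/K$. The asymptotic estimate on $d(K,\alpha)$ is inherited directly from Proposition \ref{prop:average}. The heavy lifting has already been done in Theorem \ref{prop:deg2} (bounding outdegrees) and in Proposition \ref{prop:average} (lower bound on the average $AA$-weight), so the only real obstacle in this final step is making sure that the packaging truly yields a valid disjoint cover of the nodes counted on the left-hand side; this is ensured by the $AA$ construction and the forest property of $H_{u,J}$ guaranteed by Lemma \ref{lem:dis-included0}, both of which are already in place.
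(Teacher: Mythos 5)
Your proof is correct and follows essentially the same route as the paper's: both reduce everything to Theorem \ref{prop:deg2} (constant bound $R'$ on outdegrees in $V^{\geq 2}(H)$) and Proposition \ref{prop:average} (average $AA$-weight at least $K$ per tree of $H_{u,J}$), using pairwise disjointness of the $AA$ sets coming from Lemma \ref{lem:dis-included0}. The only difference is bookkeeping: the paper first bounds $\deg^+$ on each region $\overline{H}(T)$ induced by a tree of $H_{u,J}$ and then averages over regions, whereas you do a single global edge count $|E(H)|\leq |V(H)|+(R'-1)|V^{\geq 2}(H)|$ and a global bound $K|V^{\geq 2}(H)|\leq |V(H)|$, which is a slightly cleaner presentation of the same argument.
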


\begin{proof} Let $H_1,H_2,..,H_k$ be the trees that form $H_{u,J}$. Given a tree $T$ from $H_{u,J}$ rooted at $v_0$, let $\overline{H}(T)$ be the subgraph induced by $H(A^u_J(v_0))$. By Proposition \ref{prop:average}, there exists $d(K,\alpha)$ such that if $d_H > d(\K,\alpha)$ then $\frac{\sum_{v \in V(T)}|H(AA^u_J(v))|}{|V(T)|} \geq \K$ and $d(\K,\alpha) =O\left( K^2\left(\frac{\alpha}{\alpha-n}\right)^2 \log \left( \frac{\alpha}{\alpha-n}\right) \right)$. Furthermore, by Theorem \ref{prop:deg2}, there exists a positive constant $R'$ such that $deg^+_H(v) \leq R'$ for every $v \in V(H)$. From here:

$$ deg^+(\overline{H}(T)))\leq \frac{\sum_{v \in V(T)}\left(deg_H^+(v)+|H(AA^u_J(v))|-1\right)}{|\overline{H}(T)|} \leq  \frac{\sum_{v \in V(T)}\left(R'-1+|H(AA^u_J(v))|\right)}{\sum_{v \in V(T)}|H(AA^u_J(v))|}=$$

$$=1+ \frac{(R'-1) |V(T)|}{\sum_{v \in V(T)}|H(AA^u_J(v))|} \leq 1 + \frac{R'-1}{\K}$$

 Now, let $Z = \cup_{i=1}^k V(\overline{H}(H_i))$. Since $deg_H^+(v) \leq 1$ for $v \not \in Z$ we conclude that: 

$$deg^+(H) \leq \frac{\sum_{i=1}^k deg^+(\overline{H}(H_i))|V(\overline{H}(H_i))|+\sum_{v \not \in Z}1}{\sum_{i=1}^k|V(\overline{H}(H_i))|+|Z^c|} \leq \frac{\sum_{i=1}^k \left(1+\frac{R'-1}{\K} \right)|V(\overline{H}(H_i))|+|Z^c|}{\sum_{i=1}^k|V(\overline{H}(H_i))|+|Z^c|}=$$

$$= 1+\frac{\sum_{i=1}^k \frac{R'-1}{\K}|V(\overline{H}(H_i))|}{|V(H)|} \leq 1+\frac{R'-1}{\K} $$

  Now the conclusion follows easily.
\end{proof}

\section{Constant $PoA$ for $\alpha > n(1+\epsilon)$}

In order to obtain our main results, recall that the \PoA is upper bounded by the largest diameter of any Nash equilibria graph.

\begin{lemma} (\cite{Demaineetal:07}) \label{lem:demaine}
If the BFS tree in an equilibrium graph $G_s$ rooted at vertex $u$ has depth $d$, then the price of anarchy (of $G_s$) is at most $d+1$.
\end{lemma}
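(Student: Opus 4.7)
The plan is to route each player's communications through the BFS root $u$ via a single-link deviation, producing a per-player bound on the distance cost that depends only on $d$ and $\alpha$, and then to compare against a trivial lower bound on \OPT. The hypothesis on the BFS tree gives that every vertex lies within distance $d$ of $u$, which by the triangle inequality is exactly what is needed to guarantee that adding one edge to $u$ puts any deviator within distance $d+1$ of everybody.

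First I would lower-bound \OPT by the standard observation that any connected graph on $n$ vertices has at least $n-1$ edges and that each of the $n(n-1)$ ordered pairs of vertices contributes at least $1$ to the total sum of distances; hence $\OPT \geq \alpha(n-1) + n(n-1)$.

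Next, for every player $v \in V$ I would consider the augmenting deviation $s'_v = s_v \cup \{u\}$, in which $v$ merely adds the link $(v,u)$ to its strategy. Since $G_{s'}$ contains every edge of $G_s$ (plus possibly the extra edge $(v,u)$), distances can only decrease, and the triangle inequality combined with the BFS-depth hypothesis yields $d_{G_{s'}}(v,w) \leq 1 + d_{G_s}(u,w) \leq d+1$ for every $w$, and so $D_{G_{s'}}(v) \leq (d+1)(n-1)$. Plugging into the Nash inequality $c_v(s) \leq c_v(s'_v)$ produces $D_s(v) \leq \alpha + (d+1)(n-1)$ for each $v$, and summing over players gives $\sum_v D_s(v) \leq n\alpha + n(d+1)(n-1)$.

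The social cost decomposes as $c(s) = \alpha |E(G_s)| + \sum_v D_s(v)$, so what remains is to show $c(s) \leq (d+1)\OPT$. The main obstacle is controlling the edge cost $\alpha |E(G_s)|$: when $\alpha$ is small an equilibrium may contain many edges, but in that regime the optimal network itself must pay a comparable edge cost, so the two contributions cancel; when $\alpha$ is large the Nash condition applied to edge-deletion deviations forces $|E(G_s)|$ to stay close to $n-1$. In either case the edge-cost term can be absorbed against the $(d+1)\alpha |E(G_{\OPT})|$ piece of $(d+1)\OPT$ via $|E(G_{\OPT})| \geq n-1$, and the remaining arithmetic collapses the ratio to at most $d+1$.
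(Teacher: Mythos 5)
The paper does not actually prove this lemma here---it is cited verbatim from Demaine et al.---so I compare your proposal to the standard proof from that reference. Your first two steps are sound: $\OPT \geq \alpha(n-1)+n(n-1)$ and, by the single-link-addition deviation at each $v$, $D_{G_s}(v)\leq \alpha + (d+1)(n-1)$, giving $\sum_v D_{G_s}(v)\leq n\alpha + n(d+1)(n-1)$. But then you arrive at the real obstacle, the term $\alpha|E(G_s)|$, and you do not prove any bound on it; the closing paragraph is a hope ("in either case the edge-cost term can be absorbed") rather than an argument. Nothing you have written controls $|E(G_s)|$: the add-an-edge deviation never deletes anything, so it says nothing about how many links players buy, and your appeal to edge-deletion deviations "forcing $|E(G_s)|$ to stay close to $n-1$ when $\alpha$ is large" is precisely the statement that needs a proof, not a given.

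The device that actually closes the gap in Demaine et al. is a different, more aggressive deviation: each player $v$ \emph{sells every edge it owns that is not in the BFS tree $T$} and (if $v$ is not already adjacent to the root $u$) buys the single edge $(v,u)$. In the deviated graph all of $T$ plus $(v,u)$ survives, so $v$'s distance cost is still $\leq (d+1)(n-1)$, and the Nash inequality now reads $\alpha|s_v\setminus E(T)| + D_{G_s}(v) \leq \alpha + (d+1)(n-1)$. This simultaneously bounds the distance cost \emph{and} the number of non-tree edges: summing over $v$ gives $\alpha\bigl(|E(G_s)|-(n-1)\bigr) + \sum_v D_{G_s}(v) \leq n\alpha + n(d+1)(n-1)$, whence $c(s) = \alpha|E(G_s)| + \sum_v D_{G_s}(v) \leq \alpha(n-1) + n\alpha + n(d+1)(n-1)$, and comparison with $(d+1)\OPT \geq (d+1)\alpha(n-1)+(d+1)n(n-1)$ collapses to the trivial inequality $2n-2\leq (d+1)(n-1)$ (after noting that at least one vertex, namely $u$, does not need to buy $(v,u)$). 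Your proposal is missing exactly this "sell off-tree, reconnect to the root" deviation; without it the edge cost is unbounded and the ratio cannot be closed.
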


Moreover, by the main result of \cite{Lenznertree}, if $\alpha > 4n-13$ all \NE are trees. In the case that $n < \alpha < 4n$ it can be shown that the distance between any pair $v,v' \in V(G)$ where $v' \in T(v)$ is upper bounded by 77. Hence:

\begin{proposition}
\label{prop:diameter} For $\alpha > n$ let $G$ be a \NE and $H \subseteq G$ a non-trivial biconnected component of $G$. Then, $diam(G) \leq diam(H)+154$.
\end{proposition}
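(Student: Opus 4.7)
The plan is to combine the block-cut tree structure of $G$ with two ingredients already present in the excerpt: the restriction $n < \alpha < 4n$ furnished by the main result of \cite{Lenznertree}, and the uniform depth bound on the subgraphs $T(v)$ stated just before the proposition. Since $H$ is a non-trivial biconnected component, $G$ contains a cycle and is therefore not a tree, so by \cite{Lenznertree} we must have $\alpha \leq 4n-13$, placing us in the regime $n < \alpha < 4n$. Recall also that every vertex $x \in V(G)$ either lies in $V(H)$ or belongs to a unique $T(v_x)$ for some $v_x \in V(H)$, and by the cut-vertex property of $v_x$ any path from $x$ to a vertex outside $V(T(v_x))$ must cross $v_x$.

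The main obstacle is the depth bound: for every $v \in V(H)$ and $v' \in V(T(v))$, $d_G(v,v') \leq 77$. I would establish it by a deviation argument in the spirit of Lemma \ref{lem:w}. Assuming some $v \in V(H)$ admits a path inside $T(v)$ rooted at $v$ of length at least $78$, I would pick two carefully chosen nodes along this path---say one close to the middle and one close to the deep endpoint---and consider simultaneous deviations, for instance the middle node buying a link to $v$ together with a second deviation performed at the deep endpoint or at $v$ itself. Each of the two deviations produces a linear-in-path-length saving in the sum of distances to a large subset of $V(G)$; summing the cost differences and using $\alpha < 4n$ makes the total strictly negative, contradicting the \NE property. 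The explicit constant $77$ comes from tuning the length parameter in this argument, exactly as the constant $74$ arises in Lemma \ref{lem:w}.

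Once the depth bound is in place, the final inequality is a routine block-cut argument. For any $x, y \in V(G)$, let $v_x, v_y \in V(H)$ be the unique associated vertices (with $v_x = x$ if $x \in V(H)$, and analogously for $y$). Since any path from $x$ outside $V(T(v_x))$ must pass through $v_x$, and symmetrically for $v_y$, the shortest path from $x$ to $y$ decomposes as
\begin{equation*}
d_G(x,y) \leq d_G(x,v_x) + d_G(v_x,v_y) + d_G(v_y,y).
\end{equation*}
The two outer terms are at most $77$ each by the depth bound. For the middle term, any shortest path in $G$ between two vertices of the same biconnected component can be chosen to stay inside that component (otherwise it would exit and re-enter through the same cut vertex and be strictly shortenable), so $d_G(v_x,v_y) \leq d_H(v_x,v_y) \leq diam(H)$. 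Taking the maximum over $x, y$ yields $diam(G) \leq diam(H) + 154$, as claimed.
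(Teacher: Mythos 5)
Your final block-cut decomposition and the use of $d_G(v_x,v_y)\leq diam(H)$ are both fine and match the paper. The problem is the core step: you have not actually established the depth bound $d_G(v,v')\leq 77$, and the sketch you outline cannot produce it. The deviation ``some node $m$ on the path inside $T(v)$ buys a link to $v$'' saves roughly $d_G(m,v)-1$ units of distance per node, but only to nodes whose shortest path from $m$ passes through $v$ --- at most $n-|T(v)|$ of them. If $|T(v)|$ is close to $n$, this saving is tiny, so the cost difference is not negative. A second deviation performed outside $T(v)$ (someone buying a link towards the deep part of $T(v)$) saves at most proportionally to that node's distance to $v$ in $H$, which could be as small as $2$. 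So ``simultaneous deviations'' give a total saving of order $(\text{path length})\cdot(n-|T(v)|)+diam(H)\cdot|T(v)|$, which need not exceed $2\alpha$. The reason the double-deviation trick works in Lemma~\ref{lem:w} is that the cuts at interior nodes of a $2$-path \emph{inside a biconnected component} partition $V(G)$ into pieces, each of which is charged to a different deviation; a pendant subtree attached at a single cut vertex $v$ of $H$ admits no analogous partition, since every node outside $T(v)$ is reached only through $v$.

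What is actually needed, and what the paper proves, is a separate weight bound $|T(v_i')|\leq 18n/19$. With that in hand, the single deviation ``$v_i$ buys a link to $v_i'$'' gives $\Delta C\leq \alpha-(d_G(v_i,v_i')-1)(n-|T(v_i')|)$, hence $d_G(v_i,v_i')\leq 4n/(n-|T(v_i')|)+1\leq 77$. The weight bound itself requires a case split on $diam(H)$: if $diam(H)>10$, Lemma~\ref{lem:sac} applied to the singleton $Z=\{v_i'\}$ gives $|T(v_i')|\leq \alpha/(diam(H)/2-1)\leq 8n/9$; if $diam(H)\leq 10$, one invokes the girth bound $g(H)\geq 2\alpha/n+2>4$ from \cite{Mihalaktree} to find an edge of $H$ near $v_i'$, and shows that deleting it (if it points away from $T(v_i')$) or swapping it towards $v_i'$ (if it points in) yields a profitable deviation whenever $|T(v_i')|>18n/19$. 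None of this appears in your sketch, and without it the bound $|T(v)|\leq 18n/19$ --- and hence the constant $77$ --- is unsupported.
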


\begin{proof}
If $\alpha >  4n$ then by the main result from \cite{Lenznertree} $G$ is a tree so there does not exist $H$. Therefore, we must assume that $\alpha \leq 4n$. Now let $v_1,v_2$ be two nodes at a maximum distance in $G$, that is, two nodes such that $d_G(v_1,v_2)=diam(G)$. Let $v_1',v_2'$ be the nodes in $V(H)$ such that $v_i \in T(v_i')$. Then $diam(G) = d_G(v_1,v_1')+d_G(v_1',v_2')+d_G(v_2',v_2) \leq d_G(v_1,v_1')+diam(H)+d_G(v_2',v_2)$. Therefore, it is enough if we show that $d_G(v_i,v_i') \leq 77$.

Let $\Delta C_{buy}^i$ be the corresponding cost difference associated to the deviation that consists in buying a link from $v_i$ to $v_i'$. We have that: 

$$\Delta C_{buy}^i \leq \alpha - (d_G(v_i,v_i')-1)(n-|T(v_i')|)$$

Either $d_G(v_i,v_i') \leq 1$ and then we are done or, otherwise, imposing that $G$ is a \NE, we get that 

$$d_G(v_i,v_i') \leq \frac{\alpha}{n-|T(v_i')|}+1 \leq \frac{4n}{n-|T(v_i')|}+1$$ 

Therefore, it is enough if we show that $|T(v_i')| \leq 18n/19$. 

We distinguish two cases: 

(a) Suppose that $diam(H)> 10$. Then we can apply Lemma \ref{lem:sac} to the subset $Z = T(v_i')$. Indeed, $Z$ has diameter $0$ in $H$ and $diam(H) \geq 11 > 3$ by hypothesis. Therefore: 

$$|T(v_i')| \leq \frac{\alpha}{diam(H)/2-1} \leq \frac{8n}{9} \leq \frac{18n}{19} $$

(b) Suppose that $diam(H) \leq 10$. For an edge $e=(w_1,w_2) \in E(H)$ bought by $w_1$ we say that \emph{$e$ points out from} (\emph{or points in to}) $T(v_i')$ iff $d_G(w_1,v_i') \leq d_G(w_2,v_i')$ (or $d_G(w_1,v_i') > d_G(w_2,v_i')$), respectively. By Theorem 1 from \cite{Mihalaktree} we have that the girth of $H$ is at least $2 \frac{\alpha}{n}+2 > 4$. Therefore $diam_H(v_i') \geq 2$ and there exists at least two nodes $w_1,w_2$ at distances $1$ and $2$, respectively, from $v_i'$ with an edge $e$ connecting them.

(i) If $e$ points out from $T(v_i')$ then the owner of the edge ($w_1$) has incentive to delete the link if $|T(v_i')| > \frac{18n}{19}$. This is because when performing such deviation the owner of the edge gets further at most $2diam(H)-1 \leq 19$ distance units from all the nodes outside $T(v_i')$. But there are no more than $n-\frac{18n}{19} = \frac{n}{19}$ nodes outside $T(v_i')$. Therefore if $\Delta C_{delete}$ is the cost difference associated to such deviation, then:
 
$$\Delta C_{delete} \leq -\alpha +19 \frac{n}{19} = -\alpha + n < 0$$

(ii) Otherwise $e$ points in to $T(v_i')$. In such case, supposing, again, that $|T(v_i')| > \frac{18n}{19}$ then we reach a contradiction. More precisely, we see in this case that the owner of the edge ($w_1$) has incentive to swap the link. This is because if $w_1$ swaps the link $(w_1,w_2)$ for the link $(w_1,v_i')$, then he gets one unit further away  from at most every node not in $T(v_i')$ but gets exactly one unit closer to all the nodes inside $T(v_i')$. If $\Delta C_{swap}$ is the cost difference associated to the deletion of $e$: 

$$\Delta C_{swap} \leq (n-|T(v_i')|)-|T(v_i')| = n -2|T(v_i')| < n - 2\frac{18n}{19} <0$$

In both cases we reach a contradiction. 
 
\end{proof}

Taking into the account these results together with the ones from Sections 3 and 4 we get that: 

\begin{theorem} \label{thm:big}
For $\alpha > n$, there exists $d(\alpha)$ such that every \NE $G$ of diameter greater than $d(\alpha)$ is a tree and  $d(\alpha)= O\left(\left( \frac{\alpha}{\alpha-n}\right)^2 \log \left( \frac{\alpha}{\alpha-n}\right) \right)$.
\end{theorem}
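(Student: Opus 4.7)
The plan is to contradict the existence of a large-diameter non-trivial biconnected component $H$ in any non-tree \NE by pitting the constant lower bound of Proposition \ref{prop:deg} against the $\K$-dependent upper bound of Theorem \ref{thm:upperbound}, and then to transfer the resulting bound on $d_H$ to a bound on $diam(G)$ via Proposition \ref{prop:diameter}. All the heavy lifting has been done in Sections 3 and 4; the present step is essentially a matching-of-constants argument.

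Concretely, suppose $G$ is a non-tree \NE with $\alpha > n$, and let $H$ be a non-trivial biconnected component of $G$ (which exists because $G$ is not a tree). Let $R'$ be the absolute constant supplied by Theorem \ref{thm:upperbound}. I would fix an absolute integer $\K_0 := \lceil 221(R'-1) \rceil + 1$, chosen precisely so that $1 + (R'-1)/\K_0 < 1 + 1/221$. Applying Theorem \ref{thm:upperbound} with $\K = \K_0$ produces a threshold
$$d(\K_0,\alpha) \;=\; O\!\left(\K_0^2 \left(\tfrac{\alpha}{\alpha-n}\right)^2 \log\!\left(\tfrac{\alpha}{\alpha-n}\right)\right) \;=\; O\!\left(\left(\tfrac{\alpha}{\alpha-n}\right)^2 \log\!\left(\tfrac{\alpha}{\alpha-n}\right)\right),$$
such that $d_H > d(\K_0,\alpha)$ would force $deg^+(H) \leq 1 + (R'-1)/\K_0 < 1 + 1/221$. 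On the other hand, Proposition \ref{prop:deg} tells us that as soon as $d_H \geq 37$ one has $deg^+(H) \geq 1 + 1/221$. The two inequalities are incompatible, so $d_H \leq \max(37,\, d(\K_0,\alpha))$ for every non-trivial biconnected component of a non-tree \NE.

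To conclude, I would invoke Proposition \ref{prop:diameter}, which guarantees $diam(G) \leq d_H + 154$ whenever such an $H$ exists. Setting $d(\alpha) := \max(37,\, d(\K_0,\alpha)) + 154$, any \NE $G$ with $diam(G) > d(\alpha)$ cannot contain a non-trivial biconnected component, hence must be a tree. Since $\K_0$ and the additive constants $37$, $154$ are absolute, the asymptotic form $d(\alpha) = O\!\left((\alpha/(\alpha-n))^2 \log(\alpha/(\alpha-n))\right)$ is inherited directly from the estimate in Theorem \ref{thm:upperbound}. The only thing to double-check in the write-up is that the additive constants do not perturb the big-$O$ bound (they do not, since $\alpha/(\alpha-n) \geq 1$), and there is no real obstacle beyond keeping the constants straight.
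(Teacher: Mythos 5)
Your proposal is correct and follows essentially the same route as the paper's proof: fix a constant $\K$ of order $R'$ so that the upper bound of Theorem \ref{thm:upperbound} drops below the $1+1/221$ lower bound of Proposition \ref{prop:deg}, derive a bound on $d_H$, and then transfer it to $diam(G)$ via Proposition \ref{prop:diameter} (the paper takes $\K = 222(R'-1)$ and $d(\alpha) = 154 + \max(d(\K,\alpha),37)$, which is the same up to the exact choice of constant). No gaps; the only cosmetic difference is your $\K_0 = \lceil 221(R'-1)\rceil + 1$ versus the paper's $222(R'-1)$, both of which achieve the needed strict inequality.
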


\begin{proof}
By Theorem \ref{thm:upperbound} there exists $R'$ such that, for any positive value $K$, there exists $d(K,\alpha)$ such that, for any non-trivial biconnected component $H$ of any \NE graph $G$ having diameter $d_H$, if $d_H > d(K,\alpha)$ then $deg^+(H) \leq 1+ (R'-1)/K$ and $d(K,\alpha) = O\left( K^2\left( \frac{\alpha}{\alpha-n}\right)^2 \log \left( \frac{\alpha}{\alpha-n}\right) \right)$. Let $\K = 222(R'-1)$, $d(\alpha) =154+ \max(d(K,\alpha),37)$ and suppose that $G$ has diameter greater than $d(\alpha)$. 

First, notice that $d(\alpha) = O\left(\left( \frac{\alpha}{\alpha-n}\right)^2 \log \left( \frac{\alpha}{\alpha-n}\right) \right)$ because $d(K,\alpha) = O\left( K^2 \left( \frac{\alpha}{\alpha-n}\right)^2 \log \left( \frac{\alpha}{\alpha-n}\right) \right)$ and $K=O(1)$. Now, for the sake of the contradiction suppose that $G$ is a \NE not a tree. Then it has at least a non-trivial biconnected component $H \subseteq G$. By Proposition \ref{prop:diameter},  $diam(H) \geq diam(G)- 154 \geq d(K,\alpha)$ so that, by construction $deg^+(H) \leq 1+\frac{1}{222}$. On the other hand, using Proposition \ref{prop:diameter} again, we also have that $diam(H) \geq diam(G)- 154  \geq 37$ so that, by Proposition \ref{prop:deg}, $deg^+(H) \geq 1+1/221$. Therefore we have reached a contradiction and $G$ is a tree. 
\end{proof}



Recall that if $\alpha >n(1+\epsilon)$ then $\left( \frac{\alpha}{\alpha-n}\right) = O(1)$ and, consequently, $d(\alpha) = O(1)$, too. In this way, the weaker tree conjecture is proved for the range $\alpha > n(1+\epsilon)$ and, more important, we have also enlarged the range of the parameter $\alpha$ for which the \PoA is constant:

\begin{theorem}\label{corol:last2} For any $\epsilon >0$ and $\alpha > n(1+\epsilon)$, there exists a constant $D_{\epsilon}$ such that for every \NE $G$, if $diam(G)>D_{\epsilon}$, then $G$ is a tree. 
\end{theorem}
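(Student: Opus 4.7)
The plan is to derive Theorem \ref{corol:last2} as a direct corollary of Theorem \ref{thm:big}, by exploiting the fact that the hypothesis $\alpha > n(1+\epsilon)$ forces the quantity $\alpha/(\alpha-n)$ to be bounded by a constant depending only on $\epsilon$. Specifically, if $\alpha > n(1+\epsilon)$, then $\alpha - n > \epsilon n$, and $\alpha/(\alpha-n) < 1 + 1/\epsilon$. Thus $\alpha/(\alpha-n) = O_\epsilon(1)$, independent of $n$.

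Now, Theorem \ref{thm:big} provides a threshold $d(\alpha)$ such that every \NE graph $G$ with $diam(G) > d(\alpha)$ is a tree, and it gives the explicit asymptotic bound
\[
d(\alpha) = O\!\left( \left(\frac{\alpha}{\alpha-n}\right)^2 \log \left(\frac{\alpha}{\alpha-n}\right) \right).
\]
Substituting the bound $\alpha/(\alpha-n) < 1+1/\epsilon$ into this expression yields $d(\alpha) \leq D_\epsilon$ for some constant $D_\epsilon$ that depends only on $\epsilon$ (and on the hidden constant in the big-$O$ of Theorem \ref{thm:big}, which is absolute).

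Hence I would simply set $D_\epsilon$ to be this constant bound on $d(\alpha)$ under the hypothesis $\alpha > n(1+\epsilon)$, and invoke Theorem \ref{thm:big} to conclude that any \NE graph $G$ with $diam(G) > D_\epsilon$ must be a tree. There is no real obstacle here, since all the heavy lifting has been carried out in Theorem \ref{thm:big}; the only step to verify carefully is the quantitative estimate $\alpha/(\alpha-n) \leq 1 + 1/\epsilon$ when $\alpha > n(1+\epsilon)$, which is a one-line algebraic manipulation. The combination with Lemma \ref{lem:demaine} then yields immediately the advertised consequence that $\PoA = O_\epsilon(1)$ in this range, matching the entry in Table 1.
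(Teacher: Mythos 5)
Your proposal is correct and follows essentially the same route the paper takes: invoking Theorem \ref{thm:big} and observing that $\alpha > n(1+\epsilon)$ forces $\alpha/(\alpha-n) < 1 + 1/\epsilon$, so the diameter threshold $d(\alpha)$ becomes a constant $D_\epsilon$ depending only on $\epsilon$. The paper states this in one line right before the theorem (``Recall that if $\alpha > n(1+\epsilon)$ then $\alpha/(\alpha-n) = O(1)$ and, consequently, $d(\alpha) = O(1)$''), and you have simply made the algebraic step explicit.
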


The conclusion follows from the fact that the \PoA for trees is at most $5$ (theorem 3 from \cite{Fe:03}) together with Lemma \ref{lem:demaine}. 

\begin{theorem}\label{corol:last}
Let $\epsilon$ be any positive constant. Then the $\PoA$ is constant for $\alpha > n(1+\epsilon)$.
\end{theorem}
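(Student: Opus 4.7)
The plan is to derive the theorem as an almost immediate corollary of Theorem \ref{corol:last2}, combined with two already-available facts: Lemma \ref{lem:demaine}, which bounds the \PoA of a \NE graph by its diameter plus one, and Theorem 3 of \cite{Fe:03}, which says that the \PoA restricted to tree equilibria is at most $5$. The hint appearing just before the statement confirms this is the intended route, so no further heavy machinery is required.

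First I would fix an arbitrary $\epsilon > 0$ and pick an arbitrary \NE graph $G$ for some $\alpha > n(1+\epsilon)$. By Theorem \ref{corol:last2} there exists a constant $D_\epsilon$ (depending only on $\epsilon$) such that at least one of the following holds: either $G$ is a tree, or $\mathrm{diam}(G) \leq D_\epsilon$. This dichotomy splits the analysis into two cases. In the tree case, the social cost of $G$ is at most $5$ times the optimum, by Theorem 3 of \cite{Fe:03}. In the non-tree case we have $\mathrm{diam}(G) \leq D_\epsilon$, and the BFS tree rooted at any vertex of $G$ has depth at most $\mathrm{diam}(G) \leq D_\epsilon$, so Lemma \ref{lem:demaine} yields that the social cost of $G$ is at most $(D_\epsilon + 1)$ times the optimum.

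Taking the maximum over the two cases, the \PoA for $\alpha > n(1+\epsilon)$ is at most $\max(5,\, D_\epsilon+1)$, which is a constant depending only on $\epsilon$, and this is exactly what the theorem asserts. The only substep that warrants a sentence of verification is that $D_\epsilon$ is genuinely independent of $n$: this is precisely where the hypothesis $\alpha > n(1+\epsilon)$ enters, since it forces $\alpha/(\alpha-n) \leq (1+\epsilon)/\epsilon$, so the function $d(\alpha) = O\bigl((\alpha/(\alpha-n))^2 \log(\alpha/(\alpha-n))\bigr)$ from Theorem \ref{thm:big} specialises to a quantity bounded by a constant depending only on $\epsilon$. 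There is no substantive obstacle in this final step; all the difficulty of the paper has been absorbed into Theorem \ref{corol:last2}, whose proof combined the lower bound $\mathrm{deg}^+(H) \geq 1 + 1/221$ of Proposition \ref{prop:deg} with the tunable upper bound $1 + (R'-1)/\K$ of Theorem \ref{thm:upperbound}.
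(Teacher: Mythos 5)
Your proof is correct and follows exactly the route the paper indicates in the sentence immediately preceding the theorem: apply Theorem \ref{corol:last2} to split into the tree case (handled by the bound of $5$ from Theorem 3 of \cite{Fe:03}) and the bounded-diameter case (handled by Lemma \ref{lem:demaine}), then take the maximum. Your extra remark verifying that $D_\epsilon$ depends only on $\epsilon$ and not on $n$ is a sensible spelling-out of a point the paper leaves implicit, but the approach is identical.
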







\section{Conclusions}


Once we have obtained the main results, let us analyse the techniques that we have used. Clearly, when trying to upper bound the average degree of $H$, the upper bound shown in section 4 represents an improvement over the approaches from the literature considered so far. For the appropriate range of $\alpha$, we can make the bound $deg^+(H) < 1+ \frac{R'-1}{\K}$ the closer we want to $1$ from the right provided that $\K$ and the diameter of $H$ are large enough. With no doubt, the diameter and the distance between $\alpha$ and $n$ are the main key elements in our results: a large enough diameter is required, otherwise, the deviations that consist in buying a link between two nodes far away from each other are not as powerful as we need. Furthermore, a large enough value for $\alpha-n$ is required, otherwise, the inequalities that we obtain become trivial and do not provide any further insight. 

Mamageishvili et al. in \cite{Mihalaktree} present a non-tree equilibrium for $\alpha - n= -3$. This leads to think whether $\alpha \geq n$ is the valid range for the reformulated tree conjecture or not. It would be interesting to determine if there exists a way to improve our technique to cover the cases where  $G$ has small diameter and/or there is a small gap between $\alpha$ and $n$. Or, maybe, this reformulated tree conjecture is false and thus the minimum 
quantity $f(n)$ for which every \NE for $\alpha \geq f(n)$ is a tree is strictly greater than $n$.


On the other hand, it is not hard to see that some of the properties that we have obtained can be translated to other ranges. Mainly, in the result of subsection $4.1$, we can easily see that for any $\alpha = \Omega(n)$, the directed degree in $H$ is upper bounded by a constant. Moreover, we could try to approach the constant price of anarchy conjecture for the case $\alpha = n/C$ with $C>1$ some positive constant, using analogous results to the ones given in this article. For instance, it is not hard to see that there exists a constant $R''$ such that for any positive constant $\K$ there exists a quantity $d'(\K,\alpha) = O\left(g \left(  \frac{\alpha}{\lceil C \rceil \alpha -n}\right) \right)$, with $g$ some polylogarithmic function, such that every non-tree \NE graph $G$ having a non-trivial biconnected component $H$ of diameter greater than $d'(\K,\alpha)$ verifies that $deg^+(H) \leq \lceil C \rceil + \frac{R''-1}{\K}$. However, following analogous steps as the ones in this article the next natural idea should be to give an non-trivial improved lower bound for the term $deg^+(H)$, a problem that can be translated to the problem of discarding the possibility of having a large subset of nodes from $H$ having directed degree in $H$ at most $\lceil C \rceil$. However, we have not been able to find any insightful result in this direction.

That is all. We think that these are thoughtful questions that should be taken into consideration in order to make progress in the field.


\end{document}